\pgfplotsset{compat=1.15}
\newcommand{\LineIf}[2]{\textbf{if}\ {#1}\ \textbf{then} {#2} }
\newenvironment{claimproof}[1]{\par\noindent\emph{Proof.}\space#1}{\hfill $\lrcorner$\\}
\newcommand{\citet}[1]{\cite{#1}}
\newcounter{ctr}
\edef\csname c\Alph{ctr}\endcsname{\noexpand\mathcal{\Alph{ctr}}}
\edef\csname b\alph{ctr}\endcsname{\noexpand\mathbf{\alph{ctr}}}
\edef\csname b\Alph{ctr}\endcsname{\noexpand\mathbf{\Alph{ctr}}}
\newcommandx{\unsure}[2][1=]{\todo[linecolor=green,backgroundcolor=green!25,bordercolor=green,#1]{\normalsize #2}}
\newcommandx{\improvement}[2][1=]{\todo[inline,linecolor=blue,backgroundcolor=blue!05,bordercolor=blue,#1]{\normalsize #2}}
\newcommandx{\info}[2][1=]{\todo[linecolor=yellow,backgroundcolor=yellow!25,bordercolor=yellow,#1]{#2}}
\newcommandx{\floatmodel}[2][1=]{\todo[inline,linecolor=red,backgroundcolor=yellow!25,bordercolor=yellow,#1]{#2}}
\newcommandx{\thiswillnotshow}[2][1=]{\todo[disable,#1]{#2}}
\newcommandx{\karol}[2][1=]{\todo[inline,linecolor=blue,backgroundcolor=blue!25,bordercolor=blue,caption={\normalsize \textbf{Karol}},#1]{\normalsize #2}}
\newcommandx{\jesper}[2][1=]{\todo[linecolor=red,backgroundcolor=red!25,bordercolor=red,caption={\normalsize \textbf{Jesper}},#1]{\normalsize #2}}
\newcommandx{\sandor}[2][1=]{\todo[inline,linecolor=gray,backgroundcolor=red!25,bordercolor=red,caption={\normalsize \textbf{S\'andor}},#1]{\normalsize #2}}
\theoremstyle{plain}
\newtheorem{theorem}{Theorem}[section]
\newtheorem{definition}[theorem]{Definition}
\newtheorem{lemma}[theorem]{Lemma}
\newtheorem{corollary}[theorem]{Corollary}
\newtheorem{claim}[theorem]{Claim}
\newtheorem{observation}[theorem]{Observation}
\newcommand{\thistheoremname}{}
\newtheorem*{genericthm}{\thistheoremname}
\newenvironment{namedthm}[1]
  {\renewcommand{\thistheoremname}{#1}%
   \begin{genericthm}}
  {\end{genericthm}}
\theoremstyle{remark}
\newtheorem{remark}[theorem]{Remark}
\newcommand{\floor}[1]{\left\lfloor #1 \right\rfloor}
\newcommand{\eps}{\varepsilon}
\newcommand{\Oh}{\mathcal{O}}
\newcommand{\Otilde}{\widetilde{\Oh}}
\newcommand{\Ot}{\Otilde}
\newcommand{\Reals}{\mathbb{R}}
\newcommand{\poly}{\mathrm{poly}}
\newcommand{\polylog}{\,\textup{polylog}}
\newcommand{\Ex}[1]{\mathbb{E}\left[ #1 \right]}
\newcommand{\wt}{\mathrm{wt}}
\newcommand{\dist}{\mathrm{dist}}
\newcommand{\pro}{\mathrm{pro}}
\newcommand{\ex}{\mathrm{ex}}
\newcommand{\grid}{\mathrm{grid}}
\newcommand{\cost}{\mathrm{cost}}
\newcommand{\ori}{\mathrm{cor}}
\newcommand{\ins}{\mathrm{in}}
\newcommand{\out}{\mathrm{out}}
\newcommand{\empt}{\mathrm{empty}}
\newcommand{\dum}{\mathrm{dummy}}
\newcommand{\cdc}{\mathrm{cdc}}
\newcommand{\mst}{\mathrm{MST}}
\newcommand{\smt}{\mathrm{ST}}
\newcommand{\Tspan}{T_\mathrm{spanner}}
\newcommand{\PT}{\textsf{PF}_F}
\renewcommand{\leq}{\leqslant}
\renewcommand{\geq}{\geqslant}
\renewcommand{\le}{\leqslant}
\renewcommand{\ge}{\geqslant}
\newcounter{openquestion}
\newenvironment{insight}
{\mdfsetup{%
    nobreak=true,
	middlelinecolor=gray,
	middlelinewidth=1pt,
	backgroundcolor=gray!10,
    innertopmargin=7pt,
	roundcorner=5pt}
\begin{mdframed}}
{\end{mdframed}}
\newcommand{\defproblem}[3]{
	\vspace{2mm}
	\vspace{1mm}
	\noindent\fbox{
		\begin{minipage}{0.95\textwidth}
			#1 \\
			{\textbf{Input:}} #2  \\
			{\textbf{Task:}} #3
		\end{minipage}
	}
	\vspace{2mm}
}
\title{A Gap-ETH-Tight Approximation Scheme for Euclidean TSP}
\date{}
\author{
    S\'andor Kisfaludi-Bak\footnote{Aalto University, Espoo, Finland, \texttt{sandor.kisfaludi-bak@aalto.fi}}
    \and
    Jesper Nederlof\footnote{Utrecht University, The
    Netherlands, \texttt{j.nederlof@uu.nl}. Supported by
    the project CRACKNP that has received funding from the European
    Research Council (ERC) under the European Union’s Horizon 2020 research and
    innovation programme (grant agreement No 853234).}
    \and
    Karol W\k{e}grzycki\footnote{Saarland University and Max Planck Institute for Informatics,
        Saarbr\"ucken, Germany, \texttt{wegrzycki@cs.uni-saarland.de}. 
    This work is part of the project TIPEA that has
    received funding from the European Research Council (ERC) under the European Unions Horizon
    2020 research and innovation programme (grant agreement No. 850979).
    Author was also supported Foundation for Polish Science (FNP), by the grants
    2016/21/N/ST6/01468 and 2018/28/T/ST6/00084 of the Polish National Science
    Center and project TOTAL that has received funding from the European
    Research Council (ERC) under the European Union’s Horizon 2020 research and
    innovation programme (grant agreement No 677651).}
}
\begin{document}

\hypersetup{pageanchor=false}
\begin{titlepage}
\maketitle
\thispagestyle{empty}

\begin{abstract}
    We revisit the classic task of finding the shortest tour of $n$ points in $d$-dimensional Euclidean space, for any fixed constant $d \geq 2$. We determine the optimal dependence on $\eps$ in the running time of an algorithm that computes a $(1+\eps)$-approximate tour, under a plausible assumption.
    Specifically, we give an algorithm that runs in $2^{\Oh(1/\eps^{d-1})} n\log n$ time. This improves the previously smallest dependence on $\eps$ in the running time $(1/\eps)^{\Oh(1/\eps^{d-1})}n \log n$ of the algorithm by Rao and Smith~(STOC 1998). We also
    show that a $2^{o(1/\eps^{d-1})}\poly(n)$ algorithm would violate the
    Gap-Exponential Time Hypothesis (Gap-ETH).

    \medskip

	Our new algorithm builds upon the celebrated quadtree-based methods
    initially proposed by Arora (J. ACM 1998), but it adds a new idea
    that we call \emph{sparsity-sensitive patching}. On a high level this lets
    the granularity with which we simplify the tour depend on how sparse it is
    locally. We demonstrate that our technique
    extends to other problems, by showing that for Steiner Tree
    and Rectilinear Steiner Tree it yields the same running time. We complement our
    results with a matching Gap-ETH lower bound for Rectilinear Steiner Tree.
    
\end{abstract}

\begin{picture}(0,0)
\put(-70,-270)
{\hbox{\includegraphics[width=40px]{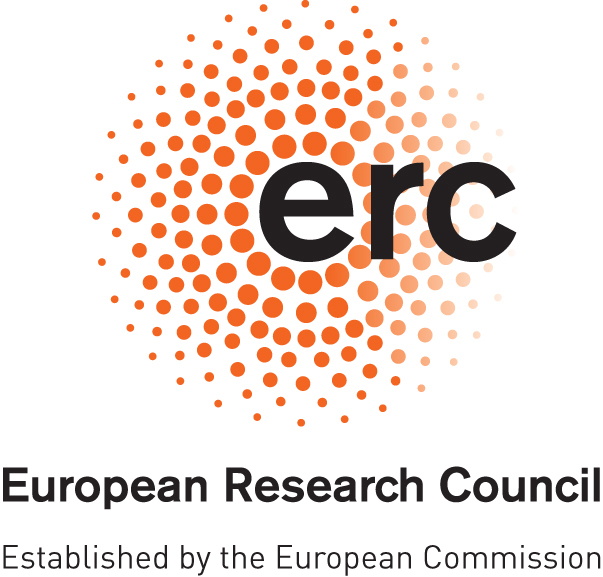}}}
\put(-80,-330)
{\hbox{\includegraphics[width=60px]{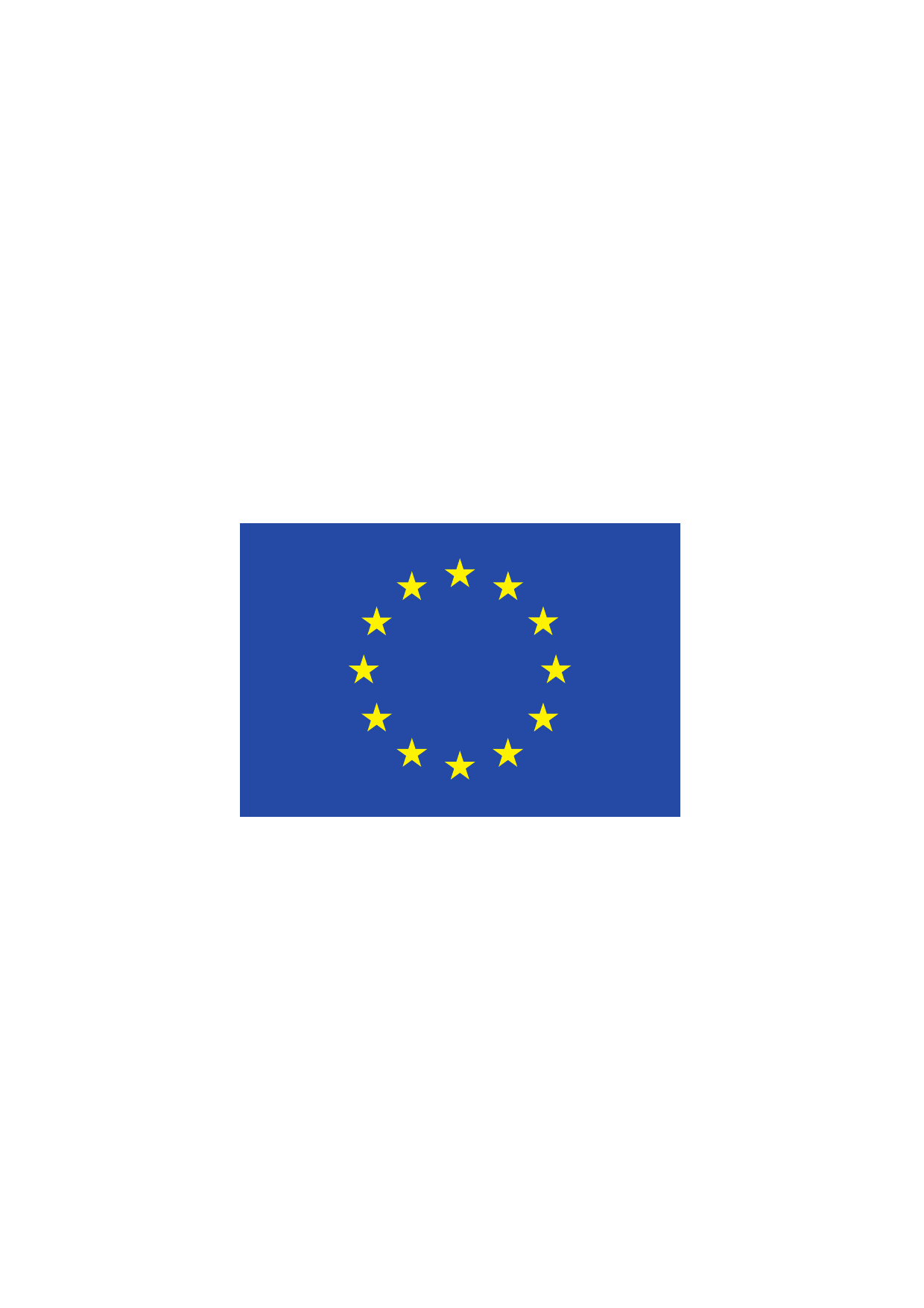}}}
\end{picture}

\end{titlepage}

\hypersetup{pageanchor=true}

\section{Introduction}

The Euclidean Traveling Salesman Problem (\textsc{Euclidean TSP}) is to find a
round trip of minimum length for a given set of $n$ points in $d$-dimensional
Euclidean space.  Its simple statement and clear applicability make the problem
very attractive, and work on it has been immensely influential and
inspirational.  In particular, the G\"odel-prize-winning approximation schemes
due to Arora~\cite{Arora98} and Mitchell~\cite{Mitchell99} are among the most
prominent results in approximation algorithms. Because of their elegance, they
serve as evergreens in graduate algorithms courses, textbooks on approximation
algorithms or optimization~\cite{vazirani-book,williamson2011design,
kortevygen12}, and more specialized textbooks~\cite{Har-Peled11,geomspannet}.

After the publication of these results, an entire research program with many
strong results consisting of improvements, generalizations and different
applications of the methods from~\cite{Arora98,Mitchell99} was conducted by many
authors (see e.g., the survey~\cite{Arora03}). The technique is now known to be
useful for a whole host of geometric optimization problems (see the related work
paragraph).

The most natural goal within this research direction is to improve the running
times to be \emph{optimal}, i.e. to improve and/or provide evidence that further
(significant) improvements do not exist.  In the last 25 years, only two such
results were obtained in $\mathbb{R}^d$:
\begin{enumerate}
	\item Rao and Smith~\cite{RaoS98} used geometric spanners to improve the
        $n(\log n)^{\Oh(1/\eps)^{d-1}}$ time approximation scheme of
        Arora~\cite{Arora98} to run in only $(1/\eps)^{\Oh(1/\eps)^{d-1}}\cdot n\log n$ time.\footnote{For dimension $d$, \cite{RaoS98} claimed $(1/\eps)^{\Oh(1/\eps)^{d-1}}n +
            \Tspan(n,\eps)$ time, where $\Tspan(n,\eps)$ is the spanner computation time.
            See~\cite[Chapter 19]{geomspannet} for a more detailed description of an
        $(1/\eps)^{\Oh(1/\eps^{d^2})}n\log n$ time algorithm.}
    \item Bartal and Gottlieb~\cite{BartalG13} gave a $2^{(1/\eps)^{\Oh(d)}}
            \cdot n$ time algorithm in the real-RAM model with atomic floor or mod operators. They
            give a truly linear algorithm in terms of $n$, however the
            dependence on $\eps$ is worse than the algorithm of Rao and
            Smith~\cite{RaoS98}.
\end{enumerate}
While these results determine the optimal\footnote{Depending on the model of
computation $\Omega(n \log n)$ time is required~\cite{Das97}.} dependence on
$n$, they do not yet settle the much faster growing exponential dependence on
$\eps$.  This is in contrast with the status of our knowledge of the complexity
of many other optimization problems: In the last decade a powerful toolbox for
determining (conditionally) optimal exponential running times has been
developed.

In the context of \textsc{TSP} in $d$-dimensional Euclidean space (henceforth
denoted by $\Reals^d$), this modern research direction culminated in an exact
algorithm with a running time of $2^{\Oh(n^{1-1/d})}$, which was matched by a
lower
bound of $2^{\Omega(n^{1-1/d})}$~\cite{BergBKK18} under the Exponential Time
Hypothesis (ETH).

In the context of approximation schemes for TSP, Klein~\cite{Klein08} improved
algorithms for the unweighted planar case from \cite{tsp-planar-1,tsp-planar-2}
with a $2^{\Oh(1/\eps)}n$ time approximation scheme. Subsequently,
Marx~\cite{marx07} showed that the dependence on $\eps$ in Klein's algorithm is
conditionally near-optimal. The tight exponential dependency of $\eps$ in
approximation schemes was also obtained in a plethora of other
problems, such as \textsc{Maximum Independent Set} in planar graphs~\cite{baker}
and a scheduling problem~\cite{jansen-scheduling} (see, e.g.,
    \cite{FeldmannSLM20}
for a survey).

Given the modern trend of fine-grained algorithm research and the prominence of
the discussed approximation schemes for \textsc{Euclidean TSP}, our goal
suggests itself:

\begin{insight}
    \label{main-goal}
    \textbf{Goal:} Conclude the research on approximation schemes for Euclidean
    TSP with a conditionally optimal algorithm.
\end{insight}

\subsection{Our contribution}

In this work, we achieve this goal for \textsc{Euclidean TSP} and
\textsc{(Rectilinear) Steiner Tree} in $\Reals^d$ and give algorithms with a
Gap-ETH-tight dependence on $\eps$. Our main result reads as follows.

\begin{theorem}[Main result]\label{thm:tsp}
    For any integer $d\geq 2$, there is a randomized $(1+\eps)$-approximation
    scheme for
    \textsc{Euclidean TSP} in $\Reals^d$ that runs in $2^{\Oh(1/\eps^{d-1})}n +
    \poly(1/\eps) n \log(n)$ time.
    Moreover, this cannot be improved to a $2^{o(1/\eps^{d-1})}\cdot \poly(n)$
    time algorithm, unless Gap-ETH fails.
\end{theorem}

Thus, we improve the previously best $(1/\eps)^{\Oh(1/\eps^{d-1})}$ dependence
of $\eps$ in the running time of~\cite{RaoS98} to $2^{\Oh(1/\eps^{d-1})}$. Note
that here and in the sequel, our big-$\Oh$ notation hides factors that depend
only on $d$ since it is assumed to be constant. Our running times are double
exponential in the dimension $d$, which is expected because of Trevisan's lower
bound~\cite{Trevisan00}.

Theorem~\ref{thm:tsp} improves the running time dependence on $\eps$ all
the way to conditional optimality: we show that an EPTAS with an asymptotically
better dependence on $\eps$ in the exponent is not possible under Gap-ETH, for
constant dimension~$d$. Note that our algorithms can be derandomized at the cost
of an extra $n^d$ factor in the running time, which maintains conditional
optimality in terms of $\eps$.

Our lower bound for \textsc{Euclidean TSP} is derived from a construction
for \textsc{Hamiltonian Cycle} in grid graphs~\cite{frameworkpaperjournal}, in
combination with
Gap-ETH~\cite{Dinur16,ManurangsiR17} (see Section~\ref{sec:lower-bounds}).

Our new algorithmic techniques enable us to improve approximation schemes for
another fundamental geometric optimization problem: the \textsc{Rectilinear
Steiner Tree} and \textsc{Euclidean Steiner Tree} problems.

\begin{theorem}\label{thm:st}
    For any integer $d\geq 2$, there is a randomized $(1+\eps)$-approximation
    scheme for \textsc{Euclidean Steiner Tree} and \textsc{Rectilinear Steiner
    Tree} in $\Reals^d$ that runs in $2^{\Oh(1/\eps^{d-1})}n + \poly(1/\eps) n
    \log(n)$ time. Moreover, the algorithm for \textsc{Rectilinear Steiner
    Tree} cannot be improved to a $2^{o(1/\eps^{d-1})}\cdot \poly(n)$ time
    algorithm, unless Gap-ETH fails.
\end{theorem}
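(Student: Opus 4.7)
The plan is to mirror the structure of Theorem~\ref{thm:uppbound} but adapt each ingredient to the Steiner setting. For the upper bound I would build the same randomly shifted hierarchical quadtree on a bounding box containing all terminals, apply the sparsity-sensitive patching idea to obtain a portal-respecting near-optimal Steiner tree, and then run a dynamic program over quadtree cells. The key geometric lemma to re-establish is that a near-optimal Steiner tree can be ``patched'' so that it only crosses each quadtree cell boundary at a bounded set of portals, where the number of portals in each cell depends on how much optimal mass lies nearby (this is exactly the sparsity-sensitive ingredient). The charging argument carries over from TSP almost verbatim because what matters is the total edge length inside each cell, not whether the structure is a tour or a tree; one only has to argue that splitting a subtree at a portal and re-attaching it costs no more than the detour cost used for TSP.

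The dynamic program then enumerates, for each quadtree cell, the connectivity pattern of the portals induced by the intersection of the optimal Steiner tree with that cell; a partial solution is an optimal-weight forest inside the cell that realizes a prescribed partition of the active portals. In $\Reals^2$ the patterns correspond to non-crossing partitions and a Catalan-type bound gives $2^{\Oh(1/\eps)}$ states per cell, leading to $2^{\Oh(1/\eps)} n \polylog n$ time. For general $d \geq 2$ the number of partitions of $\Oh(1/\eps^{d-1})$ portals is too large, so I would invoke the rank-based method / representative sets of~\cite{rank-based,fast-hamiltonicity-jacm} exactly as used for TSP to shrink each DP table to $2^{\Oh(1/\eps^{d-1})}$ entries while preserving the optimum. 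The spanner-based $n$-saving step, and the rectilinear instantiation (where portals lie on axis-parallel grid lines and the L1 metric makes the portal-respecting lemma even cleaner), should then plug in without new ideas; for \textsc{Rectilinear} Steiner Tree one uses an L1 spanner to reduce $n$ to $\poly(1/\eps)\cdot n$ in a preprocessing step running in $n\polylog n$ time.

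For the conditional lower bound, I would reduce from a Gap-ETH-hard version of \textsc{Planar/Grid Steiner Tree}. The same framework used to obtain the TSP lower bound in Section~\ref{sec:lower-bounds} produces, from Gap-ETH, a gap instance of a combinatorial problem on an $\Oh(N)\times \Oh(N)$ grid with $N$ terminals which cannot be $(1+c/N)$-approximated in $2^{o(N)}\poly$ time. Placing the grid's terminals at integer coordinates in $\Reals^d$ (using $d-1$ orthogonal directions of the grid construction, as in the TSP reduction) converts this into a Rectilinear Steiner Tree instance whose optimum differs by at most $\Oh(1)$ between the YES and NO cases while having total weight $\Theta(N^{d-1})$; setting $\eps = \Theta(1/N^{d-1})$ then makes a $2^{o((1/\eps)^{d-1})}\poly(n)$ approximation yield a $2^{o(N)}\poly$ algorithm for the original problem, contradicting Gap-ETH.

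The main obstacle I expect is the patching argument for Steiner trees in the sparsity-sensitive regime: a patched Steiner tree must remain connected on the terminal set, and naively cutting a subtree at a portal and reconnecting it through the chosen portal may disconnect some terminals. Handling this requires showing that, after the cut, a Steiner-style reconnection along the cell boundary (using the cheap ``boundary tour'' that is central to Arora's analysis) costs at most a charged fraction of the local optimal length, which is precisely the quantity controlled by the sparsity-sensitive choice of portal density. Once this patching lemma is in place the rest of the proof is a straightforward adaptation of the TSP proof.
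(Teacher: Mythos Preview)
Your upper-bound sketch is close to the paper's argument, but two points deserve correction. First, the ``main obstacle'' you anticipate---that patching might disconnect terminals---is not actually an obstacle: the paper observes that Arora's patching (and hence the sparsity-sensitive version, Theorem~\ref{thm:struct}) applies verbatim to trees, since Lemma~\ref{lem:patch} only rewires crossings along a hyperplane and preserves connectivity. Second, your plan to use an $L_1$ spanner for the singleton-crossing case does not work: an ordinary spanner on the terminal set need not contain a near-optimal Steiner tree (which may use Steiner points outside the spanner); one would need the stronger \emph{banyans} of~\cite{RaoS98}. The paper explicitly sidesteps this by reverting to Arora's $\Oh(\log n/\eps)^{d-1}$-portal grid for singleton crossings (the ``valid'' portal sets in Section~4.3), which is why the running time retains a $\polylog(n)$ factor. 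The paper also spells out the leaf-cell base case via a Dreyfus--Wagner routine on the Hanan grid (rectilinear) or on an $\Oh(1/\eps)^d$ grid (Euclidean), which your sketch omits.

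Your lower-bound sketch has a genuine gap. There is no off-the-shelf Gap-ETH-hard ``Grid Steiner Tree'' to reduce from; the paper remarks that no ETH lower bound is known even for exact \textsc{Rectilinear Steiner Tree}. Instead the paper builds a chain $\textsc{Max-(3,3)SAT} \to \textsc{Grid Embedded Connected Vertex Cover} \to \textsc{Rectilinear Steiner Tree}$, using the Garey--Johnson skeleton for the last step and the Cube Wiring Theorem for $d\geq 3$. The hard direction---recovering a small connected vertex cover from a $(1+\eps)$-approximate Steiner tree---requires two nontrivial \emph{canonization} lemmas (Lemmas~\ref{lem:canonize1} and~\ref{lem:canonize2}) forcing every full Steiner subtree to connect only edge components incident to a common cutout; this is the technical core you are missing. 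Your stated parameters are also inconsistent: with total weight $\Theta(N^{d-1})$ and $\eps=\Theta(1/N^{d-1})$ one gets $(1/\eps)^{d-1}=\Theta(N^{(d-1)^2})$, so a $2^{o((1/\eps)^{d-1})}$ algorithm would not contradict a $2^{\Omega(N)}$ lower bound for $d\geq 3$. The paper's scaling is $\ell_d=\Theta(n^{d/(d-1)})$ and $\eps=\Theta(1/n^{1/(d-1)})$, giving $(1/\eps)^{d-1}=\Theta(n)$ as needed.
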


This directly improves state-of-the-art algorithms by Rao and
Smith~\cite{RaoS98} and Bartal and Gottlieb~\cite{bartal-steiner-tree} in all
regimes of parameters $n$ and $\eps$.

The lower bound for \textsc{Rectilinear Steiner Tree} requires new ideas since
there is no known ETH-based lower bound for the exact version of the problem.
Our construction is based on a reduction from \textsc{Grid Embedded Connected
Vertex Cover} from~\cite{frameworkpaperjournal} and a combination of gadgets
proposed by~\cite{garey1977rectilinear} (see Section~\ref{sec:lower-bounds}). We
leave it as an open problem to give a matching lower bound on our algorithm for
\textsc{Euclidean Steiner Tree}.

\subsection{The existing approximation schemes and their limitations}

The approximation scheme from Arora~\cite{Arora98} serves as the basis of our
algorithm, and we assume that the reader is familiar with its basics
(see~\cite{williamson2011design,vazirani-book} for a comprehensive introduction
to the approximation scheme). In this section, we consider $d=2$ for simplicity.

In a nutshell, Arora's strategy in the plane is first to move the points to the
nearest grid points in an $L\times L$ grid where $L=\Oh(n/\eps)$. This grid is
subdivided using a hierarchical decomposition into smaller squares (\emph{a
quadtree}, see definition in Section~\ref{sec:etsp-in-plane}), where on each
side of a square $\Oh((\log n)/\eps)$ equidistant \emph{portals} are placed.
Arora proves a \emph{structure theorem}, which states that there is a tour of
length at most $(1+\eps)$ times the optimal tour length that crosses each square
boundary $\Oh(1/\eps)$ times and only through portals. This structure theorem
is based on a \emph{patching procedure}, which iterates through the cells of the
quadtree (starting at the smallest cells) and patches the tour such that the
resulting tour crosses all cell boundaries only $\Oh(1/\eps)$ times and only at
portals, and it does it in such a way that the new tour is only slightly longer.
While such a promised slightly longer tour does not necessarily exist for a
fixed quadtree, a randomly shifted quadtree works with high probability. The
algorithm thus proceeds by picking a randomly shifted quadtree and by
performing a dynamic programming algorithm on progressively larger squares and
the bounded set of possibilities in it to find a patched tour.

The first improvement to Arora's algorithm was achieved by Rao and
Smith~\cite{RaoS98} (see \cite[Chapter 16]{geomspannet} for a modern description
of their methods). Recall that Arora placed equidistant portals. Rao and Smith's
idea is to use \emph{light spanners} to "guide" the approximate TSP tour and
select portals on the boundary not uniformly. They show that it is sufficient to
look for the shortest tour within a spanner, or more precisely, they patch the
given spanner such that the resulting graph has $1/\eps^{\Oh(1)}$ crossings with
each quadtree cell, while still containing a $(1+\eps)$-approximate tour.
Similarly to Arora's algorithm, it is sufficient to consider tours that cross
each square boundary $\Oh(1/\eps)$ times, but now the number of portals is
$(1/\eps)^{\Oh(1)}$. Consequently, the algorithm of Rao and Smith needs only
$\left((1/\eps)^{\Oh(1)}\right)^{\Oh(1/\eps)}=2^{\Oh((1/\eps)\cdot \log
(1/\eps))}$ subsets of portals to consider for each square in their
corresponding dynamic programming algorithm.

\paragraph{Why do known techniques fail to get a better running time?}
To get the dependence on $\eps$ in the running time down to $2^{\Oh(1/\eps)}$,
the bottleneck is to get the number of candidate sets of where the tour crosses
a cell boundary down to $2^{\Oh(1/\eps)}$.\footnote{To properly solve all
    required subproblems, the dynamic programming algorithm also needs to
    consider all matchings on such a candidate set, but this can be circumvented
    by invoking the rank-based approach from~\cite{rank-based} that allows one
    to restrict attention to only $2^{\Oh(1/\eps)}$ matchings as long as the
candidate set has cardinality $\Oh(1/\eps)$.}

One could hope to improve Arora's algorithm by decreasing the number of portals
from $\Oh(\log n /\eps)$ to $\Oh(1/\eps)$, but this is not possible: the
structure theorem would fail even if the optimal tour is a rotated square with equally distributed points on its sides.

Another potential approach would be to improve the spanners and the spanner
modification technique of Rao and Smith to get a graph that contains a
$(1+\eps)$-approximate tour, while having only $\Oh(1/\eps)$ crossings on each
side of each square. Such an improvement seems difficult to accomplish as even
with Euclidean spanners~\cite{truly-optimal-spanners} of optimal lightness or
the more general \emph{Euclidean Steiner spanners}~\cite{steiner-spanners}, one
cannot get the required guarantee. Le and Solomon~\cite{truly-optimal-spanners}
gave a lower bound of $\tilde\Omega(1/\eps)$ on the lightness of Euclidean
Steiner Spanners in $d=2$, which was matched very recently by Bhore and
T\'oth~\cite{steiner-spanner-scg21}. Even with that optimal Steiner spanner,
the patching method of Rao and Smith yields a guarantee of only $\Ot(1/\eps^2)$
crossings per square and it is not clear if one can even get
$\Oh(1/\eps^{1.99})$ potential crossings per square.

\subsection{Our technique: Sparsity-Sensitive Patching}
We introduce a new patching procedure. Slightly oversimplifying and still focusing on $2$ dimensions, it iterates over the cells of the quadtree and processes a cell boundary as follows:
\begin{insight}
	\textbf{Sparsity-Sensitive Patching}: For a cell boundary that is crossed by
    a tour at $1 < k \leq \Oh(1/\eps)$ crossings,
	modify the tour by mapping each crossing to the nearest portal from the set of $g$ equidistant portals.
	Here $g$ is a granularity parameter \emph{that depends on $k$} as $g = \Theta(1/(\eps^2k))$.
\end{insight}
See Figure~\ref{fig:arora_vs_sparsity} for an illustration.
This can be used in combination with dynamic programming to prove the algorithmic part of
Theorem~\ref{thm:tsp} since it produces a tour for which the number of
possibilities for the set of crossings of the tour with a cell boundary is
$\sum_{k}\binom{\Oh(1/(\eps^2k))}{k} =2^{\Oh(1/\eps)}$ (see Claim~\ref{portal-bin-ineq}).

One notable aspect of our technique is that it allows to get running times faster than the ones of Arora~\cite{Arora03}, but without the use of spanners, see Remark~\ref{rem:spannerfree_alg}.

\begin{figure}[t]
\centering
\includegraphics[width=\textwidth]{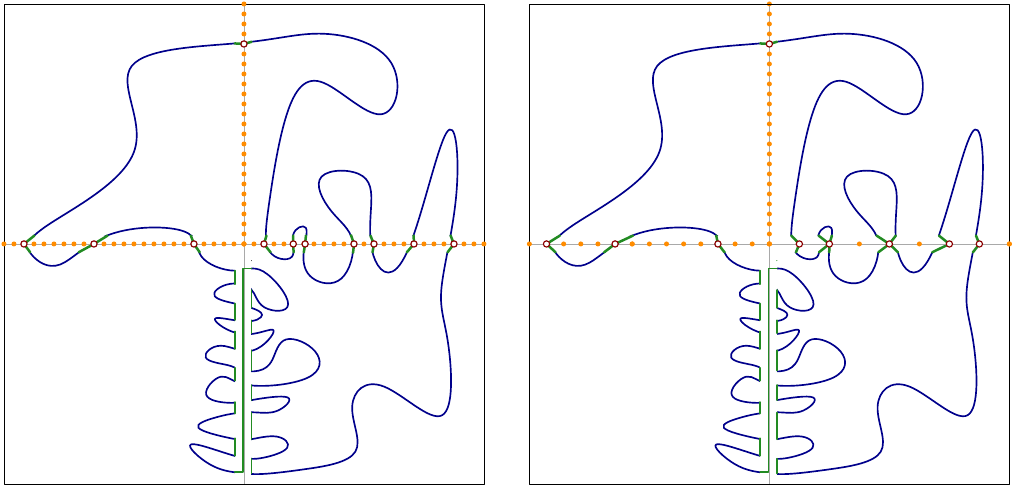}
\caption{On neighboring cells of the quadtree, one must ensure that the tour
    crosses at most $1/\eps$ times, chosen from a limited set of portals.
    Left: Arora's structure theorem snaps the tour to one of $\Oh(\frac{\log
    n}{\eps})$ equally spaced portals. Right: The number of possible portal
    locations depends on the number of crossings; the fewer portals are used, the more precisely they are chosen. Both techniques use the \emph{Patching Lemma} between the bottom two cells as their shared boundary is crossed more than $1/\eps$ times.
}\label{fig:arora_vs_sparsity}
\end{figure}

\paragraph{Bounding the patching cost.}

Our Sparsity-Sensitive procedure may seem quite similar to Arora's patching
procedure, so one may wonder why previous improvements to Arora's procedure
overlooked it. The reason is that our procedure is slightly counter-intuitive,
as it increases precision when the tour is already sparse, and it requires a
subtle analysis of the patching cost. This proof is the main contribution of
this paper.

We will informally describe how we achieve this next. Similar to the patching
cost analysis of Arora's patching procedure, our starting point is that the
total number of crossings that an optimal tour $\pi$ will have with all
horizontal and vertical lines aligned at integer coordinates is proportional to
the total weight of the tour (Lemma~\ref{lem:crossingsvslength}). Since we can
afford an additional cost of $\eps\cdot \wt(\pi)$, it is sufficient to show that
each crossing incurs, in an amortized sense, at most $\Oh(\eps)$ patching cost.

Let $\text{PC}(k,\ell)$ be the patching cost of a horizontal quadtree-cell side
of length $\ell$ with $k$
crossings. Since we connect each crossing to a portal that is at most $\ell/g$
distance away, and the total patching cost is never greater than $\Oh(\ell)$
(since
we can just ``buy'' an entire line, as illustrated in the bottom of Figure~\ref{fig:arora_vs_sparsity}), we obtain
$\text{PC}(k,\ell) \leq \Oh(\min\{\ell,k\ell/g\})$.
The amortized patching cost per crossing is then

\begin{equation}\label{eq:int}
    \frac{\text{PC}(k,\ell)}{k} = \Oh\left(\frac{\min\{\ell, k\frac{\ell}{g} \}}{k}\right) = \Oh \left(\frac{\ell}{k} \min\{1, (k\eps)^2 \}\right),
\end{equation}
and this is maximized when $k=1/\eps$, for which it is $\eps \ell$.

Because we consider a random shift of the quadtree, a crossing of $\pi$ with a fixed
horizontal line $h$ will end up in a cell side of
length $L/2^i$ with probability at most $2^{i-1}/L$, for each $0 \leq i \leq \log L$ (Lemma~\ref{levelprob}).
Letting $\alpha_i(x)$ be the (amortized) patching cost due to the crossing $x$ on line $h$ if $h$ has level $i$, $x$ incurs 
\begin{equation}\label{eq:badbound}
    \sum_{i=0}^{\log L+1}\frac{2^{i-1}}{L} \cdot \alpha_i(x)
\end{equation}
amortized patching cost in expectation.
Naively applying~\eqref{eq:int} for each $i$ to get $\alpha_i \le \eps L/2^i$ and putting this bound into
\eqref{eq:badbound}, gives an undesirably high cost of $\Oh(\eps \log L)$.

To get this cost down to $\Oh(\eps)$, we need to use a more refined argument. We
exploit the fact that the bound $\alpha_i(x) \le \eps L/2^i$ is tight only for a
single $i=i^\ast$ in the worst case. Subsequently, we show that for levels above
$i^\ast$ we have a geometrically decreasing series of costs, which will
demonstrate that the cost in \eqref{eq:badbound} is bounded by $\Oh(\eps)$.
Since we amortize the cost by the length of the tour inside a cell, we should
increase the precision as we move to levels below $i^\ast$. Intuitively, for
level $i^\ast+1$ the number of intersections is halved, while the tour length
(which should be proportional to the area of the cell) is divided by four. This
suggests that the number of portals should be inversely proportional to the
number of crossings.

In our proof we formalize this with a charging scheme based on the distance of the crossing to the
next crossing on the horizontal line.

\subsection{More related work}\label{subsec:relatedwork}
The framework of Arora~\cite{Arora98} and Mitchell~\cite{Mitchell99} was
employed for several other optimization problems in Euclidean space such as
\textsc{Steiner Forest}~\cite{euclidean-steiner-forest}, $k$-\textsc{Connectivity}~\cite{euclidean-mincostconnectivity}, $k$-\textsc{Median}~\cite{ekmedian,kmedian}, \textsc{Survivable Network Design}~\cite{survivable}. We hope our techniques will also find some applications in them.

The original results from~\cite{Arora98,Mitchell99} were also applied or generalized to different settings.
The state-of-the-art for the \textsc{Traveling Salesman Problem} in planar graphs
is now very similar to the Euclidean case. In~\cite{grigni95}, the authors gave the
first PTAS for TSP in planar graphs, which was later extended by 
\cite{weighted-planar-graphs} to weighted planar graphs. Klein~\cite{Klein08}
proposed a $2^{\Oh(1/\eps)}n$ time approximation scheme for TSP in unweighted planar
graphs, which later was proven by Marx~\cite{marx07} to be optimal assuming ETH. Klein~\cite{Klein06} also studied a weighted subset version of TSP that generalizes the planar Euclidean case and gave a PTAS for the problem.

The literature then generalized the metrics much further. Without attempting
to give a full overview, some prominent examples are the algorithms in
minor free graphs~\cite{DemaineHK11,BorradaileLW17,hung-le-soda20},
algorithms in doubling metrics~\cite{BartalGK16,ChanJ18}, and algorithms in
negatively curved spaces~\cite{KrauthgamerL06}, each of which is at least
inspired by the result of Arora~\cite{Arora98} and Mitchell~\cite{Mitchell99}.

Recently,  Gottlieb and Bartal~\cite{bartal-steiner-tree} gave a PTAS for
\textsc{Steiner Tree} in doubling metrics. Moreover, they proposed a
$2^{(1/\eps)^{\Oh(d^2)}}n \log{n}$ time algorithm for \textsc{Steiner Tree} in
$d$-dimensional Euclidean Space with a novel construction of banyan.

There is also a vast literature concerning Euclidean Spanners (see the
book~\cite{geomspannet} for an overview). Very recently Le and
Solomon~\cite{truly-optimal-spanners} proved that greedy spanners are optimal
and in \cite{steiner-spanners} they gave a novel construction of light Euclidean
Spanners with Steiner points. Many such results mention approximation schemes for \textsc{Euclidean TSP} as a major motivation.

\subsection{Organization}
This paper is organized as follows.
In Section~\ref{sec:etsp-in-plane} we define the building blocks of Arora's
approach that we use. 
Section~\ref{sec:pfstruct} proves the Structure Theorem, and in Section~\ref{sec:algorithm} we show how to use it in combination with dynamic programming to establish the algorithmic part of Theorem~\ref{thm:tsp}. Section~\ref{sec:steinertree} extends these techniques to prove the algorithmic results of Theorem~\ref{thm:st}. 
In Section~\ref{sec:lower-bounds} the matching lower bounds are presented, and in Section~\ref{conclusion} we conclude the paper.

\section{Preliminaries}
\label{sec:etsp-in-plane}

Throughout this paper, $\log$ denotes the logarithm of base $2$. We use standard graph notation, and the set $\{1,\dots,k\}$ is denoted by $[k]$. 

For a given set of points $S \subseteq \mathbb{R}^d$, a \emph{tour} is defined
to be a cycle $\pi = (s_1,\ldots,s_n,s_1)$ with vertex (multi)set $S$, which visits
each point and returns to its starting point. Note that in this definition, we
allow points to be visited multiple times. The length (sometimes called
\emph{weight}) $\wt(\pi)$ of a tour $\pi = (s_1,\ldots,s_n,s_1)$ is defined as
$\sum_{i=1}^n \dist(s_i,s_{i+1})$, where $s_{n+1} = s_1$. Hence, a tour $\pi$
consists of a sequence of segments that share endpoints consecutively. 
A \emph{geometric graph} is an embedding of a graph in $\Reals^d$ where vertices are points and edges are segments that connect the corresponding points. For technical reasons, we allow both the vertices and edges of a geometric graph to be a multiset of points, i.e., we allow vertices and edges to coincide in the geometric sense. For example a tour can be regarded as a connected geometric graph where the corresponding graph is a cycle. Occasionally, we will also think of embedded graphs where the edges are represented by a path of segments rather than a single segment.

A \emph{salesman tour} of the point set $P \subseteq \mathbb{R}^d$ is a tour of some points
$S \supseteq P$ (hence, a salesman tour is a closed polyline that passes through
each point in $P$ and is allowed to make some digressions). In the
\textsc{Euclidean Traveling Salesman Problem} (Euclidean TSP), one needs to
return the length of the minimum salesman tour of given points. If
$\pi^*$ is an optimal TSP tour, the standard $(1+\eps)$-approximation
scheme of the problem reports a length in the range $[\wt(\pi^*),
(1+\eps)\wt(\pi^*)]$ (throughout the paper, we assume that $\eps$ is a
real number with $0 < \eps < 1$).

A \emph{Steiner tree} of a point set $P \subseteq \mathbb{R}^d$ is a connected geometric graph that contains $P$ as vertices. In a \emph{rectilinear Steiner tree} we additionally require that each edge of the graph is axis-parallel.
In the \textsc{Euclidean} (resp. \textsc{Rectilinear}) \textsc{Steiner Tree} problem, for a given $P\subset \Reals^d$ the goal is to find a Steiner tree (resp., rectilinear Steiner tree) of $P$ with minimum total weight.

In the following we assume an instance of \textsc{Euclidean TSP} is given by a set $P$ of $n$ points in $\mathbb{R}^d$. By preprocessing the input instance in $\Oh(n
\log (n/\eps))$ time\footnote{By using a different computational model, this is
counted as $\Oh(n)$ time in~\cite{BartalG13}.} (see e.g.~\cite[Section
19.2]{geomspannet}), we may assume that $P \subseteq \{0,\ldots,L\}^d$ for
some integer $L = \Oh(n\sqrt{d} / \eps )$ that is a power of $2$.

\subparagraph*{Hyperplanes and crossings}
For a hyperplane $h$ we say that point $p \in h \cap \pi$ is a \emph{crossing} of
the tour $\pi = (s_1,\ldots,s_n,s_1)$ if there exist $i \in [n]$ such that $p \in
s_is_{i+1}$ (where $s_{n+1} = s_1$) and the endpoints $s_i$ and $s_{i+1}$ are
separated by $h$.

A \emph{grid hyperplane} is a point set of the form $\{ (x_1,\ldots,x_d) \in \mathbb{R}^d \mid
x_i = 1/2 + k)\}$ for some integer $i \in [d]$ and $k \in \mathbb{Z}$.  For a set of line segments, tour, or geometric graph $S$ we
define $I(S,h)$ (respectively, $I(\pi,h)$) the set of intersection points of the segments of $S$ and $h$. We remark that the tour $\pi$ may cross a given point of $h$ several times, but we still think of $I(\pi,h)$ as a set rather than a multiset. Similarly, we will often refer to the set of crossing points with a closed $(d-1)$-dimensional hypercube $F$ as $\pi\cap F$, and $|\pi \cap F|$ does not count the multiplicity of these crossings.

The following simple lemma relates the number of crossings with grid hyperplanes
with the total length of the line segments.

\begin{lemma}[c.f., Lemma 19.4.1 in~\cite{geomspannet}]\label{lem:crossingsvslength} If $S$ is a set of line segments with endpoints in $\mathbb{Z}^d$, then
    \begin{displaymath}
        \sum_{h \text{ is a grid hyperplane}} |I(S,h)| \le \sqrt{d} \cdot \wt(S).
    \end{displaymath}
\end{lemma}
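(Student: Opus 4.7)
The natural proof is a direct per-segment analysis. First I would use linearity of the left-hand side in $S$ to reduce to the case of a single line segment $s \in S$. Writing $\Delta_i$ for the difference of the $i$-th coordinates of the two endpoints of $s$, the statement becomes: the number of grid hyperplanes crossed by $s$ is at most $\sqrt d \cdot \wt(s)$, since $\wt(s) = \sqrt{\sum_i \Delta_i^2}$.

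The plan is then to split the crossing count by axis direction. For a fixed $i \in \{1,\dots,d\}$, the segment $s$ crosses the axis-$i$ grid hyperplane $\{x : x_i = 1/2 + j\}$ precisely when $1/2 + j$ lies strictly between the two $i$-th coordinates of the endpoints of $s$. Because the paper's preprocessing places all input points in $\{0,\dots,L\}^d$, tour edges have integer-coordinate endpoints, and between two integers differing by $|\Delta_i|$ there are exactly $|\Delta_i|$ reals of the form $1/2 + j$. Hence the number of axis-$i$ crossings is at most $|\Delta_i|$, and summing over $i$ the total grid-hyperplane crossing count of $s$ is at most $\sum_{i=1}^d |\Delta_i|$.

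The second step is the standard $\ell_1 \le \sqrt d \cdot \ell_2$ inequality (Cauchy--Schwarz against the all-ones vector). Applying it to $(\Delta_1,\dots,\Delta_d)$ yields
\[
\sum_{i=1}^d |\Delta_i| \;\le\; \sqrt d\cdot \sqrt{\sum_{i=1}^d \Delta_i^2} \;=\; \sqrt d \cdot \wt(s),
\]
which is the per-segment bound. Summing over $s \in S$ gives the lemma.

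The only subtle point I would watch for is the half-integer offset in the definition of a grid hyperplane: it is precisely this offset that guarantees integer-coordinate endpoints lie strictly between hyperplanes rather than on them, making the ``exactly $|\Delta_i|$'' accounting clean. If the lemma needs to be applied to segments whose endpoints are not integer lattice points (for example, segments introduced during patching), the same bound degrades only by an additive $1$ per axis direction, which can be absorbed into a constant and matches the textbook treatment in \cite{geomspannet}. I do not expect any step beyond this to present real difficulty.
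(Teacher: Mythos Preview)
The paper does not give its own proof of this lemma but simply cites Lemma~19.4.1 of~\cite{geomspannet}; your argument is precisely the standard per-segment count plus the $\ell_1\le\sqrt{d}\,\ell_2$ inequality that one finds there. Your observation about the half-integer offset and integer endpoints is exactly the point that makes the bound tight without an additive loss, so nothing further is needed.
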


The following folklore lemma is typically used to reduce the number of ways a
salesman tour can cross a given hyperplane:

\begin{lemma}[Patching Lemma~\cite{Arora98}]\label{lem:patch}
    Let $h$ be a hyperplane, $\pi$ be a tour or Steiner tree, and let $I(\pi,h)$ be the set
    of intersections of $\pi$ with $h$. Assume that $h$ does not contain any
    endpoints of segments that define $\pi$. Let $T$ be a tree on the hyperplane $h$ that
    spans $I(\pi,h)$. Then, for any point $p$ in $T$ there exist line segments
    contained in $h$ whose total length is at most $\Oh(\wt(T))$ and whose
    addition to $\pi$ changes it into a tour (resp. Steiner tree) $\pi'$ that crosses $h$ at
    most twice and only at~$p$.
\end{lemma}

We refer to \cite[Section 19.6]{geomspannet} for a proof of Lemma~\ref{lem:patch}. Note that in typical presentations, the resulting patched tour will contain two copies of $T$, each infinitesimally close to $h$ but outside it; in our variant we allow overlapping segments inside $h$ which allows the patching to happen in $h$. See also our definition of dissection-aligned multigraphs~\Cref{def:multigraph}.

\paragraph{Dissection and Quadtree.} Now, we introduce a commonly used hierarchy to decompose $\mathbb{R}^d$ that will be instrumental to guide our algorithm.
Pick $a_1,\ldots,a_d \in \{1,\ldots,L\}$ independently and uniformly at random
and define $\ba \coloneqq (a_1,\ldots,a_d)$. Consider the hypercube
\[
    \mathrm{C}(\ba) \coloneqq \bigtimes_{i=1}^d [-a_i + 1/2,\,2L-a_i+1/2].
\]
Note that $\mathrm{C}(\ba)$ has side length $2L$ and each point from $P$
is contained in $\mathrm{C}(\ba)$ by the assumption $P \subseteq \{0,\ldots,L\}^d$.

For a cutoff parameter $\mu \in \mathbb{Z}$ let the \emph{dissection} $\mathrm{D}(\ba)$ of $\mathrm{C}(\ba)$
to be a rooted tree that is recursively defined as follows. With each vertex of
the tree we associate a hypercube in $\mathbb{R}^d$. For the root this is
$\mathrm{C}(\ba)$ and for the leaves of the tree this is a hypercube of 
side length $2^\mu$. Typically we will have $\mu=0$
and unit side-length cubes associated with the leaves. Each non-leaf vertex $v$
of the tree with associated closed hypercube
$\bigtimes_{i=1}^d [l_i,u_i]$ has $2^d$ children with which we associate
$\bigtimes_{i=1}^d I_i$, where $I_i$ is either $[l_i,(l_i+u_i)/2]$ or
$[(l_i+u_i)/2,u_i]$. We refer to such a hypercube that is associated with a
vertex in the dissection as a \emph{cell} of the dissection. The level of a cell is the distance from the corresponding vertex to the root of the tree. 

The \emph{quadtree} $\mathrm{QT}(P,\ba)$ is obtained from
$\mathrm{D}(\ba)$ by terminating the subdivision whenever a cell has at
most $1$ point from the input point set $P$. This way, every cell is either a
leaf that contains $0$ or $1$ input points, or it is an internal vertex of the
tree with $2^d$ children, and the corresponding cell contains at least $2$ input
points. We say that a cell $C \in \mathrm{QT}(P,\ba)$ is \emph{redundant}
if it has a child that contains the same set of input points as the parent of
$C$. A redundant path is a maximal ancestor-descendant path in the tree whose
internal vertices are redundant. The \emph{compressed quadtree}
$\mathrm{CQT}(P,\ba)$ is obtained from $\mathrm{QT}(P,\ba)$ by
removing all the empty children of redundant cells, and replacing the redundant
paths with single edges. In the resulting tree some internal cells may have a single
child; we call these \emph{compressed cells}. It is well-known and easy to check
that compressed quadtrees have $\Oh(n)$ vertices (note that compressed quadtrees
can be computed even in $\Oh(n)$ time on a word RAM model~\cite{DBLP:journals/ipl/Chan08}).

For every face $F$ of a cell in $\mathrm{D}(\ba)$ there exists
a unique grid hyperplane that contains $F$. For a grid hyperplane $h$ we define
the \emph{level of $h$} to be the smallest integer $i$ such that
$\mathrm{D}(\ba)$ contains a cell with sides of length $2L/2^i$, one of
whose faces is contained in $h$.

We say that two distinct cells of a dissection or quadtree with the same side-length are \emph{neighboring} if they share a facet, and they are \emph{siblings} if they also have the same parent cell.

\begin{lemma}[Lemma 19.4.3~\cite{geomspannet}]\label{levelprob}
	Let $h$ be a grid hyperplane, and let $i$ be an integer satisfying $0 \leq i
    \le 1+ \log L$. Then the probability that the level of $h$ is equal to $i$ is at most $2^{i-1}/L$.
\end{lemma}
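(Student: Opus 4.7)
The plan is to reduce the claim to a counting question about multiples of a power of two in a short integer interval, exploiting that only one of the random coordinates of $\mathbf{a}$ matters for a given grid hyperplane.

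First I would parametrize $h = \{x \in \Reals^d : x_j = 1/2 + m\}$ for a unique coordinate direction $j$ and integer offset $m$. Since $D(\mathbf{a})$ is built by independent coordinate-wise bisections, only $a_j$ is relevant to whether $h$ coincides with a face of some cell in the dissection; the other shifts $a_k$ determine only the boundaries perpendicular to the other axes and do not affect the event in question. Unrolling the definition of $D(\mathbf{a})$, at level $\ell$ the cells have side length $2L/2^\ell$ and their faces perpendicular to axis $j$ lie on the hyperplanes $\{x_j = -a_j + 1/2 + k \cdot 2L/2^\ell\}$ for $k \in \{0, 1, \ldots, 2^\ell\}$. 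Equating with $x_j = 1/2 + m$ gives the clean equivalence: $h$ is contained in some face of a cell at level at most $i$ if and only if $a_j + m$ is a nonnegative multiple of $2L/2^i$ lying in $[0, 2L]$.

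From here I would conclude by a counting step. Since $\Pr[\mathrm{level}(h) = i] \le \Pr[\mathrm{level}(h) \le i]$, it is enough to bound the latter using the divisibility characterization above. The random variable $a_j + m$ is uniform on an interval of $L$ consecutive integers (namely $\{m+1,\ldots,m+L\}$), and because $L$ is a power of two and so is $2L/2^i$, the number of multiples of $2L/2^i$ in any such interval is at most $L/(2L/2^i) = 2^{i-1}$ for $i \ge 1$. Dividing by $L$ yields the desired bound $\Pr[\mathrm{level}(h) = i] \le 2^{i-1}/L$.

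I do not anticipate any real obstacle: this is essentially Lemma~19.4.3 of \cite{geomspannet}, rephrased with the convention that the dissected cube has side $2L$ instead of $L$. The one point to be careful about is verifying that the count of multiples in the interval is truly at most $2^{i-1}$ rather than $2^{i-1}+1$; this rests on the exact power-of-two alignment of $L$ and $2L/2^i$, so that a length-$L$ interval cannot ``spill over'' an extra multiple of $2L/2^i$.
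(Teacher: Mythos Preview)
The paper does not give its own proof of this lemma; it is quoted verbatim from \cite{geomspannet} (Lemma~19.4.3) and used as a black box. Your argument is correct and is exactly the standard one: only the shift coordinate $a_j$ in the direction orthogonal to $h$ matters, the event $\{\mathrm{level}(h)\le i\}$ becomes the divisibility condition that $a_j+m$ is a multiple of $2L/2^i$, and counting such multiples in a window of $L$ consecutive integers gives the bound.

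Two minor remarks. First, since $L$ is a power of two and $1\le i\le \log L$, the divisor $2L/2^i=L/2^{i-1}$ actually divides $L$, so the count is \emph{exactly} $2^{i-1}$ in every length-$L$ window; your caveat about a possible extra multiple is unnecessary. Second, you explicitly restrict your count to $i\ge 1$ and leave $i=0$ unaddressed; for the hyperplanes that can actually be crossed by a tour on $P\subseteq\{0,\ldots,L\}^d$ (those with $m\in\{0,\ldots,L-1\}$), one checks directly that the level can never be $0$, so the bound $2^{-1}/L$ holds trivially there.
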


\paragraph{Building blocks of Arora's technique}
We now briefly describe the building blocks from~\cite{Arora98} that we will use.

\begin{definition}[$m$-regular set]
    An $m$-regular set of portals on a $d$-dimensional hypercube $\mathrm{C}$ is an
    orthogonal lattice $\grid(\mathrm{C},m)$ of $m$ points in the cube. If the cube
    has side length $\ell$, then the spacing between the portals is set to
    $\ell/(m^{1/d}-1)$.
\end{definition}

We will normally have $m$ be chosen as $k^d$ for some integer $k\geq 2$, and as a consequence, $\grid(\mathrm{C},m)$ will always contain the corners of $\mathrm{C}$.

\begin{definition}[$r$-light]
	A set of line segments $S$ is $r$-light with respect to the dissection
    $\mathrm{D}(\ba)$ if it crosses each face of each cell of $\mathrm{D}(\ba)$ at most $r$ times.
\end{definition}

\begin{theorem}[Arora's Structure Theorem]\label{thm:arora}
    Let $P \subseteq \{0,\ldots,L\}^d$, and let $\wt(OPT)$ be the minimum length of a salesman tour visiting $P$. Let the shift vector $\ba$ be picked randomly.
	Then with probability at least $1/2$, there is a salesman tour of cost at
    most $(1 + \eps)\wt(OPT)$ that is $r$-light with respect to $\mathrm{D}(\ba)$ such
    that it crosses each facet $F$ of a cell of $\mathrm{D}(\ba)$ only at points
    from $\grid(F,m)$, for some $m = (\Oh((\sqrt{d}/\eps) \log L))^{d-1}$ and $r=(\Oh(\sqrt{d}/\eps))^{d-1}$.
\end{theorem}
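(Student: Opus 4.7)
The plan is to prove the Structure Theorem by exhibiting, for each outcome of the random shift $\mathbf{a}$, an explicit transformation of an optimal salesman tour $\pi^*$ of weight $\wt(\pi^*)$ into a tour $\pi$ that is $r$-light and crosses each face $F$ only at points of $\grid(F,m)$, and then bounding the expected extra weight $\Ex{\wt(\pi) - \wt(\pi^*)}$ by $\frac{\eps}{2}\wt(\pi^*)$. Markov's inequality, applied to the non-negative random variable $\wt(\pi) - \wt(\pi^*)$, then yields a concrete outcome with $\wt(\pi) \leq (1+\eps)\wt(\pi^*)$ with probability at least $1/2$, after halving $\eps$ to absorb the factor.

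I would carry out the transformation as Arora's classical two-step construction, executed as a bottom-up sweep over the faces of $D(\mathbf{a})$. For each face $F$ of side length $\ell_F$ I first apply \emph{patching}: if the current tour crosses $F$ more than $r$ times, I invoke the Patching Lemma~\ref{lem:patch} on the hyperplane of $F$ with a near-optimal Steiner tree $T_F$ on the crossings inside $F$, reducing the crossing count of $F$ to two at extra cost $\Oh(\wt(T_F)) = \Oh(k_F^{(d-2)/(d-1)} \ell_F)$, where $k_F > r$ is the count before patching. I then apply \emph{portal snapping}: each remaining crossing is moved to the nearest point of $\grid(F,m)$, at cost at most $\ell_F / m^{1/(d-1)}$ per crossing. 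Two second-order effects need to be tracked: patching segments lie inside the hyperplane of $F$ and may induce new crossings on deeper coplanar faces, while snapping can push a crossing across a perpendicular cell boundary at the same level; both inflate crossing counts only by $\Oh(1)$ multiplicative factors and can be absorbed into the constants hidden in $r$.

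For the cost analysis I would combine Lemma~\ref{lem:crossingsvslength}, which gives $\sum_h t(h) \leq \sqrt{d}\cdot \wt(\pi^*)$ for $t(h) := |I(\pi^*,h)|$, with Lemma~\ref{levelprob}, which gives $\Pr[\mathrm{level}(h) = i] \leq 2^{i-1}/L$. The snapping cost per original crossing contributes expected $\sum_{i=0}^{\log L} (2^{i-1}/L)(L/2^i)/m^{1/(d-1)} = \Oh(\log L / m^{1/(d-1)})$, so summing over crossings gives $\Oh(\sqrt{d}\log L / m^{1/(d-1)}) \cdot \wt(\pi^*)$ total, which the choice $m = (\Theta((\sqrt{d}/\eps)\log L))^{d-1}$ drives below $(\eps/4)\wt(\pi^*)$. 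For patching, the amortized per-crossing bound $\Oh(\ell_F / r^{1/(d-1)})$, valid because a patched face has at least $r+1$ crossings splitting the $\Oh(k_F^{(d-2)/(d-1)} \ell_F)$ bill, combines with the same level calculation; Arora's careful argument shows that the extra $\log L$ factor the naive sum would introduce is in fact absent because each crossing triggers a patching operation at a unique level (the deepest face on its hyperplane where the crossing count first exceeds $r$), so the per-crossing contributions telescope across the levels and the choice $r = (\Theta(\sqrt{d}/\eps))^{d-1}$ suffices to keep the patching bill below $(\eps/4)\wt(\pi^*)$.

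The hard part is precisely this last step, namely removing the spurious $\log L$ factor from the $r$-bound, which requires a nontrivial telescoping argument keyed to the fact that a given crossing is eliminated only at one level, together with careful tracking of the side effects of patching and snapping so that they do not compound through the bottom-up sweep. I would follow Arora's original treatment here, since the theorem as stated is Arora's Structure Theorem; the sparsity-sensitive patching that the present paper introduces is a separate refinement, used later to also remove the $\log L$ factor from $m$ and thereby drive the overall dependence on $\eps$ down to the conditionally optimal $2^{\Oh(1/\eps^{d-1})}$.
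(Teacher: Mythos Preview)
Your overall plan is sound and follows the classical two-step transformation (patching for $r$-lightness, then portal snapping), with the expected extra cost controlled via Lemmas~\ref{lem:crossingsvslength} and~\ref{levelprob} and converted to a high-probability statement by Markov. One remark on the hard step: your account of how the $\log L$ factor is avoided in $r$ is not quite the actual mechanism. It is not literally that ``each crossing triggers a patching operation at a unique level''; rather, one observes that the number $c_j$ of patchings at level $j$ on a hyperplane $h$ is the same for every shift assigning $h$ a level $i\leq j$, interchanges the sums over $i$ and $j$, and uses $\Pr[\text{level of }h \leq j]\leq 2^j/L$ to cancel the per-patching cost $O(L/2^j)$; what remains is $\sum_j c_j\cdot O(1)=O(t(h)/r)$ since the total number of patchings on $h$ is $O(t(h)/r)$.

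The paper does not give a full proof of Theorem~\ref{thm:arora} (it is cited as Arora's result), but Section~\ref{subsec:arora} sketches both halves, and for $r$-lightness it deliberately takes a different route, flagged in a footnote as non-standard and intended as a warm-up to the paper's own structure theorem. Instead of patching overloaded faces bottom-up, the paper fixes the hyperplane $h$ of level $i$, defines the proximity $\pro(c_j)=c_j-c_{j-1}$ of each crossing, connects $c_j$ to $c_{j-1}$ whenever $\pro(c_j)\leq L/(2^ir)$, and applies the Patching Lemma per connected component. Any interval of length $L/2^i$ contains at most $r$ crossings of proximity exceeding $L/(2^ir)$, so the result is $r$-light; the expected per-crossing cost is the single geometric series in~\eqref{eq:patchviaprox}, summing to $O(1/r)$ with no interchange of sums and no tracking of secondary patching effects. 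Your classical approach proves Theorem~\ref{thm:arora} equally well, but the paper's proximity formulation is what makes its main contribution (Theorem~\ref{thm:struct}, sparsity-sensitive patching) appear as a one-line extension: add a second case for crossings of large proximity and redo the same geometric-series calculation.
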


\section{Structure Theorem} \label{sec:pfstruct}

Now we present and discuss the main structure theorem that allows us to prove the algorithmic part of
Theorem~\ref{thm:tsp}. We state the theorem for a general dimension $d$.

For a $(d-1)$-dimensional hypercube $F$ let $F^*$ denote $F$ without its $2^{d-1}$ corner points.

\begin{definition}[$r$-simple geometric graph]\label{def:simple}
	Let $\pi$ be a geometric graph in $\mathbb{R}^d$ such that the grid hyperplanes of the dissection $\mathrm{D}(\ba)$ do not contain any edge of $\pi$.  We say that $\pi$ is \emph{$r$-simple} if for every facet $F$ shared by a pair of sibling cells in $\mathrm{D}(\ba)$: 
	\begin{enumerate}[label=(\alph*)]
		\item $\pi$ crosses $F^*$ through at most one point (and some subset of $2^{d-1}$ corners of~$F$), or
        \item $\pi$ crosses $F$ only through the points from $\grid(F,g)$ for some $2^{d-1}\leq g \leq  r^{2d-2}/|\pi\cap F^*|$.
	\end{enumerate}
	Moreover, for any point $p$ on a hyperplane $h$ of $\mathrm{D}(\ba)$, $\pi$ crosses $h$ at most twice via $p$.
\end{definition}

One can see that in case (b) we always have $|\pi\cap F^*| \le g$ (or when including multiplicities, there are at most $2g$ crossings). Consequently, $|\pi\cap F^*|\leq r^{2d-2}/|\pi\cap F^*|$ and thus $|\pi\cap F^*|\leq r^{d-1}$. Moreover, we recall that $\grid(F,g)$ contains all corners of $F$.

\begin{definition}[$r$-simplification]\label{def:simplification}
We say that a geometric graph $\pi'$ is an \emph{$r$-simplification of $\pi$} if $\pi'$ is an
$r$-simple geometric graph, and in each facet $F$ where $|\pi\cap F^*|=1$, the single non-corner crossing is a point from $\pi \cap F^*$.
\end{definition}

\begin{theorem}[Structure Theorem]\label{thm:struct}
	Let $\ba$ be a random shift and let $\pi$ be a tour or Steiner tree of $P \subseteq \mathbb{R}^d$. Then for any positive integer $r$ there is a tour (resp., Steiner tree) $\pi'$ of $P$ that is an $r$-simplification of  $\pi$ such that
    \[\mathbb{E}_{\ba}[\wt(\pi')-\wt(\pi)] \leq \Oh(d^{5/2}\cdot \wt(\pi) / r).\]
\end{theorem}

Theorem~\ref{thm:struct} is the main contribution of this paper. 

\subsection{Sketch for the 2-dimensional case}

Before we present the proof of Theorem~\ref{thm:struct}, let us informally describe the construction of the tour $\pi'$ of Theorem~\ref{thm:struct} for $d=2$ and how it can be used give a $2^{\Oh(1/\eps)} n\polylog(n)$ algorithm for \textsc{Euclidean TSP} when $d=2$. The full proof of Theorem~\ref{thm:struct} will start at~\cref{sec:base-line-tree}.

\paragraph{Sketch of the algorithm}

We set $r = \Oh(1/\eps)$. If we find an $r$-simple tour of
lowest weight, then property (i) of Theorem~\ref{thm:struct} guarantees that
this tour is a $(1+\eps)$-approximation of an optimal salesman tour.  Similarly to
Arora~\cite{Arora98} we can use dynamic programming to find such a tour.  The
number of possible ways in which the tour can enter and leave a cell of the
quadtree is at most $\binom{\Oh(1/(\eps^2m))}{m}2^{\Oh(m)} \cdot \poly(n)$,\footnote{\label{catalan}This uses the well-known fact that the number of non-crossing matchings on $r$ endpoints is at most
	$2^{\Oh(r)}$.}
since there are at most $\poly(n)$ possibilities for the location of crossing if
there is at most one crossing. The number of table entries can then be upper bounded with
$2^{\Oh(1/\eps)}\poly(n)$ via the following claim:

\begin{claim}
	\label{portal-bin-ineq}
    For every $1\leq a \leq b$, it holds that $\binom{b/a}{a} \leq e^{\sqrt{b/e}}$.
\end{claim}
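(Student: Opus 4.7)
The plan is to reduce the claim to a one-variable calculus problem. First, I would apply the standard binomial estimate $\binom{n}{k}\le(en/k)^k$ (which follows from $\binom{n}{k}\le n^k/k!$ combined with the Stirling-type bound $k!\ge(k/e)^k$). Setting $n=b/a$ and $k=a$ immediately yields
\[
\binom{b/a}{a} \;\le\; \left(\frac{eb}{a^2}\right)^{\!a},
\]
so after taking logarithms the task reduces to bounding the real-valued function
\[
f(a) \;:=\; a\ln\!\left(\frac{eb}{a^2}\right) \;=\; a\bigl(1+\ln b-2\ln a\bigr)
\]
on the range $a\in[1,\sqrt{b}\,]$; for $a>\sqrt{b}$ the binomial on the left already vanishes and there is nothing to check.

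Next, I would locate the maximum of $f$ by calculus. Differentiating gives $f'(a)=\ln\!\bigl(b/(ea^2)\bigr)$, which vanishes uniquely at the interior critical point $a^{\star}=\sqrt{b/e}$, and $f''(a^{\star})=-2/a^{\star}<0$ confirms this is the global maximum on $(0,\infty)$. Substituting back,
\[
f(a^{\star}) \;=\; \sqrt{b/e}\cdot\ln\!\bigl(eb/(b/e)\bigr) \;=\; \sqrt{b/e}\cdot\ln(e^{2}) \;=\; 2\sqrt{b/e},
\]
so after exponentiating, $\binom{b/a}{a}\le e^{2\sqrt{b/e}}$, which is the statement up to the constant in the exponent.

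The only genuine obstacle is tightening this leading constant to match the exact form of the claim. This would be handled by replacing the crude Stirling estimate above with the sharper inequality $k!\ge\sqrt{2\pi k}\,(k/e)^{k}$, which adds a compensating $-\tfrac{1}{2}\ln(2\pi a)$ term to the exponent, together with a direct verification at the endpoints $a=1$ and $a=\sqrt{b}$ and for a bounded range of small values of $b$. Since the one-dimensional optimization is textbook and the Stirling correction is routine, the entire argument should comfortably fit in a short paragraph, and for the downstream use in Theorem~\ref{thm:uppbound} only the $\Oh(\sqrt{b})$ growth of the exponent is actually needed.
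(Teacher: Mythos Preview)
Your approach is identical to the paper's: split off the trivial case $a>\sqrt{b}$, apply $\binom{n}{k}\le(en/k)^k$, and maximize $(eb/a^2)^a$ over $a\in[1,\sqrt{b}]$ by calculus. Your computation of the maximum is correct: at $a^\star=\sqrt{b/e}$ one has $eb/(a^\star)^2=e^2$, hence the value is $e^{2\sqrt{b/e}}$. The paper's own proof in fact contains an arithmetic slip at exactly this point and records the maximum as $e^{\sqrt{b/e}}$; your $e^{2\sqrt{b/e}}$ is what the argument actually yields.

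Where your proposal goes wrong is in the final paragraph, where you sketch how to remove the factor $2$ to match the stated bound. This cannot work, because the bound $\binom{b/a}{a}\le e^{\sqrt{b/e}}$ is simply false. For instance, with $a=2$ and $b=8$ one has $\binom{4}{2}=6$ while $e^{\sqrt{8/e}}\approx 5.56$; with $a=3$ and $b=18$ one has $\binom{6}{3}=20$ while $e^{\sqrt{18/e}}\approx 13.1$; and the gap only widens for larger~$b$. The sharper Stirling estimate you invoke contributes only a $-\tfrac12\ln(2\pi a)=O(\ln b)$ correction to the exponent, which cannot compensate for the missing $\sqrt{b/e}$ term, so that route is doomed regardless of how carefully the endpoints are checked.

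The right conclusion is the one you already note in your last sentence: for the downstream application (bounding the number of portal sets by $2^{\Oh(r^{d-1})}$) only the growth rate $e^{\Oh(\sqrt{b})}$ matters, and your clean derivation of $e^{2\sqrt{b/e}}$ delivers exactly that. Stop there; do not attempt to chase the constant.
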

\begin{claimproof}
    If $a > \sqrt{b}$, then $\binom{b/a}{a} = 0$ and the
	inequality follows. If $a \le \sqrt{b}$, then by the standard upper bound
	$\binom{n}{k} \leq (\frac{n\cdot e}{k})^k$ we have that $\binom{b/a}{a} \leq
	\left(\frac{b\cdot e}{a^2}\right)^{a}$. In the interval $a\in [1,\sqrt{b}]$, the latter expression is maximized
    for $a=\sqrt{b/e}$, where it equals $e^{\sqrt{b/e}}$.
\end{claimproof}

To get the $\poly(n)$ factor in the running time down to $\polylog(n)$, note
that we can first apply Theorem~\ref{thm:arora} with smaller $\eps$ to ensure
there are $\log^{\Oh(1)}(n)$ possibilities for the case where the tour crosses a
cell edge at a single point.

\paragraph{Sketch of the patching}
The proof of Theorem~\ref{thm:struct} uses a so-called \emph{patching procedure}
that modifies an (optimal) tour to a tour with the desired properties, but
without increasing the length by too much. Here, we sketch the procedure for the
case of $d=2$, hence for simplicity assume that $h$ is a horizontal
line, and $c_1< c_2 < \ldots < c_k$ are the $x$-coordinates of the $k=|I(G,h)|$
crossings.
We define the \emph{proximity} of the $j$-th crossing as $\pro(c_j) =
c_{j}-c_{j-1}$ (for $j=1$, use $c_{0}=-\infty$).\footnote{In the later formal proof that also handles $d>2$ we use a more complicated version of \emph{proximity} defined in terms of the \emph{base-line tree}.}

Our Sparsity-Sensitive Patching considers each cell $C$
of the dissection and each side $F$ of $C$ with at least two crossings, and
connects each crossing $x$ on $F$ as follows (see Figure~\ref{fig:constructiontc}).
\begin{enumerate}
    \itemsep0em 
    \item Let $N$ be the set of ``\emph{near}'' crossings, that is, $N$ is the set of
        crossings of $\pi$ and $F$ satisfying $\pro(x) \leq \frac{L}{2^i r}$,
        where $i$ is the level of the line of $F$ in the dissection.
    \item Let $G$ be the set of remaining crossings of $\pi$ with $F$.
	\item Create a set of line segments $\PT$ (the \emph{patching forest} of $F$) by connecting each vertex from
    $N$ to its successor and, if $|G| > 1$, connecting each vertex from $G$ to
    the closest point in $\grid(F,r^2/|G|)$.
	\item Apply Lemma~\ref{lem:patch} to each set of touching line segments of
        $\PT$ to obtain a new tour $\pi'$ that crosses $F$ only at $|G|$ points
        of $\grid(F,r^2/|G|)$, and at most twice at each of these points.
\end{enumerate}
\begin{figure}[t!]
    \centering
    \includegraphics[width=0.8\textwidth]{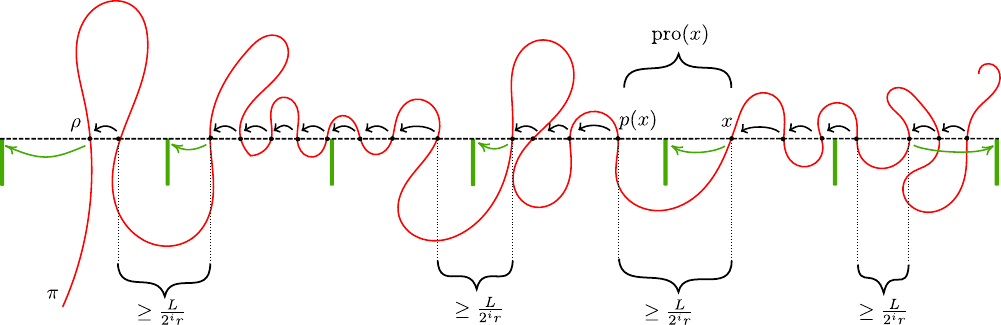}
    \caption{Construction of a set of line segments $\PT$ in $d=2$. The tour $\pi$ is colored red. Green portals denote the
        points in $\grid(F,r^2/|G|)$. The leftmost point and the points with
    $\pro(x) > L/(2^i/r)$ form the set $G$ and are connected to the closest portal from the grid by a green arrow. Points with $\pro(x)
    \le L/(2^ir)$ form the set $N$ and they are connected to their parent with black arrows. The
    set of line segments $\PT$ is indicated with a collection of
    black and green arrows.}
    \label{fig:constructiontc}
\end{figure}

\begin{remark}
    One difference between Arora's and our patching procedure is that each line (or in higher dimensions, each hyperplane) is patched only once. In other words, we only do patching between neighboring sibling cells. Arora's algorithm uses bottom-up patching, that is, it first patches along the shared boundary of neighboring leaf cells of the quadtree. (These leaves need not be siblings.) The procedure then goes up one level, and a patching may happen again if the number of crossings in the new (and larger) facet exceeds some threshold. Thus, Arora's patching procedure is iterative, and several patching steps may occur on any given cell boundary. In contrast, our patching is not iterative and it is done independently in each hyperplane (or on each cell boundary) only once.
\end{remark}

The rest of this section is dedicated to the proof of the Structure Theorem in
$d$-dimensional Euclidean space. Before we prove it, we first show the existence
of a certain \emph{base-line tree} in $d$-dimensional Euclidean space. This tree
will be a subset of a hyperplane, and parts of it will be used via the
invocation of the patching routine from Lemma~\ref{lem:patch} to reduce the
number of crossings of the tour with the hyperplane. In $\mathbb{R}^2$, this
tree is just an entire line segment and the construction of the base-line tree in Subsection~\ref{sec:base-line-tree} and the analysis of using it for patching the tour in Subsection~\ref{sec:struct-proof} can be skipped over by readers only interested in a proof sketch for $d=2$. Trees similar to the base-line tree were also used for the case $d>2$ by a previous algorithm
(see~\cite{RaoS98,geomspannet}), but we need a more delicate construction. Crucially, our
base-line tree determines the proximities (i.e., the amortized patching costs) of the crossings and whether a given
crossing point will be connected to a point from a grid or not. It will also play a crucial role in avoiding a large number of new crossings in perpendicular hyperplanes that could arise as a result of patching.

\subsection{The base-line tree}
\label{sec:base-line-tree}

In order to construct a good base-line tree we will need to align it with a fixed dissection.

\begin{definition}[Dissection-aligned geometric graph]\label{def:multigraph}
    A graph $G$ is a
    $\mathrm{D}(\ba)$-aligned geometric graph if the following hold:
    \begin{itemize}
        \item the vertices and edges of $G$ are represented by a multiset of
            points and segments in $\Reals^d$, respectively,
        \item each vertex of $G$ is assigned to a unique cell containing the point representing this vertex,
        \item every edge of positive length connects two vertices of the same cell, and
        \item every edge of length $0$ is connecting a pair of vertices that are assigned to a cell and its child cell, respectively.
    \end{itemize}
\end{definition}

We say that an edge of a dissection-aligned geometric multigraph forms a \emph{crossing} of a cell $C$ if it has one end vertex assigned to $C$ and the other end-vertex assigned to the parent of $C$. Note that by the above definition, a crossing edge is always of length $0$, and its geometric location is at the location of the edge.

The following lemma is based on \cite[Lemma 19.5.1]{geomspannet}, but there are
two important differences.  First, we do not need an efficient
construction and only need to prove the existence of such a tree $T$.  Second,
\cite[Lemma 19.5.1]{geomspannet} does not guarantee our Property (c) in
Lemma~\ref{lem:lighttree}.

\begin{lemma}\label{lem:lighttree}
    Let $d\geq 1$ be a constant and let $K \subseteq \mathbb{R}^d$ and 
    let $\mathrm{D}(\ba)$ be a dissection in $\mathbb{R}^d$ where each bottom-level cell contains at most two vertices of $K$.
    Then there exists
    a rooted tree $T$ spanning $K$ that is a $\mathrm{D}(\ba)$-aligned multigraph with the following properties.
	\begin{enumerate}[label=(\alph*)]
    \item $T$ is $1$-light, i.e., each cell has at most one crossing edge, which leads to a parent cell.
    \item Each cell $C$ has its crossing vertex at a designated corner $\ori(C)$ of $C$, and
    \item For each cell $C$ of $\mathrm{D}(\ba)$ with
        side length $\ell$ and $Q \subseteq K \cap C$, it holds that the minimum subtree $T'$
        of $T$ that spans $Q$ satisfies $\wt(T') \le 8 d \ell |Q|^{1-1/d}$.
	\end{enumerate}
\end{lemma}
\begin{proof}

We construct a $\mathrm{D}(\ba)$-aligned geometric tree as follows.
The \emph{skeleton} of a cell $C$ of $\mathrm{D}(\ba)$ is the graph whose
vertex set consists of the $2^d$ corners of $C$ and whose edge set consists of
of all $d2^{d-1}$ edges of the hypercube $C$ (i.e. each pair of corners of $C$ that differ in one coordinate shares an edge). If $C$ is the top-level cell of the dissection, then we define $\ori(C)$ to be the corner of
$C$ at which all coordinates are the smallest possible, that is, the lexicographically minimum corner. If $C$ is the child cell of $C^*$, then let $\ori(C)$ be the unique vertex of $C$ that is also a vertex of $C^*$.
We construct a tree $T_0$
that only crosses each cell~$C$ of the dissection at~$\ori(C)$. To do so, for each
cell $C$ we add a spanning tree of the skeleton of~$C$ rooted at~$\ori(C)$ with
depth at most~$d$; this tree is denoted by~$T_C$, see Figure~\ref{fig:baseline}. Such a tree could be constructed for example by breadth-first-search on the edges of the hypercube; note that the diameter of the edge graph is~$d$ thus the resulting BFS tree has depth~$d$. 
Observe that trees~$T_C$ in different cells~$C$ have some shared vertices, but no shared edges, although edges can have overlaps: if~$C'$ is a child of~$C$, then $C'_\tau$~and~$C_\tau$ will have edges that are intervals of the same line. For each parent and child cell pair $C$~and~$C'$, let~$e_{CC'}$ be an edge of length $0$ connecting the vertex of~$C'$ located at~$\ori(C')$ to the vertex of~$C$ located at the same place~$\ori(C')$. Now let~$T_0$ be the union of all the trees~$T_C$ (with vertices assigned to $C$) and edges~$e_{CC'}$ for each cell~$C$ and for each parent-child cell pair~$C,C'$ of~$\mathrm{D}(\ba)$, respectively.

\begin{figure}
\centering
\includegraphics{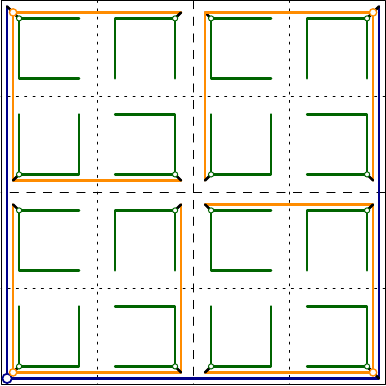}
\caption{Constructing the tree $T_0$. The spanning tree of the skeleton of a cell $C$ is connected to a vertex of its parent's spanning tree with a length-0 edge at $\ori(C)$ (denoted by a circles). The spanning trees at levels 1,2 and 3 are drawn in blue, orange and green, respectively.}\label{fig:baseline}
\end{figure}

Now let $T$ be the tree $T_0$ plus the edge that connects each point $x \in K$ to the vertex of $C$ at $\ori(C)$ where $C$ is the bottom-level cell of the dissection (of side length $2^\mu$) that contains $x$. Naturally, the vertex at $x$ is assigned to $C$. The
tree $T$ remains $\mathrm{D}(\ba)$-aligned and $1$-light. It
is thus enough to show that it satisfies property $(c)$.

To prove (c) we introduce an auxiliary tree $DT$ as follows. The set of
vertices of $DT$ is the multiset defined by $\{\ori(C) \mid C \text{ is a cell
of } \mathrm{D}(\ba)\} \cup K$. Recall that a cell $C$ of the dissection
has a well-defined level $\text{lvl}_C \in \{1,\ldots,\ell\}$.  For each vertex
$v \in V(DT)$ we say that $v$ has level $\ell+1$ if $v \in K$ or has level $i$
if the corresponding cell $C$ has level $i$. Note that when $C$ has level $\ell$, then the only vertex of $T_0$ that was assigned to $C$ and has been added to $DT$ is $\ori(C)$.

Finally, we add edges between
$\ori(C)$ and $\ori(C')$ of weight $2\cdot Ld/2^i$ if $C'$ is a child of $C$. Moreover,
each vertex $x \in K$ is connected to $\ori(C)$ with an edge of weight
$2\cdot Ld/2^{\ell+1}$, where $C$ is a cell at the
bottom level that contains $x$. Notice that when $C$ is at the bottom level, then $|K\cap C|\leq 2$ implies that $\ori(C)$ has at most $2\leq 2^d$ children in $DT$.

Now we show that the length of each subtree of $DT$ is at least the length
of the corresponding subtree (i.e., the subtree spanned by the same vertex set) of $T$. 

Note that in $DT$ the weight of an edge between $\ori(C)$ and $\ori(C')$ from level $i$ to level $i-1$ is exactly $2\cdot Ld/2^i$.
Consider the corresponding path between these vertices in $T$. 
Notice that
$\ori(C)$ and $\ori(C')$ are both in the skeleton of $C$ and their distance in
$T$ is at most $Ld/2^i$. For vertices $x \in K$, the distance between $x$
and $\ori(C)$, where $C$ is the bottom-level cell containing $x$ is at most
$\sqrt{d}\cdot L/2^\ell < 2\cdot Ld/2^{\ell+1}$.

The lemma is now a consequence of applying the following claim on general
weighted trees to the subtree $T'$ of~$DT$ rooted at~$\ori(C)$.
	
\begin{claim}\label{cl:smallsubtree}
    Let $T'$ be a rooted tree in which each vertex has at most
    $2^d$ children and each edge from level $i-1$ to level $i$ has weight at most $1/2^i$. Then for any set
    of vertices $Q$, the minimum subtree of $T'$ that spans $Q$ has weight at most $4\cdot |Q|^{1-1/d}$.
\end{claim}
\begin{claimproof}
	Let $k$ be the integer such that 
	\[
		2^k \leq |Q|^{1/d} < 2^{k+1}.
	\]
	From level $i-1$ to level $i$ we have at most $2^{di}$ edges and each such
    edge has weight $1/2^{i}$. We have that the total weight of all edges from
    level $0$ to $k$ is at most:
	\[
		\sum_{i=1}^{k} \frac{2^{di}}{2^{i}} \leq 2\cdot 2^{k(d-1)} \leq 2 \cdot (|Q|^{1/d})^{d-1} = 2 \cdot |Q|^{1-1/d}.
	\]
    On the other hand, the length of a path from a vertex $q \in Q$ that has a
    level at least $k$ to its ancestor at level $k$ is at most
    $\sum_{i=k+1}^{\infty} 1/2^{i}= 1/2^k$. Thus, the total length of all such
    paths is at most $|Q|/2^k < 2 |Q|^{1-1/d}$. Therefore, the weight of the
    subtree is less than $4|Q|^{1-1/d}$ in total.
\end{claimproof}

This concludes the proof of Lemma~\ref{lem:lighttree}.
\end{proof}

With Lemma~\ref{lem:lighttree} in hand, we are ready to start the proof of
Theorem~\ref{thm:struct}. We start with describing the desired traveling
salesman tour $\pi'$.

\subsection{Constructing the patched tour \texorpdfstring{$\pi'$}{pi'} and analyzing its crossings}
\label{sec:struct-proof}

In the proof of our structure theorem (\Cref{thm:struct}), we may assume without loss of generality that $r\geq 128d$, as otherwise the claim can be satisfied by any known constant-approximation. We will call $\pi$ a tour; the proof is analogous when $\pi$ is a Steiner tree.

We construct the tour $\pi'$ by iteratively processing all crossings per grid
hyperplane.  For a grid hyperplane $h$ let $I(\pi,h)$ be the set of
intersections of $\pi$ with $h$.

Now fix a grid hyperplane $h$ and suppose that $h$ fixes the $j$-th coordinate
(so $h = \{(x_1,\ldots,x_d) : x_j = 1/2+z_h \}$ for some integer $z_h$).
Without loss of generality, we assume that $j=1$. Let $a_1$ be the first
coordinate of $\ba$.  Let $\mathbf{a'}$ be obtained from $\ba$ by
omitting the first coordinate.  Therefore $\ba = (a_1,\ba')$.
 
\begin{remark}\label{rem:dep_i}
    The level of a hyperplane perpendicular to the first axis depends only\footnote{The orthogonality expressed in Remarks~\ref{rem:dep_i} and~\ref{rem:depT} are crucial to the success of our analysis. This orthogonality will be instrumental to the swapping of sums in \eqref{eq:sumswap}. We observe that the same orthogonality is used in an analogous manner in the proof of Arora's structure theorem: it is required to be able to swap two similar sums.
    
The iterative patching of Arora raises a related problem.
Consider the variable $c_{\ell,j}$ in \cite{Arora98}, and the sentence stating that $c_{\ell,j}$ is independent of $i$ above formula (3) on page 766 of~\cite{Arora98} (due to orthogonality). Strictly speaking this is not true: $c_{\ell,j}$ is undefined when $j<i$ (or it could be set to $0$). Thus in lines 6 and 7 of page 767 of~\cite{Arora98} the swapping of the sums over $i$ and $j$ is formally incorrect.
This inaccuracy can be easily fixed by setting $c_{\ell,j}$ based on the case $i=1$, thus making it independent from $i$, and treating $c_{\ell,j}$ as an upper bound on the true crossing count $c_{\ell,i,j}$ at stage $j\geq i$, where $c_{\ell,i,j}$ is set to $0$ when $j<i$. We note furthermore that the sums are swapped without the mention of orthogonality on page~770 of~\cite{Arora98}. The same inaccuracy appears in almost all published proofs of Arora's structure theorem: see page 460 of~\cite{geomspannet}, page 268 of~\cite{williamson2011design}.    
    The only full analysis that the authors are aware of which manages to avoid this matter is by Har-Peled~\cite{Har-Peled11} for $d=2$.} on $a_1$.
\end{remark}

Suppose that $h$ has level $i$, and let us fix a facet $F$ of a cell at
level $i$ in $\mathrm{D}(\ba)$ (with cutoff $\mu=0$) where $F\subset h$. Note that $F$ is a $(d-1)$-dimensional hypercube, so $F$ is
actually a \emph{cell} in the dissection $\mathrm{D}(\mathbf{a'})$. The side length of $F$ is $L/2^i$.
Next, we will change $\pi$ so that the resulting tour satisfies Definition~\ref{def:simple} on $F$:
If the tour already satisfies (a) we do not have to do anything, so let us
assume for now that it does not satisfy (a) for the facet $F$.

We apply the $(d-1)$-dimensional version of Lemma~\ref{lem:lighttree} to the
set of crossings $I(\pi,h) \subseteq h$ with dissection
$\mathrm{D}(\mathbf{a'})$ with cutoff $\nu$ where $\nu$ is the largest integer such that $2^\nu < \frac{1}{r^{2d-2}}$ and all vertices of $I(\pi,h)$ fall in different cells of $\mathrm{D}(\mathbf{a'})$. (We reiterate that during this proof of the structure theorem is not intended to be algorithmically efficient.)
We obtain a rooted tree $T_0$ that spans $I(\pi,h)$ and that is $1$-light with
respect to $\mathrm{D}(\mathbf{a'})$. Let $\rho$ denote the root vertex of $T_0$. Let $T$ be the tree whose vertices are the leaves and branching
points of $T_0$, and its edges are the maximal paths of $T_0$ whose internal
vertices have degree $2$. (That is, the drawing of $T$ and $T_0$ consists of the
same set of segments.) Let $X$ be the set of vertices in $T$; note that
$X\supset I(\pi,h)$ and $|X|\leq 2|I(\pi,h)|-2$. We orient every edge in $T$
away from $\rho$. Hence $\rho$ is an ancestor of every vertex in $X$ and leaves
of $T$ have only themselves as descendants.

\begin{remark}\label{rem:depT}
    The tree $T$ and the set $X$ depend only on $\ba'$ and $I(\pi,h)$.
\end{remark}

For a point $x\in X$ let $\cdc(x)$ denote the \emph{closest descendant crossing} of $x$, that is, the closest vertex among $I(\pi,h)$ in the subtree of $x$ in $T$, where the distance is measured along the edges of $T$. In particular, if $x\in I(\pi,h)$ then $\cdc(x)=x$. Notice moreover that the leaves of $T$ are all vertices of $I(\pi,h)$ and therefore $\cdc(.)$ assigns leaves to themselves. See Figure~\ref{fig:cdcpro}.
		
	\paragraph{Construction of $\PT$.}
For a point $x\in I(\pi,h)\setminus\{\cdc(\rho)\}$ we define the \emph{proximity} of $x$, denoted by $\pro_{\ba'}(x)$, to be the sum of arc lengths in $T$ whose targets are in $\mathrm{cdc}^{-1}(x)$. We set $\pro_{\ba'}(\cdc(\rho))=\infty$. We note that $\pro_{\ba'}$ is positive, and we will occasionally use $1/\pro_{\ba'}(\cdc(\rho))=0$. By \Cref{rem:depT} we have that $\pro_{\ba'}(x)$ depends only on $\ba'$ and it is independent of $a_1$.
	Let $T_F$ be the subtree of $T$ induced by $F\cap X$, and let $\rho_F$
	denote the root of $T_F$. We remark that we are using the $1$-lightness of $T$ here: it guarantees that $F\cap X$ induces a subtree in $T$ which is \emph{contained} in $F$. We define $G\subset I(\pi,h)\cap F$ to include a set of (intuitively distant)
    vertices whose proximity is large: we add $\text{cdc}(\rho_F)$ to $G$ as well as the vertices $x \in I(\pi,h)
    \cap F \setminus \{\text{cdc}(\rho_F)\}$ such that $\pro_{\ba'}(x)>L /(2^ir)$. Let $G'\coloneqq G \setminus \{\cdc(\rho_F)\}$.

\begin{figure}[t]
\centering
\includegraphics{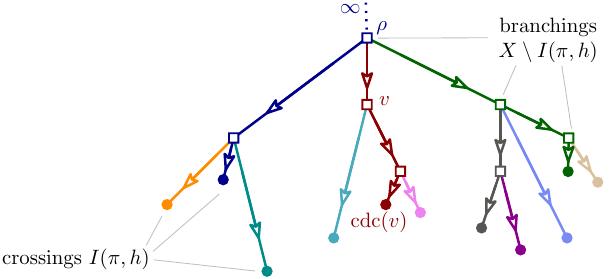}
\caption{Closest descendant crossings ($\cdc(.)$) and proximity in $T$, where crossing nodes are denoted by disks. The closest descendant crossing of each node $x\in X$ (be it a crossing or a branching) is the crossing node of the same color. The proximity of a crossing node $v\in I(\pi,h)$ is the total length of the tree path of the same color, i.e., the total length of the arcs entering $\cdc^{-1}(v)$, except for the dark blue crossing $\cdc(\rho)$ whose proximity is $\infty$.}\label{fig:cdcpro}
\end{figure}
  
  \begin{lemma}\label{lem:pro_sum}
  Let $T^G_F$ be the minimum subtree of $T$ spanned by $G$. Then $\sum_{z \in G'}\pro_{\ba'}(z) \leq \wt(T^G_F)$.
  \end{lemma}
  \begin{proof}  
        For $z \in G$ the arcs ending in $\cdc^{-1}(z)$ form a directed path inside $T_F$; let $P_z$ be this path. Notice that these paths are edge-disjoint.
     Let $\rho^G_F$ denote the root of $T^G_F$. The path $P_{\cdc(\rho_F)}$ contains the arc with target $\rho_F$ as well as the arc with target $\rho^G_F$. By the disjointness of the paths $P_z$ for $z\in G$, we have that every other path stays in $F$ and under $\rho^G_F$. Consequently, each path $P_z$ for $z\in G'$ is contained in $T^G_F$. Thus we have that $\sum_{z \in G'}\pro_{\ba'}(z) = \sum_{z \in G'} \wt(P_z)\leq \wt(T^G_F)$.
       \end{proof}    
    
    If
    $|G|=1$, that is, when $G=\{\cdc(\rho_F)\}$,
	we set $\PT^*=T_F$. When $|G|\geq 2$, then
	we change $T_F$ to get a forest $\PT^*$ as follows.
    We say that an arc $(u,v)$ of $T_F$, where $v$ is a child of $u$, is \emph{bad} if $\cdc(u) \neq
    \cdc(v)$ and $\cdc(v) \in G$.
    Delete every bad edge from $T_F$, and for any rooted tree in the remaining forest, iteratively remove a root if it has degree one and it is not in $I(\pi,h)$.
    The resulting forest $\PT^*$ has exactly $|G|$ connected components, and for each component the root is either in $I(\pi,h)$ or it has at least two children.
    Finally, we will connect each vertex of $G$ to a point of a grid as follows.
    To define the grid,
	let $q$ be a positive integer such that $(q-1)^{d-1} < (r/2)^{2d-2}/|G|
	\leq q^{d-1}$, and let $g=q^{d-1}$ (since $r \ge 128d$ such $q$ exists).
	Thus 	
    \begin{equation}\label{eq:gbound}
        \frac{(r/2)^{2d-2}}{|G|} \leq g < \frac{r^{2d-2}}{|G|},
    \end{equation}
    where the second inequality follows, since we assumed $r \ge 128d$.
    
\begin{claim}\label{cl:gGbound}
$|G|<(16dr)^{d-1}$ and $g> \left(\frac{r}{64d}\right)^{d-1}$, in particular, $g> 2^{d-1}$.
\end{claim}
\begin{claimproof}
Recall from \Cref{lem:pro_sum}
that $T^G_F$ is the subtree of $T$ spanned by $G$. By \Cref{lem:lighttree}(c) in $T$ with $Q=G$ gives $\wt(T^G_F)\leq \frac{8dL}{2^i}|G|^{1-1/(d-1)}$. \Cref{lem:pro_sum} implies that
$\sum_{z \in G'}\pro_{\ba'}(z) \leq \frac{8dL}{2^i}|G|^{1-1/(d-1)}$. By the definition of $G$, we have that for any $z\in G'$ the proximity $\pro_{\ba'}(z)>\frac{L}{2^i r}$. Thus we get
\begin{equation}\label{eq:simpleGbound}
(|G|-1)\frac{L}{2^i r} < \frac{8dL}{2^i}|G|^{1-1/(d-1)}.
\end{equation}
For the second inequality, if $|G|=1$ then \eqref{eq:gbound} implies that $g\geq (r/2)^{2d-2}>\left(\frac{r}{64d}\right)^{d-1}$ as $r\geq 8$.
Assuming $|G|\geq 2$, we have $2(|G|-1)\geq |G|$, thus after simplifying \eqref{eq:simpleGbound} we get 
\[|G|^{1/(d-1)}< 16dr \quad \Rightarrow \quad |G|< (16dr)^{d-1}.\]
Using \eqref{eq:gbound} we get $g>\frac{(r/2)^{2d-2}}{(16dr)^{d-1}}=\left(\frac{r}{64d}\right)^{d-1}$.
Since $r\geq 128d$, this directly implies $g>2^{d-1}$.
\end{claimproof}    
    
\begin{figure}[t]
\centering
\includegraphics{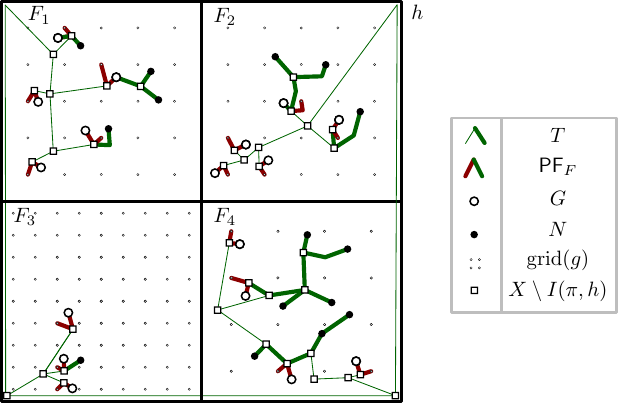}
\caption{Construction of the forests $\PT$ in four faces in a plane $h$ of level
$1$ in the quadtree. The green (thin and thick) edges are a schematic picture of
$T$ (note that the actual edges consist only of axis-parallel segments), and the
thick (red and green) edges indicate the forests $\PT$.}\label{fig:3dpatching}
\end{figure}

     Consider
    now the grid $\grid(F,g)$ in $F$. Let $T_{\grid(F,g)}$ be the subtree of $T$
    induced by all of its vertices in $F$ plus a new edge from each point $x$ of
    $\grid(F,g)$ to $\ori(C_x)\in V(T_{\grid(F,g)})$ where $C_x$ is the bottom
    level cell of side length $2^\nu$ containing $x$. Observe that our setting of the cutoff $\nu$ ensures that each cell of $\mathrm{D}(\ba')$ contains at most two vertices from $\grid(F,g)\cup I(\pi,h)$. Notice that
    $T_{\grid(F,g)}$ is a supergraph of $T_F$ and it is 1-light
    with respect to $\mathrm{D}(\ba)$, as it is the tree one gets by applying Lemma~\ref{lem:lighttree} for the set $I(\pi,h)\cup\grid(F,g)$ on the subdivision $\mathrm{D}(\ba')$.
    
    For a vertex $x\in G$ let $C_x$ be the smallest cell of $\mathrm{D}(\ba')$ that contains $x$ and at least one point $y$ from $\grid(F,g)$. We connect $x$ and $y$ along the tree path in $T_{\grid(F,g)}$. Let $\PT$ be the resulting forest.     See Figure~\ref{fig:3dpatching} for a schematic illustration of the
    construction for $d=3$.

\begin{lemma}\label{lem:gridconnect_cost}
The length of the path connecting $x\in G$ to the point $y$ in $\grid(F,g)$
along $T_{\grid(F,g)}$ is less than $\frac{128dL|G|^{1/(d-1)}}{r^2 2^i}$.
\end{lemma}    

\begin{proof}
For a vertex $x\in G$ consider the cell $C_x$ that is the smallest cell of $\mathrm{D}(\ba')$ that contains $x$ and at least one point $y$ from $\grid(F,g)$. Recall that $\grid(F,g)$ is a grid with minimum point distance $s:=\frac{L}{(g^{1/(d-1)}-1)2^i}$. 
Then the side length of $C_x$ is at most $2s$. \Cref{cl:gGbound} implies that $g>2^{d-1}$, thus $g^{1/(d-1)}-1\geq g^{1/(d-1)}/2$, so $s\leq \frac{2L}{g^{1/(d-1)}2^i}$.  By \Cref{lem:lighttree} with $Q=(x,y)$ in the tree $T_{\grid(F,g)}$, we have that the distance of $x$ and $y$ in $T_{\grid(F,g)}$ is at most 
\[8ds\cdot2^{1-1/d}<\frac{32dL}{g^{1/(d-1)}2^i}\leq \frac{128dL|G|^{1/(d-1)}}{r^2 2^i},\] by the first inequality in~\eqref{eq:gbound}. 
\end{proof}

    \paragraph{Patching along $\PT$.} Next, we change the salesman tour $\pi$ by
    applying Lemma~\ref{lem:patch} on each connected component of $\PT$
    that has a vertex from $I(\pi, h)$. 
    This restricts the tour to cross
    the hyperplane $h$ only at some vertex of $I(\pi,h)\cap F$ when
    $G=\{\cdc(\rho_F)\}$ or in at most $|G|$ points from $\grid(F,g)$.
    Additionally, if there is a point $p\in h$ and the patched tour
    crosses $h$ more than twice at $p$, we apply Lemma~\ref{lem:patch}
    with $X=T=\{p\}$ and reduce the number of crossings at $p$ to at most two
    without increasing the length of the tour. This finishes the description of
    the construction of $\pi'$ promised by the theorem. We note that the cost
    of the patching in $F$ is at most $\Oh(\wt(\PT))$ by
    Lemma~\ref{lem:patch}. After processing $F$ the resulting tour restricted to $F$ is a $\mathrm{D}(\ba')$-aligned geometric multigraph, as each vertex in the relative interior of $F$ is assigned to the level of $F$, and we
    can add length-$0$ edges at crossings of the relative boundary of $F$.

    Observe that this patching can introduce new crossings in hyperplanes
    perpendicular to $h$. Let $C$ be a cell with face $F$. New crossings can
    only be introduced between two descendants of $C$ that are incident to $F$.
    Recall that the patching of $F$ occurs along some subforest of the $\mathrm{D}(\ba')$-aligned base-line tree, so in particular we have the following.
    \begin{claim}
        Let $C_1,C_2$ be a pair of $d$-dimensional sibling cells in $\mathrm{D}(\ba)$ that are descendants of $C$ and there
        is a new crossing introduced between them by $\PT$. Then, this crossing
        is in a shared corner of $C_1$ and $C_2$.
    \end{claim}
    \begin{claimproof}
        Let $x$ be a point of $T_{\grid(F,g)}$ that gives an intersection between $C_1$ and $C_2$ on the shared boundary hyperplane $h'$ of $C_1$ and $C_2$. Observe that the level of $h'$ is more than $i$, therefore $C_1$ and $C_2$ also have level more than $i$, and they are thus descendants of $C$.
        The point $x\in F\cap h'$ is a point of $\PT$, thus
        \Cref{lem:lighttree}(b) implies that $x$ is a shared corner of the cells
        $C_1\cap F$ and $C_2\cap F$ of the dissection $\mathrm{D}(\ba')$. This corner is also a shared corner of $C_1$ and $C_2$.        
    \end{claimproof}

To construct the patched tour $\pi'$, apply the above patching on $\pi$ in each
hyperplane of the dissection $\mathrm{D}(\ba)$. We emphasize that the patching is not iterative, it is always applied on the original tour $\pi$.

We note that $\pi'$ may in fact have edges within grid hyperplanes due to the patching. We make infinitesimal perturbations to $\pi'$ to ensure that any positive-length edge assigned to a cell is shifted to the interior of the cell. As a result, the intersections of $\pi'$ with any grid hyperplane is a set of points.

    To see that the obtained tour $\pi'$ is an $r$-simplification of $\pi$,
    note that if $\pi'$ crosses a facet $F$ outside a grid\footnote{Recall that the corners of $F$ are included in grids of each possible granularity.}, then $|G|=1$
    and the crossing of $\pi$ must have happened at some point of
    $I(\pi,h)\cap F=\pi\cap F$.
    Otherwise $\pi'$ crosses in at most $|G|$ non-corner points from $\grid(F,g)$.
    The inequality chain \eqref{eq:gbound} gives $g < r^{2d-2}/|G|$,  thus $g < r^{2d-2}/|G| \leq r^{2d-2}/|\pi'\cap F^*|$. Finally, if some point of $\pi'\cap F^*$ appears more than $3$ times along $\pi$, then we can use~Lemma~\ref{lem:patch} at this single point to get a new tour of the same length whose multiplicity at this point is reduced to at most $2$. Thus $\pi'$ is an $r$-simplification of $\pi$.

\subsection{Analysis of the expected length of \texorpdfstring{$\pi'$}{pi'}}

	Let $\cost(h)$ denote the increase of the salesman tour during the iteration corresponding to the hyperplane $h$.
    Our main effort will lie in proving that $\Ex{\cost(h)} \leq
    \Oh(\frac{d^2}{r} \cdot  |I(\pi,h)|)$. This would be sufficient to prove the theorem since it allows us to conclude that
	\begin{equation}\label{eq:OPTtogridhyp}
        \sum_{h\text{:grid hyperplane}} \Ex{\cost(h)} \leq
        \sum_{h\text{:grid hyperplane}} 
        \Oh\left(d^2 \cdot \frac{|I(\pi,h)|}{r}\right)
= \Oh(d^{5/2}\cdot \wt(\pi)/r),
	\end{equation}
	where the second inequality is by Lemma~\ref{lem:crossingsvslength}.

We note here that in case of $d=2$ the following analysis can be simplified. First, the patching forest $\PT$ is a line segment, and the proximity of a point is $\pro_{\ba'}$ is independent of the shift, i.e., one could omit the subscript $a'$. The set $X$ is equal to the set $I(\pi,h)$ of crossings in $h$. Finally, the function $\cdc(.)$ is the identity function.

\paragraph{Amortized patching costs.}

By \Cref{rem:dep_i}, we have that the level $i=i_{a_1}$ of $h$ depends only on $a_1$ and it is independent of $\ba'$. With each $x \in
    I(\pi,h)$ we associate the following coefficients $\alpha_{i,\ba'}(x)$ that
    represent the \emph{amortized expected patching cost} due to $x$ if the level of $h$ is $i$:
	\[
		\alpha_{i,\ba'}(x) =
		\begin{cases}
            \displaystyle \pro_{\ba'}(x) \cdot 2^i/L,	& \textbf{if } \pro_{\ba'}(x)\leq L/(2^ir),\\[0.75em]
            \displaystyle  \frac{d^2 L}{\pro_{\ba'}(x) \cdot 2^i r^2},	& \textbf{if } L / (2^ir) < \pro_{\ba'}(x) \leq L/2^i,\\[1.25em]
			\displaystyle 0,	& \textbf{if }  L / 2^i < \pro_{\ba'}(x).
		\end{cases}
	\]
	In case of $d=2$, we note that $\alpha_{i,\ba'}(x)$ is independent of $\ba'$ and could simply be written as $\alpha_i(x)$.
	
	Next, we will show that the expected cost of patching from
    Lemma~\ref{lem:patch} for a fixed  cell $F$ whose hyperplane has level $i$ is at most
    $\Oh(\frac{L}{2^i} \cdot \sum_{x
    \in F \cap X} \alpha_{i,\ba'}(x))$.
	
	\paragraph{The weight of $\PT$.}
	Now we use the special properties of $T$ guaranteed by Lemma~\ref{lem:lighttree} to show the following.
	\begin{lemma}
		It holds that:
        \[\wt(\PT) = \Oh\left( \frac{L}{2^i}\cdot \sum_{x \in F
        \cap I(\pi,h)}\alpha_{i,\ba'}(x)\right).\]
	\end{lemma}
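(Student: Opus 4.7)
The plan is to split $\wt(T'_F)$ into (i) the weight of the edges of $T_F$ that survive the construction and (ii) the total length of the new segments connecting each $z\in G$ to its nearest portal in $\grid(F,g)$, and then to bound each piece separately by $\Oh((L/2^i)\sum_x \alpha_i(x))$.

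For the kept edges I only need the definitions: each such edge has the form $(x,p(x))$ with $x,p(x)\in F$ and $x\notin G$, and by definition of $G$ this forces $\pro(x)\le L/(2^i r)$. This puts $x$ in the first branch of the piecewise definition of $\alpha_i$, giving $\pro(x) = \alpha_i(x)\cdot L/2^i$, so summing yields a contribution of at most $(L/2^i)\sum_{x\notin G}\alpha_i(x)$, already in the desired form.

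For the new grid segments I first bound each one by $\Oh((L/2^i)/g^{1/(d-1)})$, the maximum distance from a point in $F$ to its nearest point of $\grid(F,g)$. Substituting the lower bound $g\ge r^{2d-2}/|G|$ from~\eqref{eq:gbound} and summing over $z\in G$ gives a total of $\Oh((L/2^i)\cdot |G|^{d/(d-1)}/r^2)$, so it remains to establish $|G|^{d/(d-1)}/r^2 = \Oh(\sum_{z\in G}\alpha_i(z))$. This is where Lemma~\ref{lem:lighttree}(b) enters in a crucial way: applied in dimension $d-1$ to the cube $F$ with $Q=G$, it produces a subtree of $T$ (which by $1$-lightness lies entirely inside $F$) that spans $G$ and has weight at most $4(d-1)(L/2^i)|G|^{(d-2)/(d-1)}$. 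A short $1$-lightness argument then shows that for each $z\in G$ with $p(z)\in F$ the entire edge $(z,p(z))$ lies in $F$, so that $\sum_{z\in G}\pro(z)$ is bounded, up to a constant, by this subtree weight. The AM--HM inequality then gives $\sum_{z\in G}1/\pro(z)\ge |G|^2/\sum_{z\in G}\pro(z) = \Omega(|G|^{d/(d-1)}\cdot 2^i/L)$, and substituting into the middle branch of the definition of $\alpha_i$ yields the sought-after $\sum_{z\in G}\alpha_i(z) = \Omega(|G|^{d/(d-1)}/r^2)$.

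The main obstacle I expect is handling the ``far'' vertices $z\in G$ with $\pro(z)>L/2^i$, for which $\alpha_i(z)=0$ contributes no budget on the right while the corresponding grid segment still costs $\Oh((L/2^i)/g^{1/(d-1)})$ on the left. My plan is to use $1$-lightness to limit these: any $z$ with $p(z)\notin F$ uses up one of the $2(d-1)$ faces of $F$, so there are only $\Oh(d)$ of them; and any $z$ with $p(z)\in F$ and very long $\pro(z)$ can be charged directly against the subtree weight from Lemma~\ref{lem:lighttree}(b) applied to $\{z,p(z)\}$, which already bounds $\pro(z)$ by $\Oh(L/2^i)$ so that only a constant overshoot occurs. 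The resulting $\Oh(d)$ stray grid segments then contribute only $\Oh((L/2^i)/r^2)$, which is absorbed into the $\Oh$ on the right-hand side using any single vertex in the middle regime (or trivially into the kept-edge term).
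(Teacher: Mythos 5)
Your proof follows essentially the same route as the paper's: the same split of $\wt(T'_F)$ into retained tree edges (charged to the first branch of $\alpha_i$) and new grid segments, the same per-segment bound via~\eqref{eq:gbound}, and the same combination of Lemma~\ref{lem:lighttree}(b) with $Q=G$ and the AM--HM inequality to convert $|G|^{d/(d-1)}/r^2$ into $\sum_{z\in G}1/\pro(z)$ and hence into the middle branch of $\alpha_i$. The edge cases you isolate in your final paragraph (the root, the $\Oh(d)$ vertices whose tree-parent leaves $F$, and the vertices with $\pro(z)>L/2^i$ for which $\alpha_i(z)=0$) are genuine, but the paper's own proof handles only the root explicitly (via $|G|^2\le 4|G'|^2$) and silently identifies the remaining terms $\frac{d\sqrt{d}L}{2^ir^2\pro(z)}$ with $\alpha_i(z)$, so your treatment is, if anything, more careful than the original.
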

	\begin{proof}
	Let $\rho_F$ be the root of $T_F$.
	Consider the following subset of $F \cap X$:  
	\begin{align*}
        N &\coloneqq \left\{ x \in F \cap I(\pi,h) : \pro_{\ba'}(x) \leq \frac{L}{2^ir}  \right\} \setminus\{\cdc(\rho_F)\}.
	\end{align*}

    Therefore $N$ is a set of (near) vertices with \emph{small proximity}. As
    $G$ consists of vertices with large proximity (i.e., it consists of $\cdc(\rho_F)$ and all vertices in $z \in I(\pi,h) \cap F$
    that satisfy $\pro_{\ba'}(z) > L /(2^ir) $) we have that $I(\pi,h) \cap F
    = N \cup G$.
    
    \begin{claim}\label{cl:PTweight}
    \[\wt(\PT)\leq \begin{cases}
    2\cdot \sum_{y \in N} \pro_{\ba'}(y) & \text{ if }|G|=1\\
    2\cdot \sum_{y \in N} \pro_{\ba'}(y) + |G|^{d/(d-1)}\cdot \frac{128dL}{r^2 2^i} & \text{ otherwise.}
    \end{cases}\]
    \end{claim}
\begin{claimproof}
If $d=2$, then each component of $\PT$ consists of a segment connecting the crossings whose length is $\sum_{y \in N} \pro(y)$, and a segment connecting an endpoint of this segment to the nearest grid point, which is of length at most $\frac{L|G|}{r^2 2^i}$. If $|G|=1$, then this latter connection to the grid is not made. When $|G|\geq 2$, then the total cost of these grid connections is therefore  $|G|^2\frac{L}{r^2 2^i}$. This concludes the proof for $d=2$.

For general $d$, recall that $\PT$ consists of two types of edges, those that connect
vertices of $G$ to points of $\grid(F,g)$ and the edges of $\PT^*$.
To bound $\wt(\PT^*)$, let $T_x$ be the tree of $\PT^*$ containing a
given $x\in G$, and let $\rho_x$ be the root of the subtree $T_x$.
We claim that $\wt(T_x[\rho_x,x]) + \sum_{y\in V(T_x)\cap N}
\pro_{\ba'}(y) = \wt(T_x)$, where $T_x[\rho_x,x]$ is the path in
$T_x$ connecting $\rho_x$ to $x$. To see this, notice that all
vertices of $T_x$ are covered by $\bigcup_{y\in N\cap V(T_x)}
\cdc^{-1}(y)$ except those in $V(T_x[\rho_x,x])$, as
$V(T_x[\rho_x,x])\subseteq \cdc^{-1}(x)$. Thus by the
definition of $\pro_{\ba'}(.)$ we have that arcs whose
lengths are included in the sum $\sum_{y\in V(T_x)\cap I(\pi,h)}
\pro_{\ba'}(y)$ include each arc of $E(T_x)\setminus E(T_x[\rho_x,x])$.

Notice that $x$ is the vertex among $V(T_x)\cap I(\pi,h)$ that is
closest to $\rho_x$ in $T_x$, and $\pro_{\ba'}(x)$ is an upper bound
on this distance.

We claim that $\wt(T_x) \leq 2\cdot \sum_{y \in N\cap V(T_x)} \pro_{\ba'}(y)$. This trivially holds when $N\cap V(T_x)=\emptyset$ since that implies that $T_x$ is a single-vertex tree. 

Suppose now that $N\cap V(T_x)\neq \emptyset$. It is sufficient to show that $\wt(T_x[\rho_x,x])\leq \max_{y}\pro_{\ba'} (y)$. This is immediate when $\rho_x=x$ because then $\wt(T_x[\rho_x,x])=0$. Otherwise, the definition of $\PT^*$ implies that $\rho_x$ has a child $z$ such that $z\not \in T_x[\rho_x,x]$. Then $\pro_{\ba'}(\cdc(z))=\wt(T_x[\rho_x,z])$, and by the definition of $\cdc(.)$, we have $T_x[\rho_x,x]\leq \wt(T_x[\rho_x,z])$. Consequently, $\wt(T_x[\rho_x,x])\leq \max_{y}\pro_{\ba'} (y)$ holds, as the right hand side includes the term for $\cdc(z)\in N\cap V(T_x)$, and thus $\wt(T_x) \leq 2\cdot \sum_{y \in N\cap V(T_x)} \pro_{\ba'}(y)$ holds. Summing over each tree $T_x$ of $\PT^*$, we get
\[\wt(\PT^*)\leq 2\cdot \sum_{y \in N} \pro_{\ba'}(y).\]

To bound the connections of $G$ to the grid points, recall that no connections are made when $|G|=1$. When $|G|\geq 2$, then \ref{lem:gridconnect_cost} implies that each connection is of length less than $\frac{128dL|G|^{1/(d-1)}}{r^2 2^i}$, so the total length of these connections is less than $|G|^{d/(d-1)}\cdot \frac{128dL}{r^2 2^i}$, which concludes the proof of our claim.
\end{claimproof}    
    
    We consider the case $|G|=1$ (so $G=\{\cdc(\rho_F)\}$). By \Cref{cl:PTweight} we have 
    \[
        \wt(\PT) \leq 2\sum_{x \in N} \pro_{\ba'}(x) = 2\cdot \frac{L}{2^i} \sum_{x \in
        N}\alpha_{i,\ba'}(x),
    \]
    and the lemma follows. Thus, from now we assume $|G| \geq 2$.    

    Next, we upper bound the size
    of $G$.

    \begin{claim}\label{eq:boundz}
    If $|G|\geq 2$, then
        \begin{equation*}
            |G|^{d/(d-1)} \leq 32 \frac{d L}{2^i} \sum_{z \in G}\frac{1}{\pro_{\ba'}(z)}.
        \end{equation*}
    \end{claim}
    \begin{claimproof}
        Recall that $T^G_F$ is the smallest subtree of $T$ that contains $G$ and $G'\coloneqq G \setminus \{\cdc(\rho_F)\}$.
        By property (c) of Lemma~\ref{lem:lighttree} of $T$ with set $Q\coloneqq G$
        and cell $F$, we have that $\wt(T^G_F)\leq 
        \frac{8d L}{2^i}|G|^{1-1/(d-1)}$.
        Thus by \Cref{lem:pro_sum} we have that $\sum_{z \in G'}\pro_{\ba'}(z)\leq \wt(T^G_F)\leq \frac{4dL}{2^i}|G|^{1-1/(d-1)}$.
        Next, we apply the 
        Cauchy-Schwartz inequality to the vectors
        $\big(\sqrt{\pro_{\ba'}(z)}\big)_{z\in G'}$ and $\big(\sqrt{1/\pro_{\ba'}(z)}\big)_{z\in G'}$, which gives
        \[
            |G'|^2 \leq  \left(\sum_{z \in G'} \pro_{\ba'}(z)\right) \cdot  \left(\sum_{z \in
            G'}\frac{1}{\pro_{\ba'}(z)}\right) \leq \frac{8d L}{2^i}|G|^{1-\frac{1}{d-1}} \cdot \left( \sum_{z \in
            G'}\frac{1}{\pro_{\ba'}(z)}\right).
        \]
        Since $|G| \geq 2$, we have that $|G|^2 = (|G'|+1)^2 \leq 4|G'|^2$
        and conclude the proof of the claim.
    \end{claimproof}

    Now, we can upper bound $\wt(\PT)$ by bounding the weight of each the edge from $x$
    to the parent of $x$ in $\PT$ as follows:
	\begin{align*}
        \wt(\PT) &\le 2\cdot \sum_{y \in N} \pro_{\ba'}(y) + \sum_{z \in G}
        \frac{128 dL}{2^ir^2}\cdot |G|^{1/(d-1)}& \hfill \text{\textcolor{gray}{(by~\Cref{cl:PTweight})}}\\
        &= 2\cdot \sum_{y \in N} \pro_{\ba'}(y) + \frac{128 dL}{2^i r^2}\cdot|G|^{\frac{d}{d-1}}&\\
        &\le 2\cdot  \sum_{y \in N} \pro_{\ba'}(y) +
                2^{12}d^2\left(\frac{L}{2^ir}\right)^2 \sum_{z \in
                G}\frac{1}{\pro_{\ba'}(z)}& \hfill \text{\textcolor{gray}{(by~\cref{eq:boundz})}}\\
				&= \frac{L}{2^i} \cdot \Oh\left(\left(\sum_{y \in N}
                    \frac{2^i}{L} \cdot \pro_{\ba'}(y)\right) + \left(\sum_{z \in
                G}\frac{d^2}{r^2}\frac{L}{2^i}\frac{1}{\pro_{\ba'}(z)}\right)\right)				\\
                &=  \Oh \left( \frac{L}{2^i}\sum_{x \in I(\pi,h) \cap F}\alpha_{i,\ba'}(x)\right).&\qedhere
	\end{align*}
	\end{proof} 
	
	\paragraph{Expected patching cost analysis.}
	Recall that by Lemma~\ref{levelprob}, the hyperplane $h$ gets level $i$ with probability
    $2^{i-1}/L$, where $i=i(a_1)$ is the level of $h$.
    Thus we have the following.
\begin{align}\label{eq:sumswap}
		\mathbb{E}[\cost(h)] &= 
		\frac{1}{L^d}\sum_{\ba \in [L]^d} \cost_{\ba}(h)\nonumber\\	
		&= \frac{1}{L^d}\sum_{a_1 \in [L]} \sum_{\ba' \in [L]^{d-1}} \sum_{\substack{F\text{ is facet of $C$ in } h,\\\text{level of }
        C \text{ is } i=i(a_1)}} \wt(\PT)\nonumber\\
        &= \frac{1}{L^d}\sum_{a_1 \in [L]} \sum_{\ba' \in [L]^{d-1}} \sum_{\substack{F\text{ is facet of $C$ in } h,\\\text{level of }C \text{ is } i=i(a_1)}}
        \Oh\left( \frac{L}{2^i}\sum_{x\in I(\pi,h)\cap F} \alpha_{i,\ba'} (x)\right)\\
        &= \Oh\left( \frac{1}{L^d} \sum_{\ba' \in [L]^{d-1}} \sum_{a_1 \in [L]} \frac{L}{2^i} \sum_{\substack{F\text{ is facet of $C$ in } h,\\\text{level of }C \text{ is } i=i(a_1)}}
       \left( \sum_{x\in I(\pi,h)\cap F} \alpha_{i,\ba'} (x)\right)\right) \nonumber\\
         &= \Oh\left( \frac{1}{L^d} \sum_{\ba' \in [L]^{d-1}} \sum_{a_1 \in [L]} \frac{L}{2^i} \sum_{x\in I(\pi,h)} \alpha_{i,\ba'} (x)\right)\nonumber
\intertext{Notice that for any fixed $\ba'$, the value of $\alpha_{i,\ba'}$ depends only on the level $i=i(a_1)$ of $h$, and there are $2^j$ values of $a_1$ where $i=j$. Thus, we can write}
         \mathbb{E}[\cost(h)]
         &= \Oh\left( \frac{1}{L^d} \sum_{\ba' \in [L]^{d-1}} \sum_{a_1 \in [L]} \frac{L}{2^i} \sum_{x\in I(\pi,h)} \alpha_{i,\ba'} (x)\right)\nonumber\\
         &= \Oh\left( \frac{1}{L^d} \sum_{\ba' \in [L]^{d-1}} \sum_{j =0}^{1+\log L} 2^j \frac{L}{2^j} \sum_{x\in I(\pi,h)} \alpha_{j,\ba'} (x)\right)\nonumber\\
         &= \Oh\left( \frac{1}{L^{d-1}} \sum_{\ba' \in [L]^{d-1}} \sum_{x\in I(\pi,h)} \sum_{j =0}^{1+\log L} \alpha_{j,\ba'} (x)\right)\nonumber
\end{align}
    In the innermost sum, for a fixed $\ba'$ and $x$ we have that
	\[
		\sum_{j=0}^{1+\log L} \alpha_{j,\ba'}(x)  = \sum_{j=0}^{\theta_1}
        \alpha_{j,\ba'}(x)+\sum_{j=\theta_1+1}^{\theta_2} \alpha_{j,\ba'}(x) \leq \frac{2}{r}+\frac{2d^2}{r} < \frac{3d^2}{r},
\]
    where we have 
    set $\theta_1 \coloneqq \log\left(\frac{L}{ \pro_{\ba'}(x)\cdot r}\right)$ and
    $\theta_2 \coloneqq \log\left(
    \frac{L}{ \pro_{\ba'}(x)} \right)$, and used the bound on the sum of both geometric series.
	Consequently,
    $\mathbb{E}[\cost(h)]= \Oh\big(\frac{d^2|I(\pi,h)|}{r}\big)$. We can now conclude the proof of \Cref{thm:struct} with~\eqref{eq:OPTtogridhyp}.

\section{Approximate TSP in \texorpdfstring{$\mathbb{R}^d$}{d-dimensional space}}
\label{sec:algorithm}

In this section we prove the algorithmic part of Theorem~\ref{thm:tsp}. The first few steps of the
algorithm are the same as in Arora's algorithm~\cite{Arora98}, as outlined in Section~\ref{sec:etsp-in-plane}.

In \emph{Step 1}, we perturb points and assume that $P
\subseteq [L]^d$ for some integer $L = \Oh(n \sqrt{d}/\eps)$ that is
a power of $2$. Then in \emph{Step 2} we pick a uniform random shift $\ba
\in \{0,\ldots,L\}^d$ and construct a compressed quadtree. 

In \emph{Step 3} we use the following result by Rao and
Smith~\cite{RaoS98} (the result can also be obtained by applying the procedure \textsc{PATCH} from~\cite{geomspannet} to the graph obtained from~\cite[Lemma 19.3.2]{geomspannet}). 

\begin{lemma}[\cite{RaoS98}, see also \cite{geomspannet}]\label{lem:spann}
    Let $P\subset \Reals^d$ be a set of $n$ points and let $\ba$ be a  random shift. There is a $\poly(1/\eps)n \log(n)$
    time algorithm that, given $P$ and $\ba$, computes a set of line segments $S$ such that
	\begin{enumerate}
        \item $\mathbb{E}_{\ba}[\wt(\pi^S)-\wt(\pi)] =
            \Oh(\eps \cdot  \wt(\pi))$, where $\pi$ is a shortest tour of $P$ and $\pi^S$ is a shortest tour of $P$ among the tours that use
            only edges from $S$. 
        \item $S$ is $1/\eps^{\Oh(d)}$-light with respect to $\mathrm{D}(\ba)$.
	\end{enumerate}
Moreover, in $\poly(1/\eps) n \log n$ time we can store for each facet $F$ of every cell $C$ of $\mathrm{CQT}(P,\ba)$ all the crossings of $S$ and $F$.
\end{lemma}

Let $\pi$ be the optimum TSP tour on the perturbed point set $P$. Lemma~\ref{lem:spann} gives 
us the set $S$ with the property that (i) there exist $\pi^S$ that uses only
edges from $S$, and (ii) in expectation the extra weight of $\pi^S$ is only
$\Oh(\eps\cdot \wt(\pi))$, and (iii) $\pi^S$ crosses every cell of the quadtree at
most $1/\eps^{\Oh(d)}$ times.
This summarizes all the steps from previous work that we will use in the
algorithm.

We apply Theorem~\ref{thm:struct} to $\pi^S$, which guarantees the existence of
an $r$-simple tour that is a good approximation in expectation and has the
property that quadtree facets that are crossed only once outside the corners will be crossed at the same place
as in $\pi^S$. In \emph{Step 4}, which we describe in full detail in the next section, we find
the optimal $r$-simplification of $\pi^S$ with this property. Similar to Arora's algorithm and
the description in Section~\ref{sec:etsp-in-plane}, we use a dynamic programming
algorithm for this. However, there are two crucial changes. First, we cannot
bound the number of matchings in $d>2$ the same way as we did for $d=2$ since we
used the non-crossing property for this. For this reason, we combine the dynamic
programming with the rank-based approach~\cite{rank-based}. In order to achieve
a more efficient $\Oh(n\log n)$ running time dependence on $n$, we use a portal
set consisting of all crossings with the cell boundary and the set of line
segments $S$ from Lemma~\ref{lem:spann} when the tour has only one crossing
point in a given facet.

\subsection{Dynamic Programming}

We use a dynamic programming algorithm to find an $\Oh(1/\eps)$-simple salesman tour
with respect to the shifted quadtree (in a similar fashion to
Arora~\cite{Arora98}). With high probability, this salesman tour has weight
$(1+\eps)\cdot \wt(\text{OPT})$. The running time of this step is $2^{\Oh(1/\eps^{d-1})}n$.

For a given quadtree cell $C$ let $\partial C$ denote its boundary. Our Structure Theorem, (Theorem~\ref{thm:struct}) guarantees the existence of some
set of active portals $B \subseteq \partial C$ that will be traversed by the tour. Technically, each portal may be used at most twice: for the sake of convenient set notation, we will think of each portal as having two copies that are infinitesimally close, and selecting a subset of these modified portals. In
our subproblem we will fix such a set~$B$ and we are also given a perfect matching
$M$ on $B$. We say that a collection $\cP \coloneqq
\{\pi_1,\ldots,\pi_{|B|/2}\}$ of paths \emph{realizes} $M$ on $B$ if for each
$(p,q) \in M$ there is a path $\pi_i \in \cP$ with $p$ and $q$ as
endpoints.

For each facet $F$ there is a unique maximum facet $\ex(F)$ that is the boundary
of a cell in the compressed quadtree that contains~$F$. Note that when
considering the cell $C$ and one of its facets~$F$, we will place the portals
according to the grids of $\ex(F)$, or potentially at some point in $F\cap S$.
We say that~$B$~is \emph{fine} with respect to~$S$ if for all facets~$F$ of~$C$
we have that either (i) $B\cap F^*=\{p\}$ and $p \in S\cap F$, or (ii) $B$ contains each point of $F \subset
\grid(\ex(F),r^{2d-2}/m_F)$ at most twice, where $|B\cap F^*|\leq 2m_F \leq 2r^{d-1}$.  Note that
the first option is needed because in Definition~\ref{def:simple} we often need
a perfect precision on faces that are crossed at exactly one point apart from their corners.

The subproblems are defined as follows (cf., $r$-multipath problem in Arora~\cite{Arora98}).

\defproblem{$r$-Multipath Problem}
{A nonempty cell $C$ in the shifted quadtree, a portal set $B \subseteq \partial C$ that is fine with $S$, and a perfect matching $M$ on $B$.}
{Find an $r$-simple path collection $\cP_{B,M}$ of minimum total length that satisfies the following properties.
\vspace{-0.5em}
\begin{itemize}
    \setlength\itemsep{0.0em}
    \item The paths in $\cP_{B,M}$ visit all input points inside $C$.
    \item $\cP_{B,M}$ crosses $\partial C$ only through portals from $B$.
    \item $\cP_{B,M}$ realizes the matching $M$ on $B$.
\end{itemize}
}

Arora~\cite{Arora98} defined the multipath problem in a similar way. The main
difference is that he considers all $B \subseteq \bigcup_F \grid(\ex(F), (r\log n)^{d-1})$,
while our structural Lemma enables us to select $B$ from $\bigcup_F \grid(\ex(F),
r^{2d-2}/m_F)$ of size at most $2\sum_F m_F$ apart from the special case with $1$ crossing on $F^*$. (In this comparison and the subsequent sketch we ignore the corner crossings.)  Arora~\cite{Arora98} showed how to use dynamic programming to solve the
$r$-Multipath problem in time $\Oh(n \cdot m^{\Oh(r)})$ (for $d=2$) which is
already too expensive in our case for $m,r = \Oh(1/\eps)$.
Here and below, the union is taken over all facets $F$ of the given cell $C$.

Before we explain our approach in detail, let us first discuss the natural
dynamic programming algorithm for the $d=2$ case and why it is not fast enough
for $d>2$. The dynamic programming builds a lookup table that contains the costs
of all instances of the multipath problem that arise in the quadtree, which is
exactly the same as in Arora~\cite{Arora98}. When the table is built, it is
enough to output the entry that corresponds to the root of the quadtree. The
number of non-empty cells in the compressed quadtree is $\Oh(n)$. For each facet
$F$ of the cell $C$, we guess an integer $m_F \leq 1/\eps^{d-1}$ that is the
number of times the $\Oh(1/\eps^{d-1})$-simple salesman tour crosses $\ex(F^*)$. Then,
we select a set $B$ of size at most $2m$ by choosing from $\bigcup_F
\grid(\ex(F),r^{2d-2}/m_F)$, where $\sum_F m_F=m$. There are at most 
\[\prod_{F\text{ facet of }C}\;\; \sum_{m_F=2}^{r^{d-1}}
\left( 3^{m_F}\cdot \binom{r^{2d-2}/m_F}{m_F}\right) \leq 2^{\Oh(r^{d-1})}\]
possible choices for the  set of active portals $B$ by Claim~\ref{portal-bin-ineq}, since
the number of facets~$F$ of~$C$ is only~$2d$. Unfortunately, the number of
perfect matchings on the $m$ points is $2^{\Oh(m \log m)}$. Since $m=\Oh(r^{d-1})=\Oh(1/\eps^{d-1})$, this would lead to a
running time of $2^{\Oh(1/\eps^{d-1} \log(1/\eps))}$, which has an extra
$\log(1/\eps)$ factor in the exponent compared to our goal. If $d=2$
we could use the fact that an optimal TSP tour is crossing-free, which allows
one to look for ``crossing-free matchings'' (and their number is at most
$2^{\Oh(m)}$). To reduce the number of possible matchings in $d > 2$, we will
use the \emph{rank-based approach}.

\paragraph{Rank-based approach}

Now we describe how the rank-based
approach~\cite{rank-based,fast-hamiltonicity-jacm} can be applied in this
setting.
We will heavily build upon the methodology and terminology from~\cite{rank-based}, and describe the basics here for the unfamiliar reader.
We follow the notation from~\cite{BergBKK18}.

Let $C$ be the cell of the quadtree and let $B \subseteq \partial C$ be the
set of portals on its boundary with $|B| = m$ (note that $m$ is even). We
define the \emph{weight} of a perfect matching $M$ of $B$ to be the total
length of the solution to the multipath problem on $(C,B,M)$, and denote it by
$\wt(M)$. A weighted matching on $B$ is then a pair $(M,\wt(M))$ for some
perfect matching $M$. Let $\mathcal{M}(B)$ denote the set of all weighted
matchings on $B$. 

We say that two perfect matchings $M_1,M_2$ \emph{fit} if their union is a
Hamiltonian Cycle on $B$. For some set $\mathcal{R}[B] \subseteq \mathcal{M}(B)$
of weighted matchings and a fixed perfect matching $M$ we define
\begin{displaymath}
    \text{opt}(M,\mathcal{R}[B]) \coloneqq \min \left\{ \text{weight}(M') \;
    : \; (M',\text{weight}(M')) \in \mathcal{R}[B] \text{ and } M' \text{ fits } M \right\}
\end{displaymath}
Finally, we say that the set $\mathcal{R}[B] \subseteq \mathcal{M}(B)$ is
\emph{representative} if for any matching $M$, we have
$\text{opt}(M,\mathcal{R}[B]) = \text{opt}(M, \mathcal{M}(B))$. 
The following results is the crucial theorem behind the rank-based approach.

\begin{lemma}[Theorem 3.7 in~\cite{rank-based}]
    \label{lemma:reduce}
    There exists a set $\mathcal{R}^\star[B]$ of $2^{|B|/2-1}$ weighted matchings
    that is representative of $\mathcal{M}(B)$. There is an algorithm
    $\mathtt{Reduce}$ that given some representative set $\mathcal{R}[B]$ of
    $\mathcal{M}(B)$ computes a set $\mathcal{R}^\star$ in $|\mathcal{R}^\star[B]| \cdot 2^{\Oh(|B|)}$
    time.
\end{lemma}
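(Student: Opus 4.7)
The plan is to follow the standard approach from the rank-based framework of Bodlaender, Cygan, Kratsch, and Nederlof. The central object is the \emph{matchings connectivity matrix} $\hh_B \in \{0,1\}^{\mathcal{M}(B)\times \mathcal{M}(B)}$ over $GF(2)$, whose entry $\hh_B[M_1,M_2]$ equals $1$ precisely when $M_1$ and $M_2$ fit, i.e., when $M_1 \cup M_2$ is a Hamiltonian cycle on $B$. The basic observation is that a subset $\Rr[B] \subseteq \mathcal{M}(B)$ is representative (ignoring weights) if and only if the rows of $\hh_B$ indexed by $\Rr[B]$ span the full row space of $\hh_B$. Hence the minimum cardinality of a representative set equals the $GF(2)$-rank of $\hh_B$, and the first milestone is to show that this rank equals exactly $2^{|B|/2-1}$.

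For the rank bound I would exhibit an explicit factorization. Fix some element $b^\star \in B$; any perfect matching $M$ on $B$ is determined by the partner of $b^\star$ together with a perfect matching on the remaining $|B|-2$ points. Using this decomposition one shows that $\hh_B$ factors through a square matrix of size $2^{|B|/2-1}$ indexed by matchings on $B \setminus \{b^\star, b^{\star\star}\}$ for a canonically chosen $b^{\star\star}$, and one verifies directly that this smaller matrix has full rank over $GF(2)$ by an inductive argument on $|B|$. This simultaneously yields the upper bound (via the factorization) and the lower bound (via the full-rank core) on $\mathrm{rank}(\hh_B)$.

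Given the rank characterization, the weighted representative set and the \texttt{Reduce} algorithm follow from the greedy matroid algorithm applied to the row-space matroid of $\hh_B$. Concretely, sort the matchings in the given representative set $\Rr[B]$ in nondecreasing order of weight, and iteratively insert a matching $M$ into $\Rr^\star[B]$ iff its row of $\hh_B$ is linearly independent from the rows of the previously selected matchings. A standard matroid exchange argument shows that the resulting $\Rr^\star[B]$ still spans the row space, so it is representative, has cardinality $2^{|B|/2-1}$, and in fact dominates $\Rr[B]$ pointwise in the $\mathrm{opt}(\cdot,\cdot)$ function, because any cheaper fitting matching from $\Rr[B]$ can be swapped in while preserving linear independence.

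The main obstacle is performing this Gaussian elimination within the promised $|\Rr^\star[B]| \cdot 2^{\Oh(|B|)}$ time, since one cannot afford to write down $\hh_B$ explicitly. The key device from~\cite{rank-based} is the above-mentioned factorization: each row of $\hh_B$ can be encoded as a vector in $GF(2)^{2^{|B|/2-1}}$ of length $2^{\Oh(|B|)}$, computable from the matching in $2^{\Oh(|B|)}$ time. Incremental Gaussian elimination on these short vectors, processing matchings in weight order and terminating once $|\Rr^\star[B]|$ independent rows have been found, then runs within the stated budget. Since all of this machinery is already established as Theorem~3.7 of~\cite{rank-based}, we invoke it as a black box rather than reproduce the bit-level details here.
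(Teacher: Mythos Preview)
The paper does not give its own proof of this lemma; it is quoted verbatim as Theorem~3.7 of~\cite{rank-based} and used as a black box. Your proposal correctly sketches the argument from~\cite{rank-based,fast-hamiltonicity-jacm} (the matchings connectivity matrix, its $GF(2)$-rank of $2^{|B|/2-1}$, and greedy Gaussian elimination in weight order over the short factored vectors) and then likewise defers to the citation, so you are in full agreement with the paper's treatment.
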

In the following, $\mathcal{R} \coloneqq \bigcup_{B} \{\mathcal{R}[B]\}$ for $B
\subseteq \partial C$ that are fine with $S$. For convenience, we say that the family
$\mathcal{R}$ is representative if every $\mathcal{R}[B] \in \mathcal{R}$ is
representative.

Now, we are ready to describe the solution to the $r$-Multipath problem (see
Algorithm~\ref{rank-based-dp} for global pseudocode). The algorithm is given a
quadtree cell $C$ and a set of line segments $S$. The task is to output the
union of sets $\mathcal{R}^\star[B]$ for every $B \subseteq \partial{C}$, where
$B$ has size $m$ and it is fine with $S$, and $\mathcal{R}^\star[B]$ is
representative of $\mathcal{M}(B)$. We start the description of the algorithm
with a case distinction based on the type of the given cell in the quadtree.

In the \emph{base case}, we consider a cell that has one or zero points. Next,
we consider another special case, i.e., the \emph{compressed case}, when the
given cell has only one child in the compressed quadtree. After that, we show
how to combine $2^d$ children in the \emph{non-compressed non-leaf case}
paragraph.

\subparagraph*{Base case} We start with the base case, where the cell $C$ is a
leaf of the quadtree and contains at most one input point. Consider all possible sets $B$ that are fine with $S$. This gives an instance of at most $|B|+1$ points and we can use an exact algorithm to get
a set $\mathcal{R}^\star[B]$ in time $2^{\Oh(|B|)}$. We can achieve that with
a standard dynamic programming procedure: Let us fix $B$ and let $p$ be the
only input point inside $C$ (if it exists).  For every $X \subseteq B$ we will
compute a table $\mathtt{BC}[X]$ that represents $\mathcal{M}(X)$ for every $X
\subseteq B$. Initially $\mathtt{BC}[\emptyset] = \{(\emptyset,0)\}$ and if
$p$ exists, then for every $a,b \in B$ let $\mathtt{BC}[a,b] \coloneqq \{\{(a,b)\},
\dist(a,p) + \dist(p,b)\}$, which means that $p$ is connected to the portals
$a,b \in B$.  Next, we compute $\mathtt{BC}[X]$ for every $X \subseteq B$ with
the following dynamic programming formula.
\begin{displaymath}
    \mathtt{BC}[X] \coloneqq \mathtt{reduce}\left(
        \bigcup_{\substack{u,v \in X\\u\neq v}} \left\{ \big(M \cup \{(u,v)\}, \wt(M) + \dist(u,v)\big)
        \; \Big| \; 
        (M,\wt(M)) \in \mathtt{BC}[X \setminus \{u,v\}]
        \right\}  \right)
\end{displaymath}
For a fixed $B$ this algorithm runs in $\Oh(|\mathcal{R}^\star[B]| \cdot
2^{\Oh(|B|)})$ time and correctly computes $\mathtt{BC}[B]=\mathcal{R}^\star[B]$ (cf.,
\cite[Theorem 3.8]{rank-based} for details of an analogous dynamic programming
subroutine).

\subparagraph*{Compressed case} 

In this case, we are given a large cell $C_\out$ and its only child $C_\ins$.
From the dynamic programming algorithm, we know the solution to $C_\ins$ for all
relevant $B_\ins \subseteq \partial C_\ins$, and the task is to connect these
portals to the portals $B_\out \subseteq C_\out$. We do dynamic programming
similar to the one seen in the base case. We say that a pair $B_\ins, B_\out$
where $B_\out \subset \partial C_\out$ and $B_\ins \subset \partial C_\ins$ are
fine with $S$ if they are individually fine with $S$, and if $\partial C_\out
\cap \partial C_\ins \neq \emptyset$, then $B_\out \supset B_\ins \cap \partial
C_\out$. For each fixed pair $B_\out, B_\ins$ that are fine with $S$, we compute
a table $\mathtt{DBC}[X]$ (mnemonic for \emph{dummy base case}) that represents
$\mathcal{M}(X)$ (where the paths \emph{need not cover} any input points) for
every multiset $X \subseteq B_\out \uplus B_\ins$. Note that the cell $C_\out$
can be regarded as the disjoint union of $C_\ins$ and a \emph{dummy leaf cell}
that has region $C_\out \setminus C_\ins$. Initially, we set
$\mathtt{DBC}[\emptyset] = \{(\emptyset,0)\}$. We can then compute the values for
the dummy base cases $\mathtt{DBC}$ with the same formula as for the base case.
\begin{displaymath}
    \mathtt{DBC}[X] \coloneqq \mathtt{reduce}\left(
        \bigcup_{\substack{u,v \in X\\u\neq v}} \left\{ \big(M \cup \{(u,v)\}, \wt(M) + \dist(u,v)\big)
        \; \Big| \; 
        (M,\wt(M)) \in \mathtt{DBC}[X \setminus \{u,v\}]
        \right\}  \right)
\end{displaymath}

Let $\mathcal{R}^\star$ be the table of these
sets for all $B_\out ,B_\ins$ that are fine with $S$, i.e.,
$\mathcal{R}^\star[X] = \mathtt{DBC}(X)$ for all $X \subseteq
B_\out \cup B_\ins$. In order to get representative
sets $\mathcal{R}[B_\out ]$ of $\mathcal{M}(B_\out )$ for every $B_\out \subset
\partial C_\out$ that
is fine with $S$, we can combine the representative set $\mathcal{R}_\ins^\star$ of
$C_\ins$ and $\mathcal{R}^\star$ (see Algorithm~\ref{alg:compresed}).

\begin{algorithm}
	\SetAlgoLined
	\DontPrintSemicolon
	\SetKwInOut{Input}{Algorithm}
	\SetKwInOut{Output}{Output}
    \Input{$\mathtt{CompressedCase}(C_\ins, C_{\out},\mathcal{R}^\star_\ins)$.
    $C_\out$ is a compressed cell and $C_\ins$ its child}
    Let $\mathcal{R}^\star_\dum[X] \leftarrow \mathtt{DST}(X)$ for every relevant $X \subseteq B_\ins\ \cup B_\out$\\
    \ForEach{$M_\ins \in \mathcal{R}^\star_\ins, M_{\dum} \in \mathcal{R}^\star_{\dum}$}{
            \If{$M_\ins$, $M_{\dum}$ are compatible}{
                Let $M_\out \leftarrow \mathtt{Join}(M_\ins,M_{\dum})$\\
                Let $B_\out \leftarrow$ ground set of $M_\out$  \tcp*{Note that $B_\out \subset \partial C_\out$}
                \If{$B_\out$ is fine with respect to $S$}{
                    Insert $\big(M_\out,\,\wt(M_\ins) + \wt(M_{\dum})\big)$ into $\mathcal{R}[B_\out]$
                }
            }
        }
        \ForEach{$B_\out\subset \partial C_\out$ that is fine with $S$}{ 
            $\mathcal{R}[B_\out] \leftarrow \mathtt{reduce}(\mathcal{R}[B_\out])$
        }
    \Return $\mathcal{R}$
	\caption{Pseudocode for compressed cells}
	\label{alg:compresed}
\end{algorithm}

Observe that for a fixed $B_\out $ and $B_\ins$ this algorithm runs in
$\Oh(|\mathcal{R}^\star[B_\out]| \cdot |\mathcal{R}^\star[B_\ins]| \cdot
2^{\Oh(|B_\out |+|B_\ins|)})$ time and correctly computes the distances and
matchings for every that are $B_\out  \subseteq \partial C_\out$, $B_\ins
\subseteq \partial C_\ins$  that are fine with $S$.

\subparagraph*{Non-compressed non-leaf case} For non-compressed non-leaf cells $C$ we combine
the solutions of cells of one level lower. Let $C_1,\ldots,C_{2^d}$ be the
children of $C$ in the compressed quadtree. Also, let $\mathcal{R}_i$ be the solution to
the $r$-Multipath problem in cell $C_i$ that we get recursively. Next we iterate
over every $M_1 \in \mathcal{R}_1,\ldots,M_{2^d} \in \mathcal{R}_{2^d}$ and
check if matchings $M_1,\ldots,M_{2^d}$ are \emph{compatible}. By this, we mean
that (i) for every neighboring cell $S_i,S_j$ the endpoints of matchings on
their shared facet are the same and (ii) combining $M_1,\ldots,M_{2^d}$ results in
a set of paths with endpoints in $\partial C$.

\begin{algorithm}
	\SetAlgoLined
	\DontPrintSemicolon
	\SetKwInOut{Input}{Algorithm}
	\SetKwInOut{Output}{Output}
    \Input{$\mathtt{MultipathProblem}(C,S,r)$}
    \Output{Family $\mathcal{R}$, which is the union of sets $\mathcal{R}[B]$ of weighted matchings that represent $\mathcal{M}(B)$ for each $B$ that is fine with $S$}
    \LineIf {$|C\cap P| \le 1$}{$\mathcal{R} \leftarrow$ base case with one or no points}\\
    \ElseIf{$C$ is compressed}{
        Let $C_{+}$ be the only child of $C$ and $\mathcal{R}_{+}$ solution on $C_{+}$\\
        $\mathcal{R} \leftarrow \mathtt{CompressedCase}(C,C_{+},\mathcal{R}_{+})$ 
    }
    \Else{
        \label{uca}Let $C_1,\ldots,C_{2^d}$ be the children of $C$\\
        Let $\mathcal{R}_i \leftarrow \mathtt{MultipathProblem}(C_i,S,r)$\\
        \ForEach{$M_1 \in \mathcal{R}_1, \ldots, M_{2^d} \in \mathcal{R}_{2^d}$\label{line:loop}}{
            \If{$M_1,M_2$\ldots and $M_{2^d}$ are compatible}{
                Let $M \leftarrow \mathtt{Join}(M_1,\ldots,M_{2^d})$, let $B \leftarrow$ ground set of $M$\label{line:insert}\\
                \If{ $B$ is fine with respect to $S$}{
                    Insert $\big(M,\,\wt(M_1) + \ldots + \wt(M_{2^d})\big)$ into $\mathcal{R}$
                }
            }
        }
        \ForEach{$B\subset \partial C$ that is fine with $S$\label{line:loop2}}{ 
           \label{ucb} $\mathcal{R}[B] \leftarrow \mathtt{reduce}(\mathcal{R}[B])$
        }
    }
    \Return $\mathcal{R}$
	\caption{Pseudocode of the dynamic programming for the Multipath problem}
	\label{rank-based-dp}
\end{algorithm}

Next, if the matchings $M_1, \ldots, M_{2^d}$ are compatible, we \emph{join}
them (join can be thought of as $2^d - 1$ joins of matchings defined in
\cite{rank-based}). This operation will give us the matching obtained from $M_1
\cup \ldots \cup M_{2^d}$ by contracting degree-two edges if no cycle is
created, and gives us matchings on the boundary (i.e., set $B$) and information
about the connection between these points (i.e., a matching $M$ on the set $B$).

If $B$ is fine with $S$, then we insert $M$ into $\mathcal{R}[B]$ with weight
being the sum of the weights of matchings $M_1, \ldots, M_{2^d}$. At the end, we
will use the operation \emph{reduce} to decrease the sizes of all
$\mathcal{R}[B]$ and still get a representative set of size $2^{\Oh(|B|)}$. The
corresponding pseudo-code is given in Lines~\ref{uca} to~\ref{ucb} of
Algorithm~\ref{rank-based-dp}.

\subparagraph*{Overall Algorithm}
\begin{lemma}
    For a cell $C$ and a fixed $B$ that is fine with
    respect to $S$, the set $\mathcal{R}[B]$ computed in
    Algorithm~\ref{rank-based-dp} is representative of $\mathcal{M}(B)$.
\end{lemma}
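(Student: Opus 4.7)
The plan is to prove the lemma by structural induction on the cells of the compressed quadtree, treating the three cases handled by Algorithm~\ref{rank-based-dp} in turn. Throughout, the engine of the argument is the standard fact from the rank-based toolkit of~\cite{rank-based}: representativity is preserved under union, under the $\mathtt{reduce}$ operation (Lemma~\ref{lemma:reduce}), and under the join of two (or more) representative families, provided every optimal solution to the combined problem can be decomposed into optimal solutions of the parts that would be joined.

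For the base case, I would fix a fine $B$ and argue by induction on $|X|$ that $\mathtt{BC}[X]$ is representative of $\mathcal{M}(X)$, where $\mathcal{M}(X)$ ranges over perfect matchings on $X$ realizable by paths inside $C$ that visit the (at most one) input point. Each recurrence step combines, over all pairs $u,v \in X$, extensions of representative families for $X\setminus\{u,v\}$ by the edge $(u,v)$; this preserves representativity by~\cite[Theorem 3.8]{rank-based} and the final $\mathtt{reduce}$ shrinks without losing optimal witnesses. The only point to justify here is that for $|X|=2$ the explicit initialization already encodes the unique path through the input point, so every geometrically realizable matching appears in the search.

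For the compressed case, the inductive hypothesis gives that $\mathcal{R}^\star_\ins$ is representative of $\mathcal{M}(B_\ins)$ for every fine $B_\ins\subset\partial C_\ins$, and the dummy DP produces, for every compatible $(B_\ins,B_\out)$, a set $\mathcal{R}^\star_\dum$ representative of the family of matchings in the ``annulus'' $C_\out\setminus C_\ins$ (which contains no input points and therefore is handled exactly like the base case). Any optimal solution to the Multipath Problem on $(C_\out,B_\out,M_\out)$ decomposes uniquely, by cutting along $\partial C_\ins$, into an inner solution on some fine $B_\ins$ and an annular solution on $B_\ins\cup B_\out$; hence joining compatible pairs, testing fineness of the resulting $B_\out$, and applying $\mathtt{reduce}$ yields a representative set for every such $B_\out$. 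The non-compressed non-leaf case is analogous but joins $2^d$ representative families, one per child, with the compatibility predicate enforcing identical portal-sets on each shared facet; again, every optimal path collection on $(C,B,M)$ restricts to path collections on the children whose induced matchings are compatible and whose boundaries are individually fine (by Definition~\ref{def:simple} applied to the parent's facets).

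The main obstacle I expect is the decomposition step just described: one must argue that an \emph{optimal} solution realized on the parent indeed restricts to optimal sub-solutions whose portal sets $B_i$ on the children are themselves fine with respect to $S$. Fineness of the parent $B$ is ensured by the filter in Algorithm~\ref{rank-based-dp}, but fineness of the intermediate $B_i$'s on internal facets requires invoking Theorem~\ref{thm:struct} one level down: for each internal facet the restricted tour is $r$-simple, so either it crosses at a single point (where we allow any location in $S$) or at $m_F\le r^{d-1}$ points from $\grid(\ex(F),r^{2d-2}/m_F)$. This is exactly the definition of fine, so the decomposition lives inside the enumerated search space, and combining this with the ``union + join + reduce preserves representativity'' principle closes the induction.
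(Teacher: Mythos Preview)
Your approach mirrors the paper's: structural induction on the compressed quadtree, decomposing an optimal $r$-simple solution along child boundaries and using that join and $\mathtt{reduce}$ preserve representativity (the paper cites \cite[Lemma~3.6]{rank-based} for the join step). One correction: you should not reach for Theorem~\ref{thm:struct} to argue that the intermediate portal sets $B_i$ are fine---the paper simply uses that the optimum of the $r$-Multipath Problem is $r$-simple \emph{by the problem definition}, hence already satisfies Definition~\ref{def:simple} on every facet of every cell, including the children's internal facets; Theorem~\ref{thm:struct} concerns the global existence of a near-optimal $r$-simple tour and is invoked only later, in the approximation-ratio analysis of Section~\ref{algorithm-analysis}.
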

\begin{proof}
    The proof is by induction on $|C\cap P|$. For $|C\cap P| \le 1$ the lemma
    follows from the correctness of the base case. Next we assume that $|C\cap
    P| > 1$ and has some children $C_1,\ldots,C_{2^d}$ in the quadtree. Let us
    fix some $B \subseteq \partial C$ of size $m$ that is fine with respect to
    $S$, a matching $M$ on $B$ and an optimal solution, i.e., collection of
    $r$-simple paths $\text{OPT}(S,B,M,r) = \{\pi_1,\ldots,\pi_{|B|/2}\}$ with
    distinct endpoints in $B$ that realize matching $M$. Because
    $\text{OPT}(S,B,M,r)$ is $r$-simple, there exists $B_1 \subseteq \partial
    C_1,\ldots, B_{2^d} \subseteq \partial C_{2^d}$ that are fine with respect
    to $S$ and matchings $M_1,\ldots,M_{2^d}$ on $B_1,\ldots,B_{2^d}$ such
    that $\text{OPT}(S,B,M,r)$ crosses boundaries between $C_1,\ldots,C_{2^d}$
    exactly in $B_1,\ldots,B_{2^d}$ and the matchings $M_1,\ldots,M_{2^d}$ are
    compatible and their join is $M$. Hence in Line~\ref{line:insert},
    Algorithm~\ref{rank-based-dp} finds $B$ and the matching $M$. Next we will
    insert it with the weight $\wt(\text{OPT}(S,B,M,r))$ to the set $\mathcal{R}[B]$.
    Since the join operation preserves representation (see \cite[Lemma
    3.6]{rank-based}), the set $\mathcal{R}[B]$ is a representative set.
    Finally, by Lemma~\ref{lemma:reduce} we assert that the $\mathtt{reduce}$
    algorithm also outputs a representative set. An analogous argument shows
    that the sets $\mathcal{R}[B]$ computed for compressed cells are also
    representative.
\end{proof}

\begin{lemma}
    Algorithm~\ref{rank-based-dp} runs in time $\Oh(n \cdot
    |\mathcal{R}|^{2^{\Oh(d)}} \cdot 2^{\Oh(|B|)})$, where $n$ is the number
    of points in~$C$.
\end{lemma}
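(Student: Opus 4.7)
The plan is to bound the work at each vertex of the compressed quadtree rooted at $C$ and then sum over its $\Oh(n)$ vertices (here $n = |C\cap P|$). I would split the analysis according to the three cases of Algorithm~\ref{rank-based-dp}: base, compressed, and non-compressed non-leaf.

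For the \textbf{base case}, for each fine portal set $B\subseteq\partial C$, the subroutine fills a table $\mathtt{BC}[X]$ over $X\subseteq B$. There are $2^{|B|}$ states, each with a transition doing $\poly(|B|)$ work followed by a $\mathtt{reduce}$ call that, by Lemma~\ref{lemma:reduce}, costs $|\mathcal{R}^\star[X]|\cdot 2^{\Oh(|B|)} = 2^{\Oh(|B|)}$. Summed over all fine $B$'s (of which there are $2^{\Oh(|B|)}$ many, cf.\ Claim~\ref{portal-bin-ineq}), this is absorbed into the claimed bound. The \textbf{compressed case} is analogous: Algorithm~\ref{alg:compresed} iterates over $(M_\ins,M_{\dum})\in\mathcal{R}^\star_\ins\times\mathcal{R}^\star_\dum$, giving at most $|\mathcal{R}|^2$ iterations, each with $\poly(|B|)$ work, followed by one $\mathtt{reduce}$ call per fine $B_\out$ at cost $2^{\Oh(|B|)}$; the dummy-base-case table is built exactly as in the base case. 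This fits inside $|\mathcal{R}|^{2^{\Oh(d)}}\cdot 2^{\Oh(|B|)}$.

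The bottleneck is the \textbf{non-compressed non-leaf case}. The main loop (Line~\ref{line:loop}) enumerates all tuples $(M_1,\ldots,M_{2^d})\in\mathcal{R}_1\times\cdots\times\mathcal{R}_{2^d}$ and thus performs at most $\prod_{i=1}^{2^d}|\mathcal{R}_i|\leq |\mathcal{R}|^{2^d}$ iterations. For each tuple, checking compatibility, computing $\mathtt{Join}$, testing fineness with respect to $S$, and inserting the resulting weighted matching into $\mathcal{R}[B]$ are all $\poly(|B|)$ operations. The closing loop (Line~\ref{line:loop2}) invokes $\mathtt{reduce}$ on each fine $B\subset\partial C$, which by Lemma~\ref{lemma:reduce} costs $|\mathcal{R}^\star[B]|\cdot 2^{\Oh(|B|)} = 2^{\Oh(|B|)}$, and there are only $2^{\Oh(|B|)}$ fine sets to consider. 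Hence per cell the total cost is $\Oh(|\mathcal{R}|^{2^d}\cdot 2^{\Oh(|B|)})$, and since $2^d = 2^{\Oh(d)}$, multiplying by the $\Oh(n)$ vertices of the compressed quadtree gives the desired running time.

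The main obstacle I would focus on is twofold: first, ensuring that the enumeration over fine portal sets $B$ at each cell contributes only a $2^{\Oh(|B|)}$ overhead (which is where Claim~\ref{portal-bin-ineq} is used), and second, confirming that the per-tuple work is $\poly(|B|)$ so that the $|\mathcal{R}|^{2^d}$ factor is tight. Once both are in place, the summation over the compressed quadtree is routine because it has $\Oh(n)$ vertices.
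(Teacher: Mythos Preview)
Your proposal is correct and follows essentially the same approach as the paper: bound the work per cell of the compressed quadtree (splitting into base, compressed, and non-compressed non-leaf cases), observe that the dominant term is the $|\mathcal{R}|^{2^d}$-fold enumeration in Line~\ref{line:loop}, and multiply by the $\Oh(n)$ cells. One small accounting point: the paper keeps the number of fine portal sets and the cost of the final $\mathtt{reduce}$ loop in terms of $|\mathcal{R}|$ (deferring the concrete $2^{\Oh(r^{d-1})}$ bound to the subsequent Claim), whereas you bound them by $2^{\Oh(|B|)}$ directly; also, the input to $\mathtt{reduce}$ in Line~\ref{line:loop2} may contain many more than $|\mathcal{R}^\star[B]|$ entries, but since the total number of insertions across all $B$ is at most $|\mathcal{R}|^{2^d}$, this is absorbed into the main-loop cost anyway.
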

\begin{proof}
    In the algorithm, we use a compressed quadtree, therefore the number of
    cells to consider is $\Oh(n)$. Algorithm~\ref{rank-based-dp} in the base
    case runs in time $\sum_B
    |\mathcal{R}|2^{\Oh(|B|)}=|\mathcal{R}|^{\Oh(1)} 2^{\Oh(|B|)}$. The for loop in
    Line~\ref{line:loop} of Algorithm~\ref{rank-based-dp} has $|\mathcal{R}_1|
    \cdots |\mathcal{R}_{2^d}| = \Oh(|\mathcal{R}|)^{2^d}$ many
    iterations, and the analogous for loop in the compressed case has
    $|\mathcal{R}_{\ins}| \cdot |\mathcal{R}^\star_\empt|\cdot  2^{\Oh(|B|)} = 2^{\Oh(|B|)}|\mathcal{R}|^{\Oh(1)}$
    iterations. Checking whether matchings $M_1,\ldots,M_{2^d}$ are compatible
    and joining them takes $\poly(r,2^d)$ time. Moreover, checking
    whether $B$ is fine with respect to the set $S$ can be checked in
    $r^{\Oh(d)}$ time because Lemma~\ref{lem:spann} guarantees us that access
    to these points can be achieved through the lists. The for loop in
    Line~\ref{line:loop2} of Algorithm~\ref{rank-based-dp} has at most
    $|\mathcal{R}|$ many iterations. In each iteration, we invoke the
    $\mathtt{reduce}$ procedure that takes $|\mathcal{R}| \cdot 2^{\Oh(|B|)}$
    time according to Lemma~\ref{lemma:reduce}. Note that the running times in
    the compressed case can be bounded the same way. This yields the claimed
    running time.
\end{proof}

\begin{lemma}
    $|\mathcal{R}| \cdot 2^{\Oh(|B|)} \le 2^{\Oh(r^{d-1})}$
\end{lemma}
\begin{proof}
    First, recall that $|B| < 4d r^{d-1} + 2^d=\Oh(r^{d-1})$ as
    Definition~\ref{def:simple} implied that $|B\cap F^*|\leq 2r^{d-1}$ for each of the $2d$ faces. Next, we bound the number of possible sets~$B$.
    We select sets $B\cap F^*$ of size at most $m_F$, where the points can be chosen from
    $\grid(\ex(F),r^{2d-2}/m_F)\cap F^*$, each with multiplicity $0,1$ or $2$, so $|B\cap F^*| \leq 2m_F\leq 2r^{d-1}$ or
    (when some facet $F^*$ is crossed exactly once) it can also
    be chosen from $S \cap F$ with multiplicity at most $2$. Including the choice of some subset of the $2^d$ corners of $C$, each of multiplicity at most $2$, there are at most
    \[3^{2^d}\cdot 3^{2d}\binom{|S\cap\partial C|}{2d}\cdot \prod_F
	\left(\sum_{m_F=1}^{r^{d-1}} 3^{m_F}
	\binom{r^{2d-2}/m_F}{m_F}\right)\]
    possible choices for $B$. Recall that there are at most $2d$ possible
    facets $F$. Moreover,  Lemma~\ref{lem:spann} guarantees that $S$ crosses
	each face at most  $1/\eps^{\Oh(d)}$ times, hence $|S \cap F| \le
	1/\eps^{\Oh(d)}=r^{\Oh(d)}$ and $\binom{|S\cap\partial C|}{2d} \le r^{\Oh(d^2)}$. By
	Claim~\ref{portal-bin-ineq}, $\binom{r^{2d-2}/m_F}{m_F}$ is bounded by
    $2^{\Oh(r^{d-1})}$, and $3^{m_F}=2^{\Oh(r^{d-1})}$. Therefore, the number of possible choices for $B$ is at most
	\[3^{2^d}\cdot r^{\Oh(d^2)} \cdot \prod_F 	\left(\sum_{m_F=1}^{r^{d-1}}
	2^{\Oh(r^{d-1})}\right) = 2^{\Oh(r^{d-1})}.\] Next
	we bound $\mathcal{R}[B]$ for a fixed $B$. Note that in
	Algorithm~\ref{rank-based-dp}, we always use the subroutine $\mathtt{reduce}$ to
	reduce the size of $\mathcal{R}[B]$. Lemma~\ref{lemma:reduce} guarantees
	that this procedure outputs a set $\mathcal{R}^\star[B]$ of size at most
	$2^{|B|-1}$. Multiplying all of these factors together gives us the desired property.
\end{proof}
Combining all of the above observations gives us the following Corollary.
\begin{corollary}
    \label{cor:algorithm}
    Suppose we are given a compressed quadtree $Q$, a point set $P$ with $n$
    points, and a set $S$ of segments that cross each facet of $Q$ at most
    $r^{\Oh(d)}$ times. Let $\pi^S \subseteq S$ be the shortest salesman tour of
    $P$ within $S$. Then, we can find the shortest $r$-simple salesman tour
    $\pi'$ that visits all points in $P$ and, if $\pi^S$ crosses any facet of
    $Q$ exactly once, then $\pi'$ crosses it at the same point, in $n\cdot
    2^{\Oh(r^{d-1})}$ time.
\end{corollary}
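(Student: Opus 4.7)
The plan is to instantiate Algorithm~\ref{rank-based-dp} on the root cell $C_{\text{root}}$ of the compressed quadtree $Q$ and extract the best weighted matching stored in $\mathcal{R}[\emptyset]$. Since $\partial C_{\text{root}}$ is not crossed by any salesman tour on $P$, the only relevant portal set at the root is $B=\emptyset$, and the single weight stored in $\mathcal{R}[\emptyset]$ equals the minimum total length of any $r$-simple salesman tour visiting all points of $P$ whose crossing pattern at every internal face of $Q$ is fine with respect to $S$. Actually producing the tour (not just its weight) is done in the standard way, by storing with each entry of each table a pointer to the children entries used to build it, which adds no overhead.

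Correctness of the weight is immediate from what is already established: the representativity lemma (proven by induction over $|C\cap P|$, handling the base case, compressed case, and non-compressed non-leaf case separately) together with Lemma~\ref{lemma:reduce} guarantees that $\mathcal{R}[B]$ is representative of $\mathcal{M}(B)$ at every cell and every fine $B$, so no optimal $r$-simple salesman tour that is fine with $S$ is ever discarded. Property (ii) is then enforced by the definition of \emph{fine with $S$}: on any face $F$ where the returned tour has exactly one crossing, that crossing must lie in $S\cap F$, and one can further pin it down to coincide with $\pi^S\cap F$ whenever $\pi^S$ also crosses $F$ exactly once (this information is available to the DP through the precomputed crossing lists promised by Lemma~\ref{lem:spann}). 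This restriction strictly shrinks the set of considered $B$'s, so it neither affects the running-time bound nor discards the tour promised by Theorem~\ref{thm:struct} when applied to $\pi^S$, since that promised tour already satisfies (ii).

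For the running time I will simply combine the two preceding lemmas. The DP runs in $\Oh(n\cdot |\mathcal{R}|^{2^{\Oh(d)}}\cdot 2^{\Oh(|B|)})$ time, and the Claim bounds $|\mathcal{R}|\cdot 2^{\Oh(|B|)}$ by $2^{\Oh(r^{d-1})}$, whose $2^{\Oh(d)}$-th power is still $2^{\Oh(r^{d-1})}$ because $d$ is a fixed constant (and the extra blow-up is absorbed into the $\Oh$). The access to $S\cap F$ and the check that a given $B$ is fine with respect to $S$ take $r^{\Oh(d)}$ time per cell thanks to the lists stored by Lemma~\ref{lem:spann}, so the total running time is $n\cdot 2^{\Oh(r^{d-1})}$ as claimed.

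The only mildly delicate point is the interplay between the definition of fine-with-$S$ and property (ii), which I would handle by the pinning described above. Everything else is bookkeeping: assembling representativity, the size bound on $|\mathcal{R}|$, and the per-cell running time, and then observing that the compressed quadtree has $\Oh(n)$ cells so the overall cost is as stated.
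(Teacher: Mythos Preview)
Your proposal is correct and follows the paper's approach: there the corollary is stated as an immediate combination of the representativity lemma, the running-time lemma, and the claim bounding $|\mathcal{R}|\cdot 2^{\Oh(|B|)}$, which is exactly what you spell out.

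One minor slip in your handling of property~(ii): you cannot ``pin'' single crossings to $\pi^S\cap F$, because Lemma~\ref{lem:spann} stores the crossings of $S$, not of $\pi^S$, and $\pi^S$ is itself an unknown optimal tour inside $S$ (computing it is a TSP instance). Fortunately this pinning is unnecessary, as you yourself observe: the tour promised by Theorem~\ref{thm:struct} applied to $\pi^S$ is already fine with $S$ (its single crossings coincide with those of $\pi^S\subseteq S$, and its multiple crossings lie on the grid), so the DP over all fine-with-$S$ portal sets captures it. What the DP actually returns is the shortest $r$-simple tour that is fine with $S$, which is at most as long as the shortest tour satisfying (i) and (ii); this inequality is all that Section~\ref{algorithm-analysis} uses, and the paper's phrasing of the corollary is itself slightly loose on this point.
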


We can now proceed with the proof of the algorithm's existence from Theorem~\ref{thm:tsp}.

\begin{proof}[Proof of the algorithmic part of Theorem~\ref{thm:tsp}]
For the running time observe that Step 1, 2, 3 and 5 take
$\poly(1/\eps) \cdot n \log{n}$ time. In  Step~4 
we set $r$ to $\Oh(d^{5/2}/\eps)$ and by Corollary~\ref{cor:algorithm} we get an
extra $n \cdot 2^{\Oh(d^{5/2}/\eps)^{d-1}}$ factor. Overall, this gives the claimed
running time.

For the approximation ratio, assume that $\pi$ is the optimal solution. Note
that Step 1 perturbs the solution by at most $\Oh(\eps\cdot \wt(\pi))$. In Step 3,
by the Lemma~\ref{lem:spann} we are guaranteed that there exists a tour $\pi^S$ of
weight $\Oh(\eps\cdot \wt(\pi))$ larger than $\pi$ (in expectation). Next in Step 4,
Corollary~\ref{cor:algorithm} applied to the set $S$, guarantees that that we
find a salesman tour $\pi'$ that satisfies the condition of Structural
Theorem~\ref{thm:struct} for $\pi^S$. It means that $\mathbb{E}_\ba[\wt(\pi') -
\wt(\pi^S)] = \Oh(\eps\cdot \wt(\pi))$ and $\wt(\pi') = (1+\Oh(\eps)) \wt(\pi)$. Applying
Step 5 on $\pi'$ can only decrease the total weight of $\pi'$. This concludes
the proof of the algorithmic part of Theorem~\ref{thm:tsp}.
\end{proof}

It is easy to see that the algorithm can be derandomized by trying all possibilities for $\ba$.

\begin{remark}\label{rem:spannerfree_alg}
One can swap out the patched spanner of Rao and Smith in \emph{Step 3} with Arora's structure theorem in order to avoid using spanners, which can be beneficial for certain problems. As a result, when the patched tour has a single crossing in a facet, its location would have to be guessed from Arora's portals, which are $\grid\left(\ex(F),\Oh\left(\frac{\log(1/\eps)}{\eps^{1/(d-1)}\log n}\right)\right)$. This results in $\poly(1/\eps)\cdot (\log n)^{d-1}$ potential locations for the crossing in the facet rather than just $\poly(1/\eps)$ as with spanners. For a given cell $C$, there would be $\prod_F
\left(\binom{r^{2d-2}/m_F}{m_F}+(\poly(1/\eps)\log n)^{d-1}\right) = 2^{\Oh(1/\eps^{d-1})}(\log n)^{(d-1)\cdot 2d} $ options. This results in a spanner-free algorithm with a slightly slower running time of $2^{\Oh(1/\eps^{d-1)}}n(\log n)^{2d^2-2d}=2^{\Oh(1/\eps^{d-1)}}n \,\poly(\log n)$.
\end{remark}

\section{Algorithm for \textsc{Euclidean} and \textsc{Rectilinear Steiner Tree}}
\label{sec:steinertree}
In this subsection, we consider extensions to two variants of \textsc{Steiner
Tree}: \textsc{Euclidean Steiner Tree} and \textsc{Rectilinear Steiner Tree}. As
most techniques work the same way for these problems, we only sketch the
differences compared to our algorithm for  \textsc{Euclidean TSP}.

The notion of spanners for the Steiner tree problems is more complicated (one
requires so-called banyans). We summarize this notion at the end of this
section. Similarly to how we used spanners for the algorithm for TSP, here we
use banyan to determine a set $\tilde{S}$ of points. This set will consist of
$(1/\eps)^{\Oh(d)}$ portals for each facet that we use in the case of single
crossing. Consequently, we say that a portal set $B\subset \partial C$ is
\emph{valid} with respect to $\tilde{S}$ if for each facet $F$ of $C$, we have
that either (i) $B \cap F = \{p\}$ and $p \in \tilde{S} \cap F$, or (ii) $B
\subset \grid(\ex(F),r^{2d-2}/m_F)$ where $|B\cap F|\leq m_F \leq r^{d-1}$.

Additionally, we need to track connectivity requirements with partitions instead
of matchings. A partition $M$ of $B$ is realized by a forest if for any $b,b'\in
B$, we have that $b$ and $b'$ are in the same tree of the forest if and only if
they are in the same partition class of $M$. The problem we need to solve in
cells is the following.

\defproblem{$r$-Simple Steiner Forest Problem}
{A nonempty cell $C$ in the shifted quadtree, a portal set $B \subseteq \partial
C$ that is valid with $\tilde{S}$, and a partition $M$ on $B$}
{Find an $r$-simple forest $\cP_{B,M}$ of minimum total length that satisfies the following properties.
\vspace{-0.5em}
\begin{itemize}
    \setlength\itemsep{0.0em}
    \item The forest $\cP_{B,M}$ spans all input points inside $C$.
    \item $\cP_{B,M}$ crosses $\partial C$ only through portals from $B$.
    \item $\cP_{B,M}$ realizes the partition $M$ on $B$.
\end{itemize}
}

The rank-based approach~\cite{rank-based,fast-hamiltonicity-jacm} was originally
conceived with partitions in mind, and therefore we can still use representative
sets and the reduce algorithm as before (although the upper bound on
    $\mathcal{R}^\star[B]$ of $2^{|B|/2-1}$ from Lemma~\ref{lemma:reduce} needs
to be increased to $2^{|B|-1}$).
The main difference between TSP and Steiner Tree is the handling of leaf and
dummy leaf cells (i.e., the base case and the dynamic programming in the
compressed case).

\subparagraph*{Leaves and dummy leaves for \textsc{Rectilinear Steiner Tree}.}

Consider now a point set $Q\subset \Reals^d$. The \emph{Hanan-grid} of $Q$ is
the set of points that can be defined as the intersection of $d$ distinct
axis-parallel hyperplanes incident to $d$ (not necessarily distinct) points of
$Q$. By Hanan's and Snyder's results~\cite{hanan1966steiner,Snyder92}, the
optimum rectilinear Steiner tree for a given point set $Q$ lies in the
Hanan-grid of $Q$. In particular, in a leaf cell, our task is to find a
representative set for a fixed set of $k=\Oh(1/\eps)^{d-1}$ terminals (which
includes the input point in case of a non-empty leaf cell). Note that for each
fixed $B$, this task can be done in the graph $G$ defined by the Hanan-grid of
$B\cup (C\cap P)$, where the edge weights correspond to the $\ell_1$ distance.
The graph $G$ has $\poly(1/\eps)$ vertices and edges. Let $H$ be the set of
vertices in the Hanan-grid of $B$, i.e., the set of possible Steiner points for
the terminal set $B$, where $B$ is the set of portals on the boundary of the
cell.

To solve the base case efficiently, we will use a dynamic programming subroutine
inspired by the classical Dreyfus-Wagner algorithm~\cite{dw71}.
Let $\mathtt{ST}[D,v]$ be the minimum possible weight of a Steiner Tree for $D \cup
\{v\}$, for all $D \subseteq B$ and $v \in H$. In the base case
$\mathtt{ST}[\{b\},v] = \lVert b-v\rVert_1$. We can compute it efficiently
with the following dynamic programming formula:
\begin{displaymath}
    \mathtt{ST}[D,v] \coloneqq \min_{\substack{u \in H\\\emptyset \neq D' \subset
    D}} \Big\{ \mathtt{ST}[D',u] + \mathtt{ST}[D \setminus D', u] + \lVert u - v\rVert_1 \Big\}
    .
\end{displaymath}
This algorithm correctly computes a minimum weight Steiner Tree that connects
$D \subseteq B$ and the running time of this algorithm is $2^{\Oh(|B|)} \cdot \poly(1/\eps)$
(see~\cite{dw71}). Finally, let $\mathtt{ST}[X] \coloneqq \min_{v} \mathtt{ST}[X,v]$. 

Next, we take care of all partitions of $B$. This involves a similar dynamic
programming as in the base case of TSP. Let $\mathtt{SF}(X)$
be the set $\mathcal{R}^\star[X]$ that represents every partition of $X
\subseteq B$. Namely, for every Steiner forest $F$ with connected components
$B_1,\ldots,B_k$, such that $B_1\uplus\ldots\uplus B_k = X$ there exists $M \in
\mathcal{R}^\star[X]$, such that the union of $F$ and a forest $F_M$ whose connected components correspond to $M$ gives a tree that spans $X$.
At the beginning, we set $\mathtt{SF}[\emptyset] = \{\emptyset,0\}$. Next, we use the following dynamic
programming to compute $\mathtt{SF}[X]$ for all $X\subset B$:
\begin{displaymath}
    \mathtt{SF}[X] \coloneqq \mathtt{reduce}\left(
        \bigcup_{Y \subseteq X} \Big\{ (M \cup \{Y\}, \wt(M) + \mathtt{ST}[Y])\;
            \Big| \;
    (M, \wt(M)) \in \mathtt{SF}[X \setminus \{Y\}]\Big\}
    \right)
\end{displaymath}
The number of table entries $\mathtt{SF}[X]$ is $2^{|B|}$. To compute each entry we
need $2^{\Oh(|B|)} |\mathcal{R}^{\star}|$ time. Because $\mathtt{reduce}$
guarantees that $|\mathcal{R}^{\star}| \le 2^{\Oh(|B|)}$ we can bound the running time of the
dynamic programming algorithm by $2^{\Oh(|B|)}$. We know that $|B| \le
\Oh(1/\eps^{d-1})$ and the running time bound for the base case follows. The correctness follows from the
correctness of the procedure $\mathtt{reduce}$ for partitions (see~\cite{rank-based})
and the fact that $\mathtt{ST}[Y]$ is an optimal Steiner tree on the terminal set $Y \subseteq B$.

\subparagraph*{Leaves and dummy leaves for \textsc{Euclidean Steiner Tree}.}
In the case of \textsc{Euclidean Steiner Tree}, we can pursue a similar line of
reasoning. First, notice that in leaf and dummy leaf cells, it is sufficient to
compute a $(1+\Oh(\eps))$-approximate forest for $\tau'\cap C$, as these forests
are a subdivision of $\tau'$. By the grid perturbation argument within $C$, it
is sufficient to consider forests where the Steiner points lie in a regular
$d$-dimensional grid of side length $\Oh(1/\eps)$. Let $V_C$ be the set of
$\Oh(1/\eps)^d$ grid points obtained this way, and let $G$ be the complete graph
on $V_C$ where the edge weights are defined by the $\ell_2$ norm. Then the
minimum Steiner forest of $B$ for a given partition $M$ is equal to the
corresponding forest within $G$. In particular, it is sufficient to compute the
representative set of all partitions of $B$ in $G$. To achieve that, we use
exactly the same dynamic programming as in the base case for rectilinear Steiner
Tree. We only need to change the distance in the procedure $\mathtt{ST}$ to be
$\ell_2$ distance. Note that $\mathtt{ST}$ works in $2^{\Oh(|B|)} \cdot
\text{poly}(|V_C|)$ and the running time of the dynamic programming for
$\mathtt{SF}$ is bounded by $2^{\Oh(|B|)}\cdot \text{poly}(|V_C|)$.

\subparagraph*{Single crossings and banyans}

Similarly to \emph{Step 3} in our algorithm for Euclidean TSP, we use the
structure theorem based on the results of Rao and Smith~\cite{RaoS98} and Czumaj
et al.~\cite{survivable}. We include the details of the proofs in
Appendix~\ref{sec:filtering} for completeness. Note that  
this is taken almost verbatim from~\cite{survivable}.
We modify their approach to make
it work in the required time for both the rectilinear and the Euclidean case.
Let $\smt(P;E)$ be the minimum length Steiner tree with terminal set~$P$ which is
allowed to use only segments from $E$ as edges.

\begin{lemma}\label{lem:banyan}
    There is a $\poly(1/\eps)n\log(n)$
    time algorithm that, given point set $P$ and
    the random offset $\ba$ of the dissection, computes a set of segments
    $\tilde{S}$ such that:
	\begin{enumerate}
        \item
            $\mathbb{E}_{\ba}[\wt(\smt(P;\tilde{S}))-\wt(\smt(P;\mathbb{R}^d \times \mathbb{R}^d))]
                    = \Oh\big(\eps \cdot  \wt(\smt(P;\mathbb{R}^d \times
                        \mathbb{R}^d))\big)$.
                    \item for every facet $F$ of $\mathrm{D}(\ba)$ it holds that $|F \cap \tilde{S}|
            =1/\eps^{\Oh(d)}$.
	\end{enumerate}
\end{lemma}

Lemma~\ref{lem:banyan} is analogous to Lemma~\ref{lem:spann}. It gives us the
set $\tilde{S}$ of segments such that (i) there exists a Steiner Tree that uses
only edges from $\tilde{S}$, (ii) in expectation the excess weight of the tree
is only $\Oh(\eps \wt(\text{OPT}))$, and (iii) the segments of $\tilde{S}$ cross
every cell of the quadtree at most $(1/\eps)^{\Oh(d)}$ times.
Lemma~\ref{lem:banyan} works for both Rectilinear and Euclidean Steiner Tree
(see Appendix~\ref{sec:filtering}).

We use Lemma~\ref{lem:banyan} analogously to Lemma~\ref{lem:spann}. Namely, when
our dynamic programming procedure guesses that an optimum solution of the
$r$-simple Steiner Forest Problem crosses a cell facet $F$ exactly once, then we
guess the crossing exactly from $\tilde{S}\cap F$. Lemma~\ref{lem:banyan}
guarantees that the number of candidates is $(1/\eps)^{\Oh(d)}$, which is less
than $2^{\Oh(1/\eps)^{d-1}}$.

Putting the above ideas together proves the algorithmic part of Theorem~\ref{thm:st}.

\section{Lower bounds} \nocite{SKBdoktori}
\label{sec:lower-bounds}

Our starting point is the gap version of the Exponential Time Hypothesis~\cite{Dinur16,ManurangsiR17}, which is normally abbreviated as Gap-ETH. The hypothesis is about the \textsc{Max 3SAT} problem, where one is given a 3-CNF formula with $n$ variables and $m$ clauses, and the goal is to satisfy the maximum number of clauses.

\begin{namedthm}{Gap Exponential Time Hypothesis (Gap-ETH)}[Dinur~\cite{Dinur16}, Manurangsi and Raghavendra~\cite{ManurangsiR17}]
There exist constants $\delta,\gamma>0$ such that there is no $2^{\gamma m}$ algorithm which, given a 3-CNF formula $\phi$ on $m$ clauses, can distinguish between the cases where (i) $\phi$ is satisfiable or (ii) all variable assignments violate at least $\delta m$ clauses.
\end{namedthm}

Let \textsc{Max-(3,3)SAT} be the problem where we want to maximize the number of satisfied clauses in a formula $\phi$ where each variable occurs at most $3$ times and each clause has size at most $3$. (Let us call such formulas (3,3)-CNF formulas.) Note that the number of variables and clauses in a (3,3)-CNF formula are within constant factors of each other. Papadimitriou~\cite[pages 315--318]{PapadimitriouBook} gives an $L$-reduction from \textsc{Max-3SAT} to \textsc{Max-(3,3)SAT}, which immediately yields the following:

\begin{corollary}\label{cor:max33sat}
There exist constants $\delta,\gamma>0$ such that there is no $2^{\gamma n}$ algorithm that, given a (3,3)-CNF formula $\phi$ on $n$ variables and $m$ clauses, can distinguish between the cases where (i) $\phi$ is satisfiable or (ii) all variable assignments violate at least $\delta m$ clauses, unless Gap-ETH fails.
\end{corollary}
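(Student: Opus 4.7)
The plan is to transport the Gap-ETH hardness for \textsc{Max-3SAT} through Papadimitriou's L-reduction, exploiting both its linear blow-up in size and its constant-factor gap preservation. First, I would recall at a high level how the reduction produces, from a 3-CNF formula $\phi$ with $m_0$ clauses, a $(3,3)$-CNF formula $\phi'$: each variable $x$ that occurs $k>3$ times in $\phi$ is split into $k$ fresh copies $x^{(1)},\dots,x^{(k)}$, each used exactly once in $\phi'$; a cycle of equality gadgets (implications $x^{(j)} \to x^{(j+1)}$ implemented as 2-clauses) forces the copies to agree in any optimal assignment. Since the total number of literal occurrences in $\phi$ is at most $3m_0$, the output formula $\phi'$ has $n = \Theta(m_0)$ variables and $m = \Theta(m_0)$ clauses, with each variable appearing at most three times and each clause of size at most three.

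Second, I would invoke the two defining properties of the L-reduction with parameters $\alpha,\beta>0$: namely $\mathrm{OPT}(\phi') \le \alpha\,\mathrm{OPT}(\phi)$, and for any assignment $y'$ of $\phi'$ its projection $g(y')$ onto $\phi$ satisfies
\[
\mathrm{OPT}(\phi) - c(g(y')) \;\le\; \beta\,\bigl(\mathrm{OPT}(\phi') - c(y')\bigr).
\]
From these, two conclusions follow immediately. If $\phi$ is satisfiable, then so is $\phi'$ (take a satisfying assignment of $\phi$ and set every copy equal to the value of the original variable). If every assignment of $\phi$ violates at least $\delta_0 m_0$ clauses, then every $y'$ must violate at least $\delta_0 m_0/\beta$ clauses of $\phi'$; since $m = \Theta(m_0)$, this is at least $\delta m$ violated clauses for a suitable constant $\delta>0$ depending only on $\delta_0$, $\beta$, and the size-blow-up constant.

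Third, I would close by contraposition. A hypothetical $2^{\gamma n}$ algorithm that solves the promise version of \textsc{Max-(3,3)SAT}, composed with the polynomial-time L-reduction, would distinguish the satisfiable from the $\delta_0 m_0$-unsatisfiable 3-CNF instances in time $2^{\gamma \cdot \Theta(m_0)}\cdot \poly(m_0)$. Choosing $\gamma$ small enough relative to the size-blow-up constant and the Gap-ETH constant $\gamma_0$ yields the contradiction.

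The main obstacle, and the only nontrivial step, is verifying that Papadimitriou's construction truly is an L-reduction with linear size blow-up, rather than merely a polynomial-time gap-preserving reduction; once that is checked against~\cite[pp.~315--318]{PapadimitriouBook}, the rest is routine gap-preservation bookkeeping, which is exactly why the corollary is stated as following \emph{immediately} from the cited reduction.
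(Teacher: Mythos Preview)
Your proposal is correct and takes exactly the approach the paper intends: the paper merely states that Papadimitriou's L-reduction from \textsc{Max-3SAT} to \textsc{Max-(3,3)SAT} ``immediately yields'' the corollary, without spelling out any of the gap-preservation bookkeeping you provide. Your write-up is thus a faithful unpacking of the one-line justification in the paper.
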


\subsection{Lower bound for approximating \textsc{Euclidean TSP}}

In this subsection we prove the following Theorem, which will conclude the proof of Theorem~\ref{thm:tsp}.

\begin{theorem}\label{thm:tsplower}
    For any $d$ there is a $\gamma>0$ such that there is no
    $2^{\gamma/\eps^{d-1}}\poly(n)$ time $(1+\eps)$-approximation algorithm for \textsc{Euclidean TSP} in
    $\Reals^d$, unless Gap-ETH fails.
\end{theorem}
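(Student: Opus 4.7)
The plan is to reduce from a Gap-ETH-hard version of \textsc{Hamiltonian Cycle} in $d$-dimensional grid graphs obtained by combining Corollary~\ref{cor:max33sat} with the geometric reduction framework of~\cite{frameworkpaperjournal}. Concretely, starting from a \textsc{Max-(3,3)SAT} instance $\phi$ on $n$ variables with promised gap $\delta$, I would invoke the framework construction to produce, in polynomial time, a point set $V \subseteq \{0,\ldots,s\}^d$ of size $N = \Theta(s^d)$ with $s = \Theta(n^{1/(d-1)})$ (so $s^{d-1} = \Theta(n)$), such that:
\begin{itemize}
\item[(Yes)] if $\phi$ is satisfiable, then $V$ admits a Hamiltonian cycle along unit grid edges, so the Euclidean TSP optimum on $V$ equals $N$;
\item[(No)] if every assignment violates at least $\delta m$ clauses, then every closed Euclidean tour visiting $V$ has length at least $N + c\cdot s^{d-1}$ for some absolute constant $c>0$.
\end{itemize}

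Given such a gap instance, the rest is parameter matching. Taking $V$ itself as the \textsc{Euclidean TSP} point set and setting $\eps := c s^{d-1}/(2N) = \Theta(1/s) = \Theta(n^{-1/(d-1)})$, a $(1+\eps)$-approximation separates the two cases. Hence a hypothetical $(1+\eps)$-approximation algorithm running in $2^{o(1/\eps^{d-1})}\poly(n_{\text{TSP}})$ time would decide the gap \textsc{Max-(3,3)SAT} instance in $2^{o(s^{d-1})}\cdot\poly(N) = 2^{o(n)}\cdot n^{\Oh(1)}$ time, contradicting Corollary~\ref{cor:max33sat}. To make the lower bound hold uniformly in the number of points, I would pad the instance with $n'-N$ auxiliary points clustered in a ball of diameter well below $1/s$ around a vertex of $V$, bringing the input size up to any desired $n' \geq N$ at negligible additional cost absorbable into $c$.

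The main obstacle will be ensuring that the (No)-case gap of $\Omega(s^{d-1})$ holds for \emph{arbitrary} Euclidean tours, and not only for tours restricted to unit grid edges. The combinatorial framework of~\cite{frameworkpaperjournal} most naturally controls walks inside the grid graph, whereas a Euclidean TSP tour may take diagonal segments of length $\sqrt{2},\sqrt{3},\ldots$ or cut through empty regions of $\{0,\ldots,s\}^d$, potentially circumventing the combinatorial penalty. I plan to close this by two observations: first, since all points of $V$ are lattice points, every segment of the tour has Euclidean length at least $1$, so the base cost of $N$ is always paid; second, any segment that is not an axis-parallel unit edge contributes at least an additive $\sqrt{2}-1 = \Omega(1)$ of slack, so $\Omega(s^{d-1})$ non-unit segments enforced by the framework construction already yield an $\Omega(s^{d-1})$ additive Euclidean gap. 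If the target framework statement is not already phrased in the Euclidean metric, I would strengthen the construction by thickening each forced "bad" transition with a tight cluster of collinear points that any Euclidean tour must traverse near-axis-parallel, translating the grid-graph gap into a Euclidean gap of the same order of magnitude.
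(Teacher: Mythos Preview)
Your proposal follows essentially the same route as the paper: reduce from \textsc{Max-(3,3)SAT} via the grid-embedded construction of~\cite{frameworkpaperjournal}, exploit that lattice points force any non-unit tour edge to cost at least $\sqrt{2}-1$ extra, and set $\eps=\Theta(n^{-1/(d-1)})$ so that a $(1+\eps)$-approximation distinguishes the two cases.

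The one place where your sketch is imprecise is the (No) case. The construction of~\cite{frameworkpaperjournal} is an exact reduction and does not by itself ``enforce $\Omega(s^{d-1})$ non-unit segments'' when $\phi$ is far from satisfiable; that conclusion has to be \emph{derived}. The paper argues the contrapositive: given any Euclidean tour of length at most $(1+\eps)|P|$, at most $\Oh(\eps|P|)$ gadgets can be \emph{destroyed} (i.e., have an induced traversal not representing any allowed state), because each destroyed gadget contributes at least $(\sqrt{2}-1)/2$ to the surplus --- this is exactly your second observation, made precise as Observation~\ref{obs:traversal}. One then marks a variable \emph{bad} if its variable gadget, any incident wire, snake, or crossover, or any adjacent clause gadget is destroyed; since a destroyed gadget touches at most three variables and each variable occurs in at most three clauses, the non-bad variables yield a partial assignment satisfying all but $\Oh(\eps|P|)$ clauses. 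Choosing $\eps=\Theta(\delta m/|P|)=\Theta(n^{-1/(d-1)})$ then contradicts Corollary~\ref{cor:max33sat}. In particular, your fallback of thickening bad transitions with clusters of collinear points is unnecessary: the existing lattice construction already carries the required Euclidean gap, and the padding step is likewise not needed since $\poly(|P|)=\poly(n)$.
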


We show that the reduction given in~\cite{frameworkpaperjournal} from
\textsc{(3,3)-SAT} to \textsc{Euclidean TSP} (see also the equivalent reduction for
\textsc{Hamiltonian Cycle} in~\cite{SKBdoktori})  can also be regarded as a reduction
from \textsc{Max-(3,3)SAT}, and it gives us the desired bound. We start with the
short summary of the construction from~\cite{frameworkpaperjournal}.

The construction of \cite{frameworkpaperjournal} heavily builds on the construction
of~\cite{itai1982hamilton} for \textsc{Hamiltonian Cycle} in grid graphs and
\cite{plesn1979np} for \textsc{Hamiltonian Cycle} in planar graphs. A basic familiarity with the lower bound framework \cite{frameworkpaperjournal} as well as the reductions in \cite{plesn1979np} and \cite{itai1982hamilton} is recommended for this section.

Overall, the construction of~\cite{frameworkpaperjournal} takes a $(3,3)$-CNF
formula $\phi$ as input, and in polynomial time creates a set of points
$P\subset \Reals^d$ where each point has integer coordinates, and $P$ has a
tour of length $|P|$ if and only if $\phi$ is satisfiable. The set $P$ can be
decomposed into \emph{gadgets}, which are certain smaller subsets of $P$.

We will use a notation proposed by~\cite{marx07} to describe properties of
gadgets. We say that a set of walks is a \emph{traversal} if
each point of a given gadget is visited by at least one of the walks. Note that for a
given gadget a TSP tour induces a traversal simply by taking edges adjacent to
the points of a gadget.

Each gadget $G$ has a set of \emph{visible points} $S \cup T \subseteq G$. A \emph{state} $q$ of gadget $G$ with visible points  $S \cup T \subseteq G$ is a collection of pairs $(s^q_i,t^q_i)$ for $i\in [k]$ where $s^q_i\in S$ and $t^q_i\in T$. 
We say that a traversal $\mathcal{W} = \{W_1,\ldots,W_k\}$ \emph{represents} state
$q$ if walk $W_i$ starts in $s^q_i$ and ends in $t^q_i$ for each $i\in [k]$.
The set of allowed states form the \emph{state space} $Q$ of the gadget. Finally, for a given TSP tour $\pi$ and a gadget $G$ we say that a traversal
induced on $G$ by an (bidirected) tour $\pi$ has the following \emph{weight}.

\begin{displaymath}
    \wt(T,G) \coloneqq  \sum_{\substack{p \in G,\\ (p,x) \in \pi}} \frac{\lVert p - x \rVert_2}{2}  - |G|,
\end{displaymath}
where $(p,x)$ are ordered pairs, i.e., edges induced by $G$ are counted twice.
Hence if a traversal visits all vertices exactly once and all edges are of
length $1$, then the weight of the traversal is $0$. Note that the input/output
edges contribute $1/2$ to the weight of the traversal.

Recall that in the construction developed in~\cite{frameworkpaperjournal}, the points are
placed on a grid of integral coordinates. The tour that traverses a gadget in
a ``bad'' way (i.e., in a way that does not correspond to a state of the gadget)
has to either visit a point more than once or it must use some diagonal edge of length
at least $\sqrt{2}$. Hence a weight of a traversal that does not represent any state of
the gadget needs to have weight at least $\frac{\sqrt{2} - 1}{2}$.

\begin{observation}\label{obs:traversal}
    Every gadget $G$ with state space $\mathcal{Q}$ developed in
    \cite{frameworkpaperjournal} has two properties:
    \begin{enumerate}[align=left, font=\normalfont, label=(\roman*)]
        \item for every state $q \in \mathcal{Q}$ there exists a traversal $\mathcal{W}_q$ of weight $0$ that represents $q$, and 
        \item any traversal $\mathcal{W}$ that does not represent any $q \in \mathcal{Q}$ is of weight at least $\frac{\sqrt{2} -1}{2}$.
    \end{enumerate}
\end{observation}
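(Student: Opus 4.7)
The plan is to reduce the observation to a precise accounting of edge lengths combined with the rigidity built into the gadget constructions of \cite{frameworkpaperjournal}. First, I would unpack the weight formula using the fact that the TSP tour $\pi$ has degree $2$ at every point of $G$. Splitting the tour edges incident to $G$ into internal edges $E_{\text{int}}$ (both endpoints in $G$) and input/output edges $E_{\text{io}}$ (exactly one endpoint in $G$), the degree sum gives $2|E_{\text{int}}|+|E_{\text{io}}|=2|G|$, and a short rearrangement yields
$$\wt(\mathcal{W},G) \;=\; \sum_{e\in E_{\text{int}}}\bigl(\|e\|_2-1\bigr) \;+\; \tfrac{1}{2}\sum_{e\in E_{\text{io}}}\bigl(\|e\|_2-1\bigr).$$
Since every gadget point lies in $\mathbb{Z}^d$, each tour edge $e$ satisfies $\|e\|_2\in\{1\}\cup[\sqrt{2},\infty)$, so each summand is either $0$ or at least $\sqrt{2}-1$ (internal), respectively $(\sqrt{2}-1)/2$ (input/output). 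Consequently $\wt(\mathcal{W},G)\in\{0\}\cup[(\sqrt{2}-1)/2,\infty)$, which is the quantization step driving everything else.

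For property (i), given a state $q\in\mathcal{Q}$, I would exhibit $\mathcal{W}_q$ directly from the explicit gadget design of \cite{frameworkpaperjournal}: each gadget there is constructed together with a family of $k$ walks witnessing each intended state, every edge of which is an axis-aligned unit vector, and which together visit each point of $G$ exactly once. By the formula above such a traversal has weight $0$.

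For property (ii), the quantization reduces the claim to showing that every weight-$0$ traversal represents some state $q\in\mathcal{Q}$. Indeed, a weight-$0$ traversal uses only unit edges within $G$ and visits each point exactly once, so its walks form a partition of $V(G)$ into $k$ Hamiltonian paths of the unit-distance graph of $G$ with endpoints in $S\cup T$. The substantive content of the construction in \cite{frameworkpaperjournal}, which adapts the grid Hamiltonicity constructions of \cite{itai1982hamilton,plesn1979np}, is precisely that the family of such Hamiltonian-path decompositions coincides with the declared state space $\mathcal{Q}$.

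The main obstacle is exactly this last rigidity step: one must revisit each gadget type (vertex, clause, and wiring gadgets) developed in \cite{frameworkpaperjournal} and confirm that no weight-$0$ decomposition of the unit-distance graph falls outside $\mathcal{Q}$. This is a finite case check inherited from the original framework and requires no new idea, but it cannot be skipped, since the state spaces were tailored precisely so that this rigidity holds.
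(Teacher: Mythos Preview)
Your proposal is correct and matches the paper's approach, which is just the brief paragraph immediately preceding the observation (there is no separate proof): a traversal not representing any state must use a non-unit edge or revisit a point, forcing weight at least $(\sqrt{2}-1)/2$. Your explicit degree-sum rewriting of the weight as $\sum_{e\in E_{\text{int}}}(\lVert e\rVert_2-1)+\tfrac{1}{2}\sum_{e\in E_{\text{io}}}(\lVert e\rVert_2-1)$ and your flagging of the per-gadget rigidity check are more careful than the paper's one-line justification, but the underlying idea is identical, with both accounts deferring the gadget-specific facts to \cite{frameworkpaperjournal}.
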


Now, we are ready to describe more concretely the gadgets in
\cite{frameworkpaperjournal}. There are size-$3$ and size-$2$ clause gadgets
with state spaces $\mathcal{Q}_3 = \mathbb{Z}_2^3 \setminus (0,0,0)$ and
$\mathcal{Q}_2 = \mathbb{Z}_2^2 \setminus (0,0)$ respectively. Both types of
clause gadgets consist of a constant number of points with integer coordinates
in $\{0,\ldots,c_0\}^d$ (translated appropriately). Clause gadgets are used to
encode clauses of the \textsc{Max-(3,3)SAT} instance. Similarly,
\cite{frameworkpaperjournal} developed a \emph{Variable Gadget} for state
space $\mathcal{Q} = \mathbb{Z}_2$, that is used to encode the values of variables
in the \textsc{Max-(3,3)SAT} instance.

A \emph{wire} is a constant width grid path with state space
$\mathbb{Z}_2$. It is used to transfer information from a
Variable gadget to a Clause Gadget (note that a wire does have a constant number of
points). In the $2$-dimensional case \cite{frameworkpaperjournal} define a \emph{crossover gadget} that has
state space $\mathbb{Z}_2 \times \mathbb{Z}_2$ that is able to transfer information
both horizontally and vertically. It is added in the junction of two crossing wires
in order to enable transfer of two independent bits of information. We and
\cite{frameworkpaperjournal} do not need crossing gadgets in higher dimensions.

In the reduction of \cite{frameworkpaperjournal} Clause Gadgets are connected
with Variable Gadgets by wires. When a gadget and a wire are connected, then
they always share a constant number of points. Clauses are connected to
Variables in the natural way: Namely, let $T_\text{OPT}$ be the optimal TSP
path. Any clause gadget $\phi = x^*_1 \vee x^*_2 \vee x^*_3$ where
$x^*_i\in \{x_i,\neg x_i\}\; (i=1,2,3)$ is connected by a wire to the variable
gadgets $x_1,x_2,x_3$. Moreover if a subpath of the optimal tour
$T_\text{OPT}$ goes through a clause gadget $\phi$ and represents a state
$(y_1,y_2,y_3)$, then the traversal of $P$ inside the wire connecting $\phi$
with $x_i$ represents state $y_i$, and the traversal of $T_\text{OPT}$
inside the gadget of $x_i$ represents state $y_i$ if $x_i$ has a positive
literal in this clause and $\neg y_i$ if it has a negative literal there.

The final detail that \cite{frameworkpaperjournal} needs is to place all the
gadgets along a cycle. They add a ``snake'' (a width 2-grid path based
on~\cite{itai1982hamilton})  through variable and clause gadgets
(see~\cite[Figure 8.10]{SKBdoktori} for a schematic picture of construction in
$3$-dimensions). A snake is used to represent a long graph edge. It has two
states, corresponding to the long edge being in the Hamiltonian Cycle or not.
Note that every point in the construction is part of one gadget or gadget and
a wire or a gadget and a snake, and distinct gadgets have distance more than $1$.

We are now ready to prove the lower bound for \textsc{Euclidean TSP}.

\begin{proof}[Proof of Theorem~\ref{thm:tsplower}]
We use a reduction from \textsc{Max-(3,3)SAT}. For an input point set $P$ let
$OPT$ denote the minimum tour length. Suppose for the sake of contradiction
that for every $\gamma>0$ there is an algorithm that for any point set $P$ and
$\eps>0$ returns a traveling salesman tour of length at most $(1+\eps)OPT$ in
$2^{\gamma/\eps^{d-1}}\poly(n)$ time. Fix some integer $d\geq 2$, and let
$\phi$ be a (3,3)-CNF formula, and apply the construction
of~\cite{frameworkpaperjournal} to obtain a point set $P\subset \Reals^d$
satisfying Observation~\ref{obs:traversal}. Note that $P$ has a TSP tour of
length $|P|$ if and only if $\phi$ is satisfiable. Let $c$ be such that
$|P|=cn^{d/(d-1)}$.

Suppose now that $P$ has a TSP tour $T_\text{apx}$ of length $(1+\eps)|P|$.  
We mark a gadget $G$ to be \emph{destroyed} if the traversal of $T_\text{apx}$ does
not represent any state of the gadget. By the properties of the gadget, such a
traversal has weight at least $\frac{\sqrt{2} - 1}{2}$. Therefore, there can be at most 
$\frac{4\eps|P|}{\sqrt{2}-1}$ destroyed gadgets (note that $1$ edge of
$T_\text{apx}$ can be a part of at most $2$ traversals).

For a fixed variable $x$ let $V_x$ be the variable gadget that encodes it.
We will mark a variable $x$ ``bad'' if one of the following conditions holds:

\begin{itemize}
    \item the gadget $V_x$ is destroyed, or
    \item any of the wires or snakes connecting to $V_x$ is destroyed, or
    \item a crossover gadget on one of the wires of $V_x$ is destroyed, or
    \item any clause that is connected to $V_x$ is destroyed.
\end{itemize}

Consequently, one destroyed gadget may result in up to $3$ variables being
marked bad (in case the destroyed gadget corresponds to a size-3 clause).
Since there are at most $\frac{4\eps|P|}{\sqrt{2}-1}$ destroyed gadgets, we
can have at most $12\frac{\eps|P|}{\sqrt{2}-1}<30\eps|P|$ bad variables.
Therefore, for all variables that have not been marked ``bad'', as well as the
connected wires, snakes, crossovers, and clause gadgets only have incident
edges of length $1$. Just as in the original construction, we can use the
length $1$ edges of the tour in these variable gadgets to define a partial
assignment for the non-bad variables. This partial assignment is guaranteed to
satisfy all the clauses that contain only non-bad variables. Since each
variable occurs in at most $3$ clauses, we have at most $90\eps|P|$ clauses
that have a bad variable, so the partial assignment for the non-bad variables
will satisfy at least $m-90\eps|P|=\left(1-90\eps c
\frac{n^{d/(d-1)}}{m}\right) m$ clauses.

We can now set $\eps=\frac{\delta m}{90 c n^{d/(d-1)}}$. Since $m=\Theta(n)$,
we have that $\eps=\Theta(1/n^{1/(d-1)})$. We can now apply the approximation
algorithm for \textsc{Euclidean TSP} with the above $\eps$ on $P$. As a
result, we can distinguish between a satisfiable formula (and thus a tour of
length $|P|$) and a formula in which all assignments violate at least $\delta
m$ clauses, where therefore any tour has length more than $(1+\eps)|P|$. Since
the construction time of $P$ is polynomial in $n$, the total running time of
this algorithm is $2^{\gamma/\eps^{d-1}}\poly(n)= 2^{\gamma c' n}$ for some
constant $c'$. The existence of such algorithms for all $\gamma>0$ would
therefore violate Gap-ETH  by Corollary~\ref{cor:max33sat}.
\end{proof}

\subsection{Lower bound for approximating \textsc{Rectilinear Steiner tree}}
\label{rectilinear-steiner-tree}

We will now prove the lower bound of Theorem~\ref{thm:st}, which can be stated as follows.

\begin{theorem}\label{thm:rectsteinerlower}
For any $d\geq 2$ there is a $\gamma>0$ such that there is no $(1+\eps)$-approximation algorithm for
\textsc{Rectilinear Steiner Tree} in $\Reals^d$ that has running time
$2^{\gamma/\eps^{d-1}}\poly(n)$, unless Gap-ETH fails.
\end{theorem}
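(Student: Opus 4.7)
The plan is to mirror the proof of Theorem~\ref{thm:tsplower} but with (a) a grid-embedded intermediate problem amenable to Steiner Tree (Grid Embedded Connected Vertex Cover from \cite{frameworkpaperjournal}) and (b) the rectilinear gadgets of Garey and Johnson~\cite{garey1977rectilinear} in place of the TSP traversal gadgets. The high-level flow is: starting from a (3,3)-CNF formula $\phi$ on $n$ variables and $m=\Theta(n)$ clauses given by Corollary~\ref{cor:max33sat}, apply the framework of~\cite{frameworkpaperjournal} to lay out a point set $P\subset \Reals^d$ on an integer grid of side length $\Theta(n^{1/(d-1)})$, with $|P|=\Theta(n^{d/(d-1)})$, that decomposes into variable, clause, wire, and ``snake'' gadgets, each realised by a Garey--Johnson rectilinear sub-instance. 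The construction should guarantee a base weight $W_0=\Theta(n^{d/(d-1)})$ such that the minimum rectilinear Steiner tree on $P$ has weight $W_0$ iff $\phi$ is satisfiable.

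The key step is to establish a \emph{rigidity property} analogous to Observation~\ref{obs:traversal}: for each gadget $G$ there is an absolute constant $\Delta>0$ (e.g.\ $\Delta=1$ after integer scaling) and a finite state space $\mathcal{Q}_G$ such that
\begin{enumerate}[label=(\roman*)]
    \item every valid state $q\in \mathcal{Q}_G$ is realised by an optimal Steiner sub-forest of $G$ of weight exactly $w_G$, and
    \item any Steiner sub-forest connecting the boundary ports of $G$ whose induced boundary behaviour is not a valid state has $\ell_1$-weight at least $w_G+\Delta$.
\end{enumerate}
By Hanan's theorem (used in Section~\ref{dynamic-programming}) one may assume Steiner points lie on the Hanan grid of the terminals, so (ii) reduces to a finite combinatorial case analysis on each gadget, which is essentially what Garey--Johnson already carry out to prove NP-hardness. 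Verifying this cleanly, uniformly across variables/clauses/wires/crossovers/snakes, and in dimension $d\geq 3$ via the lifting recipe of~\cite{frameworkpaperjournal} is the heart of the proof.

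With rigidity in hand, the gap amplification proceeds exactly as in Theorem~\ref{thm:tsplower}. Suppose towards contradiction that for every $\gamma>0$ there is a $(1+\eps)$-approximation in time $2^{\gamma/\eps^{d-1}}\poly(n)$. A returned tree of weight at most $(1+\eps)W_0$ exceeds the optimum by at most $\eps W_0$, so by (ii) at most $\eps W_0/\Delta$ gadgets are ``destroyed''. Mark a variable bad if its variable gadget, any incident wire, crossover, snake, or any clause gadget containing it is destroyed; since each variable touches $O(1)$ gadgets and each destroyed gadget blames $O(1)$ variables, at most $C\eps W_0$ variables are bad for an absolute constant $C$. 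On the non-bad variable gadgets all incident edges of the returned tree have the canonical unit lengths of valid states, so one can read off a partial assignment satisfying every clause whose variables are all good, leaving at most $3C\eps W_0$ violated clauses. Choosing $\eps = \frac{\delta m}{3C\cdot W_0}=\Theta(n^{-1/(d-1)})$ turns the approximation algorithm into a distinguisher for Corollary~\ref{cor:max33sat} running in time $2^{\gamma/\eps^{d-1}}\poly(n) = 2^{\gamma' n}\poly(n)$ with $\gamma'$ arbitrarily small, contradicting Gap-ETH.

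The main obstacle is property (ii): unlike the TSP setting, there is no ``diagonal edge'' penalty to leverage, and Steiner points may be placed freely. One must therefore argue purely combinatorially, per gadget, that every non-canonical boundary behaviour forces at least one additional unit-length segment (after scaling coordinates to integers so that $\Delta$ is a fixed constant independent of $n$ and $d$). A secondary technicality is the crossover/snake machinery in $d=2$, which already exists in~\cite{frameworkpaperjournal} but must be re-checked to ensure that rerouting a single gadget cannot cascade into free ``repairs'' of many clauses; this is handled by standard gadget isolation (making distinct gadgets lie at $\ell_1$-distance $>1$ apart, so that Steiner points in one gadget cannot help another), which is already a design invariant of the construction in~\cite{frameworkpaperjournal}.
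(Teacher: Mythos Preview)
Your high-level plan is reasonable, but the per-gadget rigidity framework that works for TSP does not transfer to Steiner Tree as cleanly as you suggest, and the paper in fact takes a different route. Rather than attempting to build variable/clause/wire/snake gadgets directly out of Garey--Johnson sub-instances and arguing rigidity locally, the paper first carries out an explicit chain of approximation-preserving reductions from \textsc{Max-(3,3)SAT} to \textsc{Grid Embedded Connected Vertex Cover} (separately for $d=2$ using the Garey--Johnson skeleton and for $d\geq 3$ using a new cycle-based skeleton plus the Cube Wiring Theorem), and only then passes to \textsc{Rectilinear Steiner Tree} via the cutout construction. The hard direction --- showing that an approximate Steiner tree yields an approximate connected vertex cover --- is handled by a \emph{global canonization} argument: any Steiner tree is transformed, without increasing its length, into a canonical tree consisting only of length-$1$ edges inside edge components and length-$2$ edges to cutout centers. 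The key step here is a double-counting argument on ``halos'' around edge components showing that every full Steiner subtree can only connect edge components incident to a single cutout vertex.

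The gap in your proposal is precisely the one you flag: property~(ii). Your suggested fix --- invoke Hanan's theorem and do a finite case analysis per gadget --- does not work as stated. Hanan's theorem only says that some \emph{optimal} tree lives on the Hanan grid of the \emph{entire} terminal set; it does not localize Steiner points to a single gadget's neighborhood, since a Hanan-grid point can mix coordinates coming from terminals in different gadgets. More fundamentally, unlike a TSP tour (which must visit every point of a gadget and therefore induces a well-defined traversal with well-defined entry/exit points), a Steiner tree has no a priori reason to enter or leave a gadget only through prescribed ``ports'', and there is no analogue of the $\sqrt{2}$-penalty for diagonal edges to force local structure. Your ``gadget isolation'' invariant (distinct gadgets at $\ell_1$-distance $>1$) does not by itself prevent a full Steiner subtree from sprawling across several cutouts via short hops; ruling this out is exactly what the paper's halo double-counting does, and it is a global argument requiring the large refinement factor (the constant $137$) rather than a per-gadget case check. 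Your plan also glosses over the $d\geq 3$ case, where the Garey--Johnson skeleton is not directly available and the paper substitutes a different connectivity scaffold before applying cube wiring.
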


The proof of Theorem~\ref{thm:rectsteinerlower} has three stages. In the first
stage, we give a reduction (in several steps) from \textsc{Max-(3,3)SAT},
which converts a $(3,3)$-CNF formula $\phi$ to a variant of connected vertex
cover on graphs drawn in a $d$-dimensional grid. In the second stage, given
such a connected vertex cover instance, we create a point set $P\subset
\Reals^d$ in polynomial time. A satisfiable formula $\phi$ will correspond to
a minimum connected vertex cover, which will correspond to minimum rectilinear
Steiner tree. The harder direction will be to show that from a
$(1+\eps)$-approximate rectilinear Steiner tree $T$ we can find a good
connected vertex cover and therefore a good assignment to $\phi$. Before we
can define a connected vertex cover based on the tree $T$, we need to show
that we can \emph{canonize} $T$, i.e., to modify parts of $T$ in a manner that
does not lengthen $T$, and at the same time makes its structure much simpler.
In the final stage, we use the canonized tree $T$ and an argument similar to
the one seen for \textsc{Euclidean TSP} above to wrap up the proof.

\subsubsection{From \textsc{Max-(3,3)SAT} to \textsc{Connected Vertex Cover}}

The construction begins in a slightly different manner for $d=2$ and for
$d\geq 3$, but the resulting constructions will share enough properties so
that we will be able to handle $d\geq 2$ in a uniform way in later parts of
this proof.

Let $\phi$ be a fixed $(3,3)$-CNF formula on $n$ variables, and let $G$ be its
\emph{incidence graph}, i.e., $G$ has one vertex for each variable and one
vertex for each clause of $\phi$, and a variable vertex and clause vertex are
connected if and only if the variable occurs in the clause.

A \emph{grid cube of side length $\ell$} is a graph with vertex set $[\ell]^d$
where a pair of vertices is connected if and only if their Euclidean distance
is $1$. We say that a graph is \emph{drawn in a $d$-dimensional grid cube of
side length $\ell$} if its vertices are mapped to distinct points of
$[\ell]^d$ and its edges are mapped to vertex disjoint paths inside the grid
cube.

Given a graph $G=(V,E)$, a vertex subset $S\subset V$ is a \emph{vertex cover}
if for any edge $e\in E$ there is a vertex incident to $e$ in $S$. The set $S$
is a \emph{connected vertex cover} if $S$ is a vertex cover and the subgraph
induced by $S$ is connected. The \textsc{Vertex Cover} problem is to find the
minimum vertex cover of a given graph on $n$ vertices, while \textsc{Connected
Vertex Cover} seeks the minimum connected vertex cover. If $G$ is restricted
to be in the class of graphs that can be drawn in an $n\times n$ grid, then the
corresponding problems are called \textsc{Grid Embedded Vertex Cover} and
\textsc{Grid Embedded Connected Vertex Cover}. (Note that the graph itself may
have up to $n^2$ vertices in these grid embedded problems.)

\paragraph{Grid embedding in $\Reals^2$}
Given $\phi$, \cite{frameworkpaperjournal} constructs a CNF formula $\phi'$ on
$\Oh(n^2)$ variables such that the incidence graph $G'$ of $\phi'$ is planar and
it can be drawn in $[cn]^2$ for some constant $c$, and each variable of
$\phi'$ occurs at most $3$ times, and each clause has size at most $4$. By
introducing a new variable for each clause of size $4$, we can replace a
clause $(x_1\vee x_2 \vee x_3 \vee x_4)$ with the clauses $(x_1\vee x_2 \vee
y) \wedge (\neg y \vee x_3 \vee x_4)$, and this corresponds to dilating the
original clause vertex and subdividing it with the variable vertex of $y$ in
$G'$. One can then modify the drawing of $G'$ accordingly. (The drawing of
$G'$ may need to be \emph{refined}, i.e., scaled up by a factor of $3$, while keeping the underlying grid unchanged, to provide enough space for
the new vertices.) As a result, we get a $(3,3)$-CNF formula $\phi_2$ whose
incidence graph $G_2$ is planar and has a drawing in the grid $[cn]^2$ for
some constant~$c$.

\begin{lemma}\label{lem:phi2dim}
The formula $\phi$ is satisfiable if and only of $\phi_2$ is satisfiable.
If $\phi_2$ has an assignment that satisfies all
but $t$ clauses, then $\phi$ has an assignment that satisfies all but
$6t$ clauses.
\end{lemma}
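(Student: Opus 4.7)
The lemma concerns the composition of two reductions: the planarization $\phi \to \phi'$ from \cite{frameworkpaperjournal} (introducing copies of variables and crossover gadgets to embed the incidence graph into $[cn]^2$) and the clause-splitting step $\phi' \to \phi_2$ that replaces each size-$4$ clause $(x_1 \vee x_2 \vee x_3 \vee x_4)$ by the pair $(x_1 \vee x_2 \vee y) \wedge (\neg y \vee x_3 \vee x_4)$ with a fresh variable $y$. My plan is to handle the satisfiability equivalence and the gap statement for each sub-reduction separately and then compose the two bounds.

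For the satisfiability equivalence, the $\phi \to \phi'$ step preserves satisfiability by the guarantee of \cite{frameworkpaperjournal}. For the clause-splitting step, given a satisfying assignment of $\phi'$, each 4-clause has a true literal, and setting $y$ to false (respectively true) when the true literal is $x_1$ or $x_2$ (respectively $x_3$ or $x_4$) simultaneously satisfies both replacement clauses. Conversely, restricting a satisfying assignment of $\phi_2$ to the variables of $\phi'$ satisfies every 4-clause, since whichever value $y$ takes, the assignment is forced to make either $x_1 \vee x_2$ or $x_3 \vee x_4$ true.

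For the gap statement, suppose $\alpha_2$ violates at most $t$ clauses of $\phi_2$ and let $\alpha'$ be its restriction to $V(\phi')$. I will show $\alpha'$ violates at most $t$ clauses of $\phi'$ by charging each violated clause of $\phi'$ to a distinct violated clause of $\phi_2$: three-clauses of $\phi'$ that also lie in $\phi_2$ have identical truth value under $\alpha'$ and $\alpha_2$, and for a 4-clause $c \in \phi'$ violated under $\alpha'$ (all four literals false) the two replacement clauses $(x_1 \vee x_2 \vee y)$ and $(\neg y \vee x_3 \vee x_4)$ block each other through $y$ and $\neg y$, so exactly one is violated in $\phi_2$ for any value of $y$. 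This yields an assignment $\alpha'$ of $\phi'$ with at most $t$ violations.

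From $\alpha'$ I then construct an assignment $\alpha$ of $\phi$ by majority-rounding: for each variable $x$ of $\phi$, set $\alpha(x)$ to the majority value among the copies of $x$ in $\phi'$ under $\alpha'$. Any clause of $\phi$ violated by $\alpha$ whose corresponding $\phi'$-clause was satisfied by $\alpha'$ must contain a copy that lies in the minority, and each such minority copy is witnessed by violated consistency clauses in the chain of copies of that variable, or within a nearby crossover gadget. Combined with the fact that each variable of $\phi$ occurs in at most three clauses, a charging argument based on the local gadget structure gives $\mathrm{unsat}_\phi(\alpha) \leq 6 \cdot \mathrm{unsat}_{\phi'}(\alpha') \leq 6t$; this is essentially the $L$-reduction property of the \cite{frameworkpaperjournal} construction. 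The main obstacle is verifying the precise constant $6$ in this charging, which depends on the explicit gadget structure of the planarization in \cite{frameworkpaperjournal}; all the other steps reduce to direct case analyses on the clause-splitting operation and the majority-rounding step.
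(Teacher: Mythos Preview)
Your treatment of the satisfiability equivalence and of the clause-splitting step $\phi'\to\phi_2$ is fine and matches the paper. The divergence is in how you pass from an almost-satisfying assignment of $\phi_2$ back to one of $\phi$.

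The paper does not use majority-rounding over ``copies'' of each original variable. In the planarization of~\cite{frameworkpaperjournal} as used here, the variables of $\phi$ appear \emph{as variables} of $\phi'$ (and hence of $\phi_2$); the new variables live only inside crossover gadgets (and, after clause-splitting, as the auxiliary $y$'s). So the paper simply \emph{restricts} the $\phi_2$-assignment to the original variables. The counting is then: every clause of $\phi_2$ is either already a clause of $\phi$ or sits inside a crossover gadget; an unsatisfied crossover clause can corrupt the two information lines passing through it, making at most two original variables ``bad''; since each original variable occurs in at most three clauses of $\phi$, one unsatisfied $\phi_2$-clause can cause at most $2\cdot 3=6$ unsatisfied $\phi$-clauses. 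That is the entire argument, and it pins down the constant $6$ directly.

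Your two-step route via majority-rounding assumes a different gadget model (chains of copies linked by consistency clauses), which is not how the construction is described here; this is why you were unable to verify the constant. If one insisted on your model, one could likely push a charging argument through, but it would be doing unnecessary work relative to the paper's one-line restriction argument, and getting exactly $6$ would require re-deriving the crossover structure anyway. I would replace the majority-rounding paragraph with the restriction-plus-bad-variable argument above.
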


\begin{proof}
The first statement follows from the construction. See
also~\cite{frameworkpaperjournal,Lichtenstein82}. For the second statement, we
simply restrict the assignment to the set of variables that are also present
in $\phi$; let us call these original variables. Note that an unsatisfied
clause within a crossing gadget of $\phi_2$ might make two variables ``bad''
(see the proof of Theorem~\ref{thm:tsplower} for a similar argument). Since
each variable occurs at most $3$ times, this means that up to $6$ clauses may
become unsatisfied. As all clauses either occur in a crossing gadget or are
inside $\phi$ itself, having $t$ unsatisfied clauses in $\phi_2$ means that
there can be at most $6t$ clauses that are unsatisfied by the assignment.
\end{proof}

\begin{figure}[t]
\centering
\includegraphics[width=\textwidth]{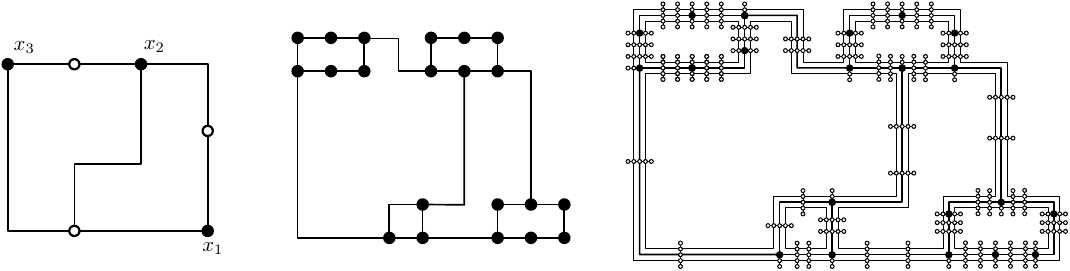}
\caption{Left: incidence graph of $\phi=(x_1\vee \neg x_2 \vee x_3) \wedge (x_1 \vee x_2) \wedge (x_2 \vee \neg x_3)$ drawn in a grid. Middle: an instance of \textsc{Vertex Cover} where variables are replaced with length-$6$ variable cycles, and size-$3$ clauses are replaced with triangles. Right: adding a skeleton (see~\cite{garey1977rectilinear}) to get an instance of \textsc{Connected Vertex Cover}.}\label{fig:cvcskeleton}
\end{figure}

Initially, we mostly follow the first few steps of the connected vertex cover
construction in \cite{frameworkpaperjournal}. Namely, refine this
grid drawing by a factor of $4$, that is, we scale the drawing from the origin
by a factor of $4$ while keeping the underlying grid unchanged. This allows us
to replace the vertices corresponding to variables with cycles of length $6$,
where the selection of odd or even vertices on the cycle corresponds to
setting the variable to true or false. We connect graph edges corresponding to
true literals to distinct even vertices and graph edges corresponding to false
literals to distinct odd vertices. Each vertex corresponding to the size $3$
clause is replaced by a cycle of length $3$, with one incoming edge for each of the
literals. For vertices corresponding to the size $2$ we subdivide one of the incident
edges into a path of length $3$. Finally, vertices corresponding to the clauses of size one can be removed
in a preprocessing step. Let $G^*_2$ be the resulting graph, which is drawn in
an $\Oh(n)\times \Oh(n)$ grid. In particular, if $G^*_2$ has $n^*_2$ vertices,
then $n^*_2=\Oh(n^2)$.

Note that each variable cycle needs at least $3$ of its vertices in the vertex
cover, each size-$3$ clause needs at least two vertices of its triangle in the
vertex cover, and each size-$2$ clause needs at least two internal vertices of
its path in the vertex cover. A size $3$-clause should be satisfied by some
literal, which would mean that the edge corresponding to this literal would be
covered from the variable cycle, and therefore it is sufficient to select the
other two vertices of the triangle. In a size-$2$ clause at least one of the
contained literals should be true, which exactly corresponds to the fact that
one of the endpoints of the corresponding edge has to be selected. Since there
is a path of length $5$ connecting these two vertices, we need to select the
two odd or even index internal vertices from it. With these at hand, the
following is a simple observation.

\begin{lemma}\label{lem:vc2dim}
The formula $\phi_2$ has a satisfying assignment if and only if $G^*_2$ has a
vertex cover of size $k^*_2=3n'+2m'=\Oh(n^2)$, where $n'$ and $m'$ are the
number of variables and size-$3$ clauses in $\phi_2$. If $G^*_2$ has a vertex cover of
size $k^*_2+t$, then $\phi_2$ has an assignment that satisfies all but at most
$9t$ clauses.
\end{lemma}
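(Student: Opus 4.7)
The plan is to prove both directions of the equivalence and then the approximation statement. Three gadgets need local analysis: the length-$6$ variable cycle (which needs at least $3$ vertices in any vertex cover, and any such minimum cover is forced to be alternating, hence encodes a Boolean value); the size-$3$ clause triangle (which needs at least $2$ of its three vertices); and the length-$5$ path replacing a size-$2$ clause (which needs at least $2$ internal vertices if at least one of its two variable-cycle endpoints lies in the cover, and at least $3$ internal vertices otherwise). Summing these per-gadget lower bounds yields that every vertex cover of $G^*_2$ has size at least $k^*_2$, with equality forcing each variable cycle to encode a truth value and each clause gadget to be in a configuration whose ``missing'' vertex must be compensated from the variable-cycle side.

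For the forward direction, starting from a satisfying assignment of $\phi_2$, I would take the three alternating cycle vertices encoding each variable's value. For each size-$3$ clause, I pick one satisfying literal, place into the cover the two triangle vertices not incident to that literal's connecting edge, and verify that together with the already-selected cycle endpoint this covers every edge of the gadget. Size-$2$ clauses are handled analogously: select two of the four internal path vertices with parity aligned to the (already-covered) satisfying literal's endpoint on the cycle. The count gives exactly $3n'+2m'=k^*_2$.

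For the reverse direction, tightness of the per-gadget lower bounds forces each variable cycle to contain exactly three alternating vertices, inducing a truth assignment. For each size-$3$ clause the unique triangle vertex outside the cover has its edge to its variable cycle covered from the cycle side, so the corresponding literal is true. The analogous observation for the length-$5$ path (where a $2$-internal-vertex cover requires an endpoint on some variable cycle to be in the cover) yields a true literal for each size-$2$ clause. Hence every clause is satisfied.

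For the quantitative claim, given a vertex cover $S$ of size $k^*_2+t$, call a variable cycle \emph{bad} if $|S\cap\text{cycle}|\geq 4$ and a clause gadget \emph{bad} if it contains strictly more than the minimum number of vertices ($3$ triangle vertices or $\geq 3$ internal path vertices). Each bad gadget contributes at least one unit of excess, so the counts $b$ of bad variables and $c$ of bad clauses satisfy $b+c\le t$. On every good variable cycle the three cover vertices are alternating and define a Boolean value; on bad variables assign arbitrarily. A good clause whose three (resp.\ two) incident variables are all good is satisfied by the reverse-direction argument applied locally. An unsatisfied clause is therefore either itself bad or incident to at least one bad variable, and since each variable appears in at most $3$ clauses of $\phi_2$, the number of unsatisfied clauses is at most $3b+c\le 3(b+c)\le 3t\le 9t$. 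The one delicate step is the gadget-local claim that a minimum-size cover inside a triangle or length-$5$ path forces a satisfied literal on the variable side, and I would settle that by a small case distinction on which vertex is omitted from the triangle or which parity pattern is used on the path.
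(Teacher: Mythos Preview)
Your proposal is correct and follows essentially the same approach as the paper: construct the cover from a satisfying assignment gadget-by-gadget, use the per-gadget lower bounds to force the alternating pattern on good variable cycles, and for the quantitative part charge excess to ``bad'' gadgets. Your bookkeeping is in fact slightly tighter than the paper's---by separating bad variable cycles from bad clause gadgets you obtain at most $3b+c\le 3t$ unsatisfied clauses, whereas the paper folds bad clauses into bad variables (each bad triangle marks up to three variables bad), getting $3t$ bad variables and hence $9t$ clauses; you then relax $3t$ to $9t$ to match the stated bound.
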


\begin{proof}
If $\phi_2$ is satisfiable, then we select the true or false vertices on the
variable cycles according to the assignment. In each clause, there is at least
one literal that is true; we select the vertices on the clause cycle that
correspond to the other two literals. The resulting set is clearly a vertex
cover. On the other hand, every vertex cover must have at least $3n'+2m'$
vertices, as in order to cover the variable cycles and the clause triangles
individually, one needs at least $3$ vertices per variable cycle and at least
$2$ vertices per clause triangle. Such a vertex cover in addition will select
only even or odd vertices from variable cycles, which yields a variable
assignment that satisfies $\phi_2$.

If $G^*_2$ has a vertex cover of size $k^*_2+t$, then there can be at most $t$
variable cycles or clause triangles where the number of vertices selected is
more than $3$ (respectively, $2$). Therefore we can mark "bad" any variable
whose cycle has more than $3$ vertices or that appears in size-$3$ clause
whose triangle has all vertices selected. Consequently, we have at most $3t$
bad variables. Since $3t$ variables can appear in at most $9t$ clauses, all
but at most $9t$ clause triangles will have exactly two vertices selected. One
can check that the assignment on the non-bad variables will then satisfy all
but $9t$ clauses.
\end{proof}

As a final step for the planar construction, we introduce the \emph{skeleton}
described by Garey and Johnson~\cite{garey1977rectilinear}; this again requires
that we refine the drawing by a constant factor. The procedure subdivides each
edge of the graph twice, using $n_{sub}$ new vertices, and also adds
$n_{skel}$ \emph{skeleton vertices}. An important property of the skeleton is
that the number of newly added vertices is
$n_{skel}+n_{sub}=\Theta(|V(G^*_2)|+|E(G^*_2|)=\Oh(n^2)$. The resulting graph is
drawn in an $\Oh(n)\times \Oh(n)$ grid. We use a final 4-refinement to ensure that
inside the $\ell_1$-disk of radius $4$ around each vertex $v$ the only grid edges
being used are on the horizontal or vertical line going through $v$. Let $G_2$
denote the resulting plane graph (i.e., the graph together with its embedding
in the plane).

\begin{lemma}\label{lem:cvc2dim}
The graph $G^*_2$ has a vertex cover of size $k^*$ if and only if $G_2$ has a
connected vertex cover of size $k_2\coloneqq k^*+(n_{skel}+n_{sub})/2$. If $G_2$ has a
connected vertex cover of size $k_2+t$, then $G^*$ has a vertex cover of size
$k^*+t$.
\end{lemma}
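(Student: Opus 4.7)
The plan is to prove both implications by a direct construction plus a counting argument, invoking the Garey--Johnson skeleton as a black box.

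\emph{Forward direction.} Given a vertex cover $S$ of $G^*_2$ with $|S|=k^*$, I would produce a connected vertex cover $C$ of $G_2$ by augmenting $S$ with: (i) for each original edge $uv\in E(G^*_2)$ with subdivided path $u$--$w_1$--$w_2$--$v$ in $G_2$, include exactly one of $w_1,w_2$, choosing $w_2$ if $u\in S$ and $w_1$ otherwise; since $S$ is a vertex cover at least one endpoint is in $S$, so all three edges of the path are covered; and (ii) the canonical selection of $n_{skel}/2$ skeleton vertices prescribed by the Garey--Johnson construction~\cite{garey1977rectilinear}, which covers every skeleton edge and forms a connected spine passing within constant distance of every original vertex. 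A straightforward verification shows $C$ is a vertex cover of $G_2$, has size $k^*+n_{sub}/2+n_{skel}/2=k_2$, and is connected: the spine is connected, each vertex of $S$ is linked to the spine through its adjacent subdivision vertex, and each added subdivision vertex is adjacent to a vertex of $S$ or of the spine.

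\emph{Reverse direction.} Let $C$ be a connected vertex cover of $G_2$ of size $k_2+t$, and let $\beta$ be the number of original edges $uv\in E(G^*_2)$ with $u,v\notin C$. For each subdivided path $u$--$w_1$--$w_2$--$v$, the three edge-covering constraints $|C\cap\{u,w_1\}|\ge 1$, $|C\cap\{w_1,w_2\}|\ge 1$, $|C\cap\{w_2,v\}|\ge 1$ together imply $|C\cap\{w_1,w_2\}|\ge 1$ always and $|C\cap\{w_1,w_2\}|\ge 2$ whenever $u,v\notin C$. Summing over all subdivided paths yields $|C\cap W|\ge n_{sub}/2+\beta$, where $W$ denotes the subdivision vertex set. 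The analogous property of the skeleton (built from subdivided chains of the same local structure) gives $|C\cap \mathrm{Skel}|\ge n_{skel}/2$. Hence
\[
|C\cap V(G^*_2)| \le (k_2+t) - n_{sub}/2 - \beta - n_{skel}/2 = k^*+t-\beta.
\]
I then set $S:=(C\cap V(G^*_2))\cup E_{\text{bad}}$, where $E_{\text{bad}}$ consists of one endpoint of each of the $\beta$ ``bad'' edges. Every edge of $G^*_2$ is covered either by its endpoint already in $C\cap V(G^*_2)$ or by the explicitly added endpoint in $E_{\text{bad}}$, so $S$ is a vertex cover of $G^*_2$ of size at most $k^*+t-\beta+\beta=k^*+t$. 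Taking $t=0$ yields the ``$\Leftarrow$'' direction of the equivalence.

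\emph{Main obstacle.} The nontrivial ingredient is the pair of properties of the Garey--Johnson skeleton: first, that an appropriate $n_{skel}/2$ skeleton vertices suffice to cover all skeleton edges while forming a connected spine adjacent to every original vertex (needed for the forward construction); second, that any connected vertex cover of $G_2$ must contain at least $n_{skel}/2$ skeleton vertices (needed for the counting in the reverse direction). Both facts are established in~\cite{garey1977rectilinear} by a local analysis of the skeleton gadget, so the cleanest route is to cite them rather than reprove them.
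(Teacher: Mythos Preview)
Your approach is correct and closely parallels the paper's: both treat the Garey--Johnson skeleton as a black box for the forward direction and use the same pairing argument (each subdivision pair $w_1,w_2$ and each skeleton pendant pair forces at least one vertex into any vertex cover) for the lower bounds $|C\cap W|\ge n_{sub}/2$ and $|C\cap\mathrm{Skel}|\ge n_{skel}/2$.

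The one substantive difference is in the reverse direction. The paper invokes the Garey--Johnson property that any \emph{connected} vertex cover of $G_2$, restricted to $V(G^*_2)$, is already a vertex cover of $G^*_2$; connectivity is what makes this true, since if $u,v\notin C$ for an original edge $uv$ then $\{w_1,w_2\}$ would be an isolated component of $C$. With that property in hand, the paper concludes directly that $|C\cap V(G^*_2)|\le k^*+t$ is a vertex cover. You instead bypass this restriction property with the explicit $\beta$-accounting and endpoint patching, which is slightly more elementary---your counting step never uses the connectivity of $C$---and arrives at the same bound. Both routes are valid; yours trades one cited black-box property for a short extra argument.

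One small wording slip in your forward-direction connectivity sketch: vertices of $S$ reach the spine through their adjacent \emph{skeleton} vertex, not through a subdivision vertex; the $w_i$ lie only on the subdivided paths $u\text{--}w_1\text{--}w_2\text{--}v$ and are not incident to the spine.
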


\begin{proof}
The construction of Garey and Johnson~\cite{garey1977rectilinear} has the
properties that (i) any connected vertex cover of $G_2$, when restricted to
the vertices of $G^*_2$, is a vertex cover of $G^*_2$, and (ii) one can add
$(n_{skel}+n_{sub})/2$ vertices among the subdivision and skeleton vertices to
any vertex cover of $G^*_2$ to get a connected vertex cover of $G_2$.  The
first claim follows directly from the above properties. For the second claim,
we note that the $n_{skel}$ skeleton vertices come in pairs, where one vertex
in the pair has degree one and is connected only to the other vertex.
Therefore, at least one vertex in each pair must be selected in every vertex
cover. Similarly, the vertices in $n_{sub}$ also come in pairs, each pair
being connected to each other, therefore one must select at least one vertex
from each pair into any vertex cover. Now consider a connected vertex cover of
size $k_2+t$ in $G_2$. Since there must be at least $(n_{skel}+n_{sub})/2$
vertices selected among the vertices newly introduced in $G_2$, there can be
at most $k_2+t-(n_{skel}+n_{sub})/2 = k^*+t$ vertices selected among the
original vertices in $G^*_2$.  It follows that $G^*_2$ has a vertex
cover of size $k^*+t$.
\end{proof}

\paragraph{Grid embedding in $\Reals^d$ for $d\geq 3$.}

We again start with the incidence graph $G$ of $\phi$, and let $L=|E(G)|$
denote the number of literal occurrences in $\phi$. Let $G^*$ be the graph
where variable vertices are replaced with variable cycles and clause vertices
by triangles (or for size $2$ clauses, paths) in the same manner as seen
in $G^*_2$. We will now define a different type of skeleton for these graphs.
First, we subdivide each edge of $G^*$ twice, that is, we remove the edge $vw$,
add the vertices $u',v'$, and add the edges $uu', u'v'$, and $v'v$. Let
$G^{**}$ be the resulting graph, and let $n^{**}$ denote the number of its
vertices. Consider the disjoint union of $G^{**}$ and the cycle graph $C^*$
with $n^{**}$ vertices. For each vertex $v$ of $G^{**}$, we can associate a
distinct vertex $v'$ in the cycle, and we also create a new vertex $v''$.
Finally, we add the edges $vv'$ and $v'v''$. It is routine to check that the
resulting graph has $\Oh(n)$ vertices and $\Oh(n)$ edges, and it has maximum
degree $4$. Therefore, we can apply the following theorem, which is
paraphrased from~\cite{frameworkpaperjournal}.

\begin{theorem}[Cube Wiring Theorem~\cite{frameworkpaperjournal}]
There is a constant $c$ such that for any $d\geq 3$ it holds that any graph
$G$ of maximum degree $2d$ on $n$ vertices can be drawn in a $d$-dimensional
grid cube of side length $cn^{1/{d-1}}$. Moreover, given $G$ and $d$ the
embedding can be constructed in polynomial time.
\end{theorem}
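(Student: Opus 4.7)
The plan is to place all $n$ vertices on a single $(d-1)$-dimensional face of the cube and then route each edge as a vertex-disjoint grid path through the remaining volume.

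First, I would place the $n$ vertices on the bottom face $\{x_d = 0\}$ in a $(d-1)$-dimensional subgrid of side length $\Theta(n^{1/(d-1)})$, keeping a constant spacing between adjacent vertices so that each vertex has its full complement of $2d-1$ grid neighbors in the cube free to initiate edge routes. Since $\Delta(G)\leq 2d$, Vizing's theorem partitions $E(G)$ into at most $2d+1$ matchings $M_1,\ldots,M_{2d+1}$; each $M_i$ is a partial matching on the bottom-face vertices. I would carve the cube above the bottom face into $2d+1$ horizontal slabs, one per matching. Inside slab $i$, each edge $(u,v)\in M_i$ is realized as a grid path that rises vertically from $u$, traverses horizontally inside the slab, and descends to $v$. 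Because distinct slabs are $x_d$-disjoint, routings for different matchings cannot interfere, so the problem reduces to realizing a single matching inside one slab.

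The technical core is therefore the following routing subroutine: given a partial matching on a $(d-1)$-dim mesh of side $N=\Theta(n^{1/(d-1)})$, produce vertex-disjoint grid paths, where paths may use a buffer of height $\Theta(N)$ in the $x_d$ direction above the mesh. I would handle this by a divide-and-conquer construction analogous to an unfolded Bene\v{s}/Clos network: split the $(d-1)$-dim mesh into two halves along one coordinate, route matched pairs that cross the split through a dedicated buffer layer of height $O(N)$ (which suffices since the splitting hyperplane is $(d-2)$-dimensional of volume $N^{d-2}$, and only $O(N^{d-2})$ pairs need to traverse it by a counting bound combined with the bounded-degree assumption), and then recurse on the two halves. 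The recursion has depth $O(\log N)$, but the height contribution per level forms a geometric series summing to $O(N)=O(n^{1/(d-1)})$. Summing across the $O(d)=O(1)$ slabs keeps the total side length $O(n^{1/(d-1)})$.

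The main obstacle is executing this routing step with the stated side-length bound: one must produce vertex-disjoint paths realizing an arbitrary partial matching while simultaneously respecting the $O(n^{1/(d-1)})$ buffer-height budget and polynomial-time constructibility. This can also be sidestepped by invoking classical mesh permutation-routing results (in the spirit of Leighton's 3D layout theorems), which route arbitrary permutations on an $N$-point $(d-1)$-dim mesh in $O(N)$ ``time'' with constant buffers, directly converting to an $O(N)$-height vertex-disjoint wiring. Edge coloring, vertex placement, and the recursive routing are each polynomial in $n$, yielding the required polynomial-time embedding.
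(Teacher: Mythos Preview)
The paper does not prove this theorem; it is quoted as a black-box result from~\cite{frameworkpaperjournal} and invoked without proof in the lower-bound construction for $d\geq 3$. There is therefore no proof in the present paper to compare your attempt against.

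Regarding your sketch on its own merits: the high-level architecture---place vertices on a $(d-1)$-dimensional face, edge-color into $O(d)$ matchings via Vizing, and route each matching in its own slab using a permutation-routing argument---is the standard strategy for such grid-embedding theorems and is broadly on the right track. However, your counting claim that ``only $O(N^{d-2})$ pairs need to traverse'' the bisecting hyperplane is wrong. In an arbitrary matching on the $N^{d-1}$ bottom-face points, \emph{all} $\Theta(N^{d-1})$ matched pairs can cross a given bisector; the bounded-degree hypothesis on $G$ buys nothing here, since a matching already has maximum degree~$1$. The correct accounting is that the $(d-2)$-dimensional cross-section contributes $N^{d-2}$ grid points per horizontal layer, so routing $\Theta(N^{d-1})$ vertex-disjoint crossing paths forces $\Theta(N)$ layers---which is the height budget you want, but your stated justification for it is incorrect. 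Your geometric-series claim for the recursion is also underspecified: whether the per-level height shrinks geometrically depends on which axis you bisect at each level (cycling through axes does not obviously give a geometric sum), and you do not pin this down. The fallback appeal to ``classical mesh permutation-routing results'' is the right instinct, but as written it is doing all the real work while the preceding divide-and-conquer paragraph contains the errors above.
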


Since the graph has maximum degree $4< 2d$ and $\Oh(n)$ vertices and edges,
the resulting drawing is in a grid cube of side length $\Oh(n^{1/(d-1)})$. For a vertex $v$, let $\ell_1(v)$ and
$\ell_2(v)$ be the lines that are parallel to the first and second coordinate axis
respectively and pass through $v$. We
use a constant-refinement and reorganize the neighborhood of each vertex $v$
in the grid drawing so that the grid edges used by the drawing in the
$\ell_1$-ball of radius $4$ around $v$ all fall on $\ell_1$ and $\ell_2$.
Let $G_d$ denote the resulting graph
together with the obtained grid drawing.

One can prove the analogue of Lemma~\ref{lem:vc2dim} for $G^*$ (with $\phi$
instead of $\phi_2$), and also the analogue of Lemma~\ref{lem:cvc2dim} for
$G_d$. 

Putting Lemmas~\ref{lem:phi2dim}, \ref{lem:vc2dim} and~\ref{lem:cvc2dim} together,
and putting the higher dimensional analogues of
Lemmas~\ref{lem:vc2dim}, \ref{lem:cvc2dim} together, we get the following corollary, which is all that we will need from this subsection for the proof.

\begin{corollary}\label{cor:construction}
For each $d\geq 2$ the following holds. Given a $(3,3)$-CNF formula $\phi$ on
$n$ variables, we can generate a graph $G_d$ of degree at most $4$ drawn in a
$d$-dimensional grid of side length $\Oh(n^{1/(d-1)})$ in polynomial time such that (a) if $\phi$ is satisfiable then $G_d$
has a connected vertex cover of size $k_d=\Oh(n^{d/(d-1)})$ and (b) if $G_d$ has a
connected vertex cover of size $k_d+t$, then $\phi$ has an assignment
satisfying all but $\Oh(t)$ clauses.
\end{corollary}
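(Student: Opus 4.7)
The plan is to assemble the corollary by chaining the reductions developed in the preceding Lemmas~\ref{lem:phi2dim}, \ref{lem:vc2dim}, and~\ref{lem:cvc2dim} for $d=2$, and by chaining their higher-dimensional analogues for $d\geq 3$. For $d=2$, given $\phi$ we build $\phi_2$, then $G^*_2$, then $G_2$, and we set $G_d := G_2$ and $k_d := k_2 = k^*_2 + (n_{skel}+n_{sub})/2$. Since $\phi_2$ has $\Oh(n^2)$ variables and clauses, $k_2 = \Oh(n^2) = \Oh(n^{d/(d-1)})$ for $d=2$, the final grid has side length $\Oh(n) = \Oh(n^{1/(d-1)})$, and the maximum degree is $4$ (variable cycles, clause triangles, subdivision vertices, and Garey--Johnson skeleton pairs all contribute degree at most $4$). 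Part (a) is a direct composition of the forward directions of the three lemmas. For part (b), a connected vertex cover of $G_2$ of size $k_2 + t$ gives, by Lemma~\ref{lem:cvc2dim}, a vertex cover of $G^*_2$ of size $k^*_2 + t$; then by Lemma~\ref{lem:vc2dim} an assignment to $\phi_2$ satisfying all but $9t$ clauses; then by Lemma~\ref{lem:phi2dim} an assignment to $\phi$ satisfying all but $54t = \Oh(t)$ clauses.

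For $d\geq 3$, the plan is to formally establish the higher-dimensional analogues of Lemmas~\ref{lem:vc2dim} and~\ref{lem:cvc2dim} sketched after the Cube Wiring Theorem and then chain them. Concretely, we replace $\phi_2$ with $\phi$ itself and $G^*_2$ with the corresponding graph $G^*$ (variables $\to$ length-$6$ cycles, size-$3$ clauses $\to$ triangles, size-$2$ clauses $\to$ length-$5$ paths). The analogue of Lemma~\ref{lem:vc2dim} is obtained by repeating its proof verbatim, yielding $k^* = 3n'+2m' = \Oh(n)$ and the same $9t$-loss in the backwards direction. For the connectivity step we use the pair-based skeleton built on top of the Hamiltonian cycle $C^*$: each vertex $v$ of the doubly-subdivided $G^*$ is paired with a cycle vertex $v'$ and a pendant $v''$, forcing every vertex cover to contain at least one vertex of each pair and hence a constant number of extra vertices per original vertex, so that together with any vertex cover of $G^*$ one obtains a connected vertex cover, and conversely the restriction of any connected vertex cover to $V(G^*)$ is a vertex cover of $G^*$. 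The embedding is then produced by invoking the Cube Wiring Theorem on the resulting graph, which has maximum degree $4 < 2d$ and $\Oh(n)$ vertices, giving a drawing in a $d$-dimensional cube of side length $\Oh(n^{1/(d-1)})$. A constant-factor refinement at the end guarantees the axis-aligned local condition near each vertex while preserving all asymptotic parameters.

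With these ingredients in place, the bookkeeping for the corollary is straightforward. For $d=2$, $|V(G_d)| = \Oh(n^2)$ gives $k_d = \Oh(n^2) = \Oh(n^{d/(d-1)})$; for $d\geq 3$, $|V(G_d)| = \Oh(n)$ gives $k_d = \Oh(n) \leq \Oh(n^{d/(d-1)})$ since $d/(d-1) > 1$. The maximum degree $4$ and grid side length $\Oh(n^{1/(d-1)})$ are preserved through each of the finitely many constant-factor refinements. For the backward direction, chaining the three (or two) lemmas multiplies the loss by a constant at each stage, so a connected vertex cover of size $k_d+t$ always yields an assignment violating $\Oh(t)$ clauses of $\phi$.

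The main obstacle I expect is the careful accounting of the grid side length: each of the constructive steps (subdividing edges to make room for variable cycles and clause triangles, adding skeleton vertices, and enforcing the ``axis-aligned near every vertex'' condition) demands a refinement of the underlying grid. One must verify that only $\Oh(1)$ such refinements are needed in total so that the asymptotic bound $\Oh(n^{1/(d-1)})$ survives. Beyond this, the remaining work is a routine verification that the state of each gadget (cycle, triangle, skeleton pair) behaves as claimed under approximate solutions, precisely mirroring the ``bad variable'' counting used in the proof of Theorem~\ref{thm:tsplower}.
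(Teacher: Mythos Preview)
Your proposal is correct and takes essentially the same approach as the paper: the paper's entire proof is the single sentence preceding the corollary, which says exactly what you wrote out in detail, namely chain Lemmas~\ref{lem:phi2dim}, \ref{lem:vc2dim}, \ref{lem:cvc2dim} for $d=2$ and the stated higher-dimensional analogues of the last two for $d\geq 3$. Your explicit bookkeeping (the $9t\to 54t$ cascade, the observation that $k_d=\Oh(n)\subseteq\Oh(n^{d/(d-1)})$ for $d\geq 3$, and the constant number of refinements) simply fills in details the paper leaves implicit.
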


\subsubsection{Construction and canonization}

Given any graph $G_d$ that is drawn in a $d$-dimensional grid, we construct a
point set $P_d$ the following way.

\begin{enumerate}
	\item Refine the drawing of $G_d$ by a factor of $137$. (The constant $137$ will be justified in Lemma~\ref{lem:canonize2}.)
	\item Add all grid points that are internal to edges of $G_d$ to $P_d$.
	\item Remove any point from $P_d$ that is at distance strictly less
	      than $2$ from a vertex of $G_d$.
\end{enumerate}

We call the set of points in $P_d$ that fall on an edge of $G_d$ \emph{edge
components}, and the $l_1$ balls of radius $2$ around
vertices of $G_d$ are the \emph{cutouts}. (The cutouts are cubes of diameter $4$ whose diagonals are axis-parallel; i.e., they are regular ``diamonds'' in $\Reals^2$.)

The resulting set $P_d$ is our construction for \textsc{Rectilinear Steiner
Tree}. The following lemma can be found in~\cite{garey1977rectilinear}, but we
provide a proof for completeness.

\begin{figure}[t]
    \centering
    \includegraphics{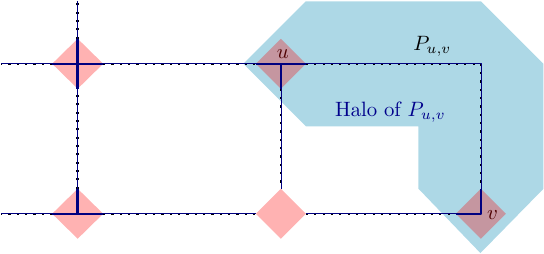}
    \caption{Construction of components from \cite{garey1977rectilinear} in $\Reals^2$. Dotted edges are edge
    components, and the red rotated squares are the cutouts. The blue object
    represents the halo of the edge component $P_{u,v}$ in the top-right corner. Blue (thin and thick) edges correspond to the edges of a canonical Steiner tree.}
    \label{fig:rectilinear}
\end{figure}

\begin{lemma}[Garey and Johnson~\cite{garey1977rectilinear}]\label{lem:Steiner_cvc_easy}
If $G_d$ has a connected vertex cover of size $k_d$, then $P_d$ has a
rectilinear Steiner tree of total length $\ell_d\coloneqq L+2|E(G_d)| +
2(k_d-1)$, where $L$ is the total length of the edge components. 
\end{lemma}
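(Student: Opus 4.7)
The plan is to exhibit an explicit rectilinear Steiner tree of $P_d$ of length exactly $\ell_d$, built from the given connected vertex cover $S\subseteq V(G_d)$ of size $k_d$. First I would fix a spanning tree $T_S$ of the subgraph of $G_d$ induced by $S$; it exists because $S$ is connected by hypothesis, and it contains exactly $k_d-1$ edges. I will rely on the following geometric feature guaranteed by the refinement steps in the construction of $G_d$ and $P_d$: inside the closed $\ell_1$-ball of radius $4$ around each vertex $v$ of $G_d$, all drawn grid edges lie on the coordinate lines through $v$. Consequently, for every edge $uv\in E(G_d)$ the closest point of the edge component $P_{uv}$ to $u$ sits at $\ell_1$-distance exactly $2$ and is joined to $u$ by a unique axis-parallel segment of length $2$ inside the cutout, and the same holds for $v$; moreover, these short ``stubs'' around distinct vertices are pairwise disjoint and disjoint from every edge component other than $P_{uv}$.

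With this in hand, I would define the candidate tree $\tau$ as the union of three pieces: (a) every edge component, contributing total length $L$; (b) for each edge $uv\in E(G_d)$, one length-$2$ stub from $P_{uv}$ to a chosen endpoint that lies in $S$, which exists because $S$ is a vertex cover, contributing $2|E(G_d)|$ in total; and (c) for each edge $uv\in T_S$, the length-$2$ stub on the \emph{other} endpoint of $P_{uv}$ as well, so that $P_{uv}$ now links $u$ and $v$ through its own edge component, contributing $2(k_d-1)$ in total. Summing these three contributions yields exactly $\ell_d=L+2|E(G_d)|+2(k_d-1)$, matching the target.

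It then remains to verify that $\tau$ is a valid connected rectilinear Steiner object for $P_d$. Spanning is immediate since $P_d$ is the disjoint union of the edge components, all of which lie in (a). Connectedness follows in two steps: (c) turns each $T_S$-edge into a genuine $u$--$v$ path inside $\tau$ (the two stubs glued onto $P_{uv}$), so the vertex set $S$ becomes connected inside $\tau$ via the spanning tree $T_S$; and then (b) attaches every remaining edge component as a pendant branch rooted at some vertex of $S$. A quick Euler-characteristic check confirms that $\tau$ is acyclic (it is $T_S$ with each tree edge subdivided through its edge component, plus pendant paths off $S$), but this is not strictly needed for the length equality.

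The main point to be careful about is that the stubs in (b) and (c) can actually be realized by non-overlapping axis-parallel grid paths and do not collide with other edge components or cutouts. This is exactly what the $4$-refinement of the drawing of $G_d$ combined with the outer $137$-refinement in the construction of $P_d$ provides: the neighborhood of every vertex $v$ is large, axis-aligned and private to $v$, so the stubs can be added independently. Once those two refinement properties are invoked, the verification is routine and the length bound follows directly by addition.
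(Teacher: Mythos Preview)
Your proposal is correct and follows essentially the same construction as the paper: take a spanning tree of the connected vertex cover, include all edge components, attach both length-$2$ stubs for spanning-tree edges and one stub for the remaining edges. The only difference is bookkeeping---the paper tallies the stubs as $4(k_d-1)+2(|E(G_d)|-k_d+1)$ whereas you tally them as $2|E(G_d)|+2(k_d-1)$---and you spell out the connectedness and non-overlap justifications more carefully than the paper does.
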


\begin{proof}
Given a connected vertex cover $S$ of size $k_d$, we construct a Steiner tree
$T$ the following way. First, we add all length $1$ edges connecting
neighboring vertices in edge components to $T$; these have total length $L$.
Since $S$ induces a connected graph, it has a spanning tree that has $k_d-1$
edges. The total length of these edges is $4(k_d -1)$. On each such edge
component, we add the length $2$ edges that connect to the incident cutouts on
both ends. Furthermore, on each edge $e$ of $G_d$ that is not an edge of this
spanning tree, there is at least one endpoint $s$ of $e$ that is in $S$ as $S$
is a vertex cover. We add a length $2$ cutout edge connecting the edge
component of $e$ to the center of the cutout of $s$. These edges have a total
length of $2(|E(G_d)| - k_d +1)$. The result is a tree $T$ on $P_d$ that has
total length
\[L+4(k_d -1) + 2(|E(G_d)| - k_d +1) = L+2|E(G_d)| + 2(k_d-1).\qedhere\]
\end{proof}

In the remainder of this subsection, we will canonize an approximate Steiner tree of $P_d$ in order to prove the following lemma.

\begin{lemma}\label{lem:Steiner_cvc_hard}
If $P_d$ has a rectilinear Steiner tree of length $\ell_d+\ell'$, then $G_d$ has a connected vertex cover of size $k_d + \ell'/2$.
\end{lemma}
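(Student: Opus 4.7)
The plan is to \emph{canonize} the given rectilinear Steiner tree $T$ into a tree $T'$ of no greater length whose structure is so restricted that a connected vertex cover of $G_d$ can be read directly off $T'$. The canonical form for $T'$ consists exclusively of two kinds of edges: (i) every length-$1$ grid edge between two consecutive terminals of an edge component (contributing exactly $L$ to the total weight), and (ii) \emph{spokes} of length $2$, each joining a boundary point of some cutout to the center of that cutout. In particular $\wt(T') = L + 2s$, where $s$ is the number of spokes.

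To produce $T'$ from $T$, I would perform two local operations. First, by Hanan's theorem we may assume $T$ uses only edges of the Hanan grid of $P_d$; inside every cutout this restricts $T$ to lie on the axis-parallel lines through the center $v$. I would then replace the portion of $T$ inside each cutout with the minimum rectilinear Steiner tree connecting $v$ and the set of boundary points where $T$ enters the cutout. Since every such boundary point is at $\ell_1$-distance exactly $2$ from $v$ while every pair of boundary points is at $\ell_1$-distance $4$, the star from $v$ to any set $X$ of used boundary points has length $2|X|$ and is a minimum-weight rectilinear Steiner tree on $\{v\} \cup X$; hence replacing the old subtree by this star can only decrease the total length. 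Second, within each edge component I would swap any detour between two consecutive terminals for the direct length-$1$ connection. The generous refinement by $137$ in the construction ensures that all these swaps act independently on distinct cutouts and edge components, so $T'$ remains a spanning tree of $P_d$ with no increase in total weight; this is the content of Lemma~\ref{lem:canonize2}.

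Having obtained $T'$, let $S \subseteq V(G_d)$ be the set of vertices whose cutout contains a nonempty star. Every edge component lies entirely in $T'$, so for every $e=(u,v)\in E(G_d)$ the edge component of $e$ must be attached to the rest of $T'$ through at least one spoke; this forces $u\in S$ or $v\in S$, so $S$ is a vertex cover. Moreover, the only way the stars at two distinct centers $u,v\in S$ can be linked within $T'$ is via a path that alternates spokes and edge components, and such a path traces a walk in $G_d[S]$; hence $G_d[S]$ is connected. Letting $a_e\in\{1,2\}$ be the number of spokes on the edge component of $e$, connectedness of $G_d[S]$ forces at least $|S|-1$ edges to satisfy $a_e=2$, while the remaining edges have $a_e\ge 1$ because $S$ is a vertex cover, yielding $s\ge |E(G_d)|+|S|-1$. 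Therefore
\[
\ell_d + \ell' = \wt(T) \ge \wt(T') = L + 2s \ge L + 2|E(G_d)| + 2(|S|-1) = \ell_d + 2(|S|-k_d),
\]
which gives $|S| \le k_d + \ell'/2$, as required.

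The main obstacle is the canonization step. Verifying that the local modifications inside cutouts and along edge components can be chained without breaking the global tree structure or increasing the total weight is the technical heart of the argument; this is precisely why the construction refines the grid drawing of $G_d$ by the otherwise mysterious factor $137$, providing sufficient slack between distinct cutouts and edge components for the modifications to be performed independently.
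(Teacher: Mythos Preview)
Your overall plan---canonize, then read off a connected vertex cover and count spokes---is exactly the paper's, and your extraction and counting (the displayed inequality at the end) are correct and essentially identical to what the paper does. The gap is entirely in the canonization sketch.

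Two concrete errors. First, Hanan/Snyder only puts Steiner points on the Hanan grid of $P_d$, and since far-away edge components contribute arbitrary integer coordinates, that Hanan grid is essentially the full integer grid near $v$; it does \emph{not} force $T$ inside a cutout onto the two axis lines through $v$. Consequently $T$ may enter a cutout at points like $v+e_i+e_j$, and such entry points are at $\ell_1$-distance $2$ (not $4$) from $v\pm 2e_i$, so your star is not a minimum Steiner tree on the entry set and the local replacement can lengthen $T$. Second, and more structurally, before any canonization a full Steiner subtree of $T$ can connect edge components that are incident to \emph{different} vertices of $G_d$; no amount of local surgery confined to a single cutout can handle that.

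The paper's canonization is genuinely different from your sketch. Lemma~\ref{lem:canonize1} first forces all length-$1$ edge-component edges into $T$ and bounds every remaining edge by length $4$. Then Lemma~\ref{lem:canonize2} works \emph{globally} on each full Steiner subtree $F$: it surrounds each edge component with a ``halo'' of $\ell_1$-radius $68$ and uses a double-counting of (Steiner point, halo) incidences---each Steiner point lies in at most $4$ halos, while any path in $F$ from a leaf to the outside of its halo must contain at least $68/4-1=16$ Steiner points---to derive a contradiction unless all leaves of $F$ lie in edge components sharing a single vertex. Only then is the star replacement safe (and seen to be optimal via the Euler-tour length bound). This halo argument is the real reason for the factor $137$; it is not merely ``slack'' to make local modifications independent, as you suggest.
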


\paragraph{Canonization}

We say that a Steiner tree of $P_d$ is \emph{canonical} if (i) every length
$1$ edge in edge components is included in $T$ and (ii) all other segments in
$T$ have length $2$ and connect the center of some cutout to the nearest
point in one of the incident edge components. 

A Steiner point of $T$ is a point that has degree at least $3$. The vertices
of $T$ are its Steiner points and $P_d$. An edge of $T$ is a curve connecting
two vertices. It follows that $T$ has at most $|P_d|-2$ Steiner points.

For a fixed constant $d\ge 2$ we can change each edge of
$T$ so that it is a minimum length path in the $\ell_1$ norm between these two
vertices, by moving parallel to the $x_1$ axis, then to the $x_2$ axis, etc.
until we arrive at the destination. The resulting tree consists of $\Oh(|P_d|)$
axis-parallel segments.

\begin{lemma}~\label{lem:canonize1}
For any approximate Steiner tree $T$ on $P_d$ there is a Steiner
tree $T'$ so that it contains all length-$1$ edges in edge components and all
of its edges have length at most $4$, and $T'$ is no longer than $T$.
\end{lemma}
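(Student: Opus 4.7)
The proof proceeds in three stages. First, I would normalize $T$ to rectilinear form: replace each edge by its $\ell_1$-shortest axis-parallel path, then split at every bend and at any point of $P_d$ that lies on an edge, so that every edge of $T$ is a single axis-parallel segment whose endpoints are points of $P_d$ or Steiner points. This preserves the total $\ell_1$-length of $T$.

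Second (the core step), I would enforce that every length-$1$ edge between consecutive $P_d$-points in an edge component is an edge of the tree. Let $E_1$ denote this set of unit edges. For each edge component $EC=\{p_1,\ldots,p_k\}$, let $S_{EC}$ be the minimal subtree of $T$ spanning $EC$. Since the points of $EC$ are collinear with consecutive $\ell_1$-distance $1$, we have $\wt(S_{EC})\ge k-1$. I would replace $S_{EC}$ inside $T$ with the direct unit-edge path $p_1\text{-}p_2\text{-}\cdots\text{-}p_k$ of total length $k-1$, and re-route every external attachment of $T\setminus S_{EC}$ to $S_{EC}$ to the nearest point on this new path. A careful amortized accounting shows that the total re-routing cost is absorbed by the slack $\wt(S_{EC})-(k-1)$: any attachment point $x\in S_{EC}$ at lateral $\ell_1$-distance $d$ from the $EC$-line requires at least $d$ units of length within $S_{EC}$ to be reachable, whereas re-routing $x$ to the nearest point of the new path costs at most $d$ (plus an analogous longitudinal term near the endpoints). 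Iterating over all edge components yields a tree $T_1$ with $\wt(T_1)\le \wt(T)$ that contains $E_1$.

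Third, to bound edge lengths, I would subdivide any edge of $T_1$ whose length exceeds $4$ by inserting Steiner points at interior grid intersections; this preserves the total $\ell_1$-length while bounding each edge length by $1$. The resulting tree $T'$ contains all of $E_1$ (whose edges have length $1 \le 4$ and are never subdivided), has every edge of length at most $4$, and satisfies $\wt(T')\le \wt(T)$, as required.

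The main obstacle is the amortized accounting in the second stage: verifying that the re-routing of each external attachment to $S_{EC}$ is indeed absorbed by the slack of $S_{EC}$. This relies critically on the geometry produced by the $137$-refinement---cutouts of diameter $4$, edge components of length at least $133$, and a clean $4$-neighborhood around each $G_d$-vertex---which ensures that external attachments to $S_{EC}$ occur at geometrically constrained locations with bounded lateral extent, and that detours in $S_{EC}$ contribute enough excess length to pay for the re-routing.
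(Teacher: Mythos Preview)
Your third stage is incorrect, and this is a genuine gap rather than a missing detail. In this paper a \emph{Steiner point} is a point of degree at least $3$, the \emph{vertices} of $T$ are its Steiner points together with $P_d$, and an \emph{edge} is a curve between two consecutive vertices. Inserting degree-$2$ subdivision points at grid intersections therefore does \emph{not} create new vertices and does \emph{not} shorten any edge; the edge is still the same curve of the same length. This matters downstream: Lemma~\ref{lem:canonize2} uses the edge-length bound to argue that a path of length at least $68$ inside a halo contains at least $68/4-1=16$ genuine degree-$\ge 3$ Steiner points, which then feeds the double-counting of the pair count $\mu$. If edges could be ``shortened'' by degree-$2$ subdivision, that counting would collapse. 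You need an actual swap argument to remove long edges: the paper deletes any edge of length $>4$, which splits the tree into two terminal sets $P,Q$, and then observes that because $P_d$ is $4$-connected in $\ell_1$ (consecutive points in an edge component are at distance $1$, and endpoints of edge components sharing a cutout are at $\ell_1$-distance at most $4$), there exist $p\in P$, $q\in Q$ with $\lVert p-q\rVert_1\le 4$, so one can reconnect with a strictly shorter edge.

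Your second stage is also more fragile than necessary, and you correctly flag the amortized accounting as the obstacle. The paper avoids this entirely by first invoking Snyder's local moves to push all Steiner points onto the Hanan grid of $P_d$ (without lengthening), and then doing a simple cycle-swap: add the missing unit edge $uv$, observe that the resulting cycle must contain an edge that either has length $>1$ or has length $1$ but lies outside every edge component (using that Hanan-grid points are at pairwise distance $\ge 1$ and distinct edge components are at distance $\ge 4$), and delete that edge. This is a one-edge-at-a-time exchange with an obvious monovariant, whereas your wholesale replacement of $S_{EC}$ requires controlling all external attachments simultaneously and arguing non-interference when iterating over edge components---neither of which you actually carry out.
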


\begin{proof}
Recall that the Hanan-grid~\cite{hanan1966steiner} of $P_d$ is the set of
points $H_d\subset \Reals^d$ that can be defined as the intersection of $d$ distinct
axis-parallel hyperplanes incident to $d$ (not necessarily distinct) points of
$P_d$. Snyder~\cite{Snyder92} shows that there exists a minimum rectilinear
Steiner Tree whose Steiner points are on the Hanan-grid. In fact, he proposes
modifications that are local, and can be applied also to a non-optimal Steiner
tree, i.e., affect only a vertex of a Steiner tree and its neighbors, and
result in a tree whose Steiner points lie on the Hanan-grid. None of the local
modifications lengthen the tree. Moreover, each type of local modification
either shortens $T$ by removing at least one segment, or it does not shorten
the tree but it can be repeated no more than $\Oh(|P_d|)$ times. Consequently,
given a tree $T$, we can use the local modifications of Snyder exhaustively to
get a tree that is not longer than $T$, and whose Steiner points lie in the
Hanan-grid of $P_d$.

Suppose now that $T$ is a rectilinear Steiner tree of $P_d$ whose Steiner
points are in $H_d$. Notice that the minimum distance between points of $H_d$
is $1$, and the minimum distance between points from two distinct edge
components is at least $4$. Suppose that there is an edge $uv$ of length $1$
in an edge component that is not in $T$. Then adding $uv$ to $T$ creates a
cycle, and that cycle has at least one edge $e$ that has either length more
than $1$, or $e$ has length exactly one but its endpoints are not covered by
any edge component. Replacing $e$ with $uv$ therefore results in a tree that is
no longer than $T$ and has one more length-$1$ edge that is inside an edge
component. Repeating the above check $\Oh(|P_d|)$ times results in a tree $T$
that contains all length-$1$ edges in edge components. Suppose now that $T$
has an edge $uv$ of length more than $4$. Removing the edge $uv$ from $T$
creates a forest of two trees. Suppose that one tree spans $P\subset P_d$ and
the other spans $Q\subset P_d$. Note that $P$ and $Q$ are non-empty, disjoint,
and their union is $P_d$. Observe that there exists $p\in P$ and $q\in Q$ such
that $\lVert p-q\rVert_1 \leq 4$. Now the shortest edge connecting $p$ and $q$
is shorter than $e$ was, therefore by adding this edge the created tree is not
longer than the original. As each such modification decreases the number of
edges of length more than $4$ and there are only $\Oh(|P_d|)$ edges in $T$, we
can remove all edges longer than $4$ in $\Oh(|P_d|^2)$ time. The resulting tree
$T'$ satisfies the required properties.
\end{proof}

\begin{lemma}~\label{lem:canonize2}
For any approximate Steiner tree $T'$ on $P_d$ that contains all length-$1$
edges in edge components and no edges longer than $4$, there is a canonical
tree~$T''$ that is no longer than $T'$.
\end{lemma}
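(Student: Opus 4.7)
I would canonize $T'$ by performing local replacements around each cutout, relying on the separation property afforded by the $137$-refinement: for any two distinct $u,v \in V(G_d)$ one has $\|u-v\|_1 \ge 137 \gg 8$, so the halos $H_v := \{p : \|p-v\|_1 \le 4\}$ are pairwise disjoint. Because every edge of $T'$ has length at most $4$, every edge of $T'$ with at least one endpoint in the cutout of $v$ is contained entirely in $H_v$. Moreover, any edge of $T'$ whose endpoints both lie outside all halos must lie along a single edge component and have length exactly $1$, since otherwise it would pass through intermediate $P_d$-points of the edge component, which would have to be tree-vertices splitting the edge. Hence the edges of $T'$ split into (a) the forced length-$1$ edges inside edge components, and (b) forests $T'_v$ contained inside individual halos $H_v$.

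For each $v$, let $D_v$ be the set of doors incident to $v$ (the $P_d$-points at $\ell_1$-distance exactly $2$ from $v$, of the form $v \pm 2 e_j$), and let $k_v$ be the number of doors in $D_v$ touched by some edge of $T'_v$. The key local lemma I would prove is $\wt(T'_v) \ge 2k_v$. For this I use a coordinate projection argument: a door $v \pm 2 e_j$ has $j$-th coordinate $v_j \pm 2$, while every other door of $D_v$ has $j$-th coordinate $v_j$. Each connected component of $T'_v$ is a rectilinear connected subgraph that spans the doors it touches, so its extent along axis $j$ is at least $2$ if it touches $v + 2 e_j$ or $v - 2 e_j$ (and at least $4$ if it touches both), provided the component touches at least two doors. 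Summing axis-$j$ extents over all axes and all components yields $\wt(T'_v) \ge 2 k_v$.

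With this bound I would perform the replacement: delete all edges of $T'_v$ and introduce a Steiner point at $v$ together with length-$2$ spokes from $v$ to each of the $k_v$ touched doors. This costs $2k_v \le \wt(T'_v)$, so the total length does not increase. Before replacement I first prune any component of $T'_v$ that touches only a single door (a ``dead branch'' attached to a door that remains connected through the forced length-$1$ edges), which only shortens $T'$; this also ensures the projection argument applies to every remaining component. After applying the replacement at every $v \in V(G_d)$, the modifications may have introduced cycles, because two components of some $T'_v$ that were externally reconnected now additionally share the new Steiner point at $v$. For each such cycle I would remove one of its edges, strictly shortening the graph while preserving connectivity. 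The resulting spanning tree $T''$ contains all length-$1$ edges of edge components, and all of its other segments are length-$2$ spokes from a cutout center to a nearest door, which is precisely the canonical property.

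The main obstacle is the local lower bound $\wt(T'_v) \ge 2k_v$, because a priori $T'_v$ could be a complicated rectilinear forest with many internal Steiner points in the cutout. The projection argument handles this cleanly because doors sit precisely at distance $2$ from $v$ along coordinate axes, but it requires that each component touches at least two doors, which is why the single-door components must be pruned first. The remaining technical care is to verify that the separation guaranteed by the $137$-refinement truly forces every non-forced edge of $T'$ into a single halo --- which boils down to the fact that any edge along an edge component of length more than $1$ would meet an intermediate $P_d$-point, and hence cannot occur in $T'$ as a single edge.
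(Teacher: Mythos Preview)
Your decomposition into forests $T'_v$ contained in halos $H_v$ is where the argument breaks. You claim that every edge of $T'$ with both endpoints outside all halos must be a length-$1$ edge inside an edge component, but your justification only covers edges that already lie along an edge component; it says nothing about edges incident to a Steiner point sitting in empty space. Such edges do occur. Take $v=(0,0)$ with edge components along $\pm e_1,\pm e_2$, and consider the full Steiner subtree with internal Steiner points $s_1=(3,2)$ and $s_2=(0.5,1)$ and leaves $(5,0),(0,2),(-2,0),(0,-2)$: all five edges have $\ell_1$-length at most $4$, and the rectilinear realization $s_1\to(5,2)\to(5,0)$ of the edge $s_1$--$(5,0)$ avoids every intermediate $P_d$-point. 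Both endpoints $s_1$ and $(5,0)$ are at $\ell_1$-distance $5$ from $v$, hence outside $H_v$ (and certainly outside the cutout), so this edge is neither a forced length-$1$ edge nor in any $T'_v$. Your replacement scheme would then fail to reconnect the $+e_1$ edge component correctly, since the ``touched door'' count $k_v$ does not see the leaf $(5,0)$.

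This is precisely the obstacle the paper's proof is built to overcome. The paper decomposes $T'$ into \emph{full Steiner subtrees} and first proves, via a double-counting argument with radius-$68$ halos around edge components (which is what actually drives the refinement constant $137$), that every such subtree connects only edge components incident to a single vertex. Only after this locality is established does the paper bound the length of each full Steiner subtree from below; it does so with an Euler-tour doubling argument rather than your projection bound, which has the advantage of not requiring the leaves to sit exactly at the doors. Your projection inequality $\wt(T'_v)\ge 2k_v$ is correct on its own terms, but it is applied to the wrong object: you need it for each full Steiner subtree (whose leaves can be arbitrary $P_d$-points on the incident edge components), not for the collection of edges that happen to meet a cutout. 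A smaller issue: an edge with one endpoint in the radius-$2$ cutout and length $\le 4$ is only guaranteed to lie in the radius-$6$ ball, not in your radius-$4$ halo $H_v$.
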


\begin{proof} A \emph{full Steiner subtree} of $T$ is a subtree of $T$ whose internal vertices
are Steiner points of $T$, and whose leaves are points of $P_d$.
Let $F$ be a full Steiner subtree of $T'$ with $k$ leaves.  The \emph{halo} of an
edge component $P_{uv}$ is the set of points in $\Reals^d$ whose
$\ell_1$-distance from $P_{uv}$ is at most $68$. The refinement step of the
construction of $P_d$ (see step (i)) ensures that two halos intersect if and
only if the corresponding edges are incident to the same vertex. 

Since $T'$ contains all length-$1$ edges in edge components, the role of $F$ is
to connect a certain set of edge components; each edge component contains at
most one leaf of $F$ (as otherwise there would be a cycle). Let $\gamma$ be
the number of edge components adjacent to $F$ in the tree, that is, $F$ has
$\gamma$ leaves. Let $\beta$ be the number of Steiner points in $F$. Notice
that $\beta < \gamma$.

First, we show that the leaves of $F$ are in edge components that are incident
to the same cutout. Let $\mu$ be the number of pairs $(p,H)$ where $p$ is a
Steiner point of $F$, and $H$ is a halo for one of the edge components
connected to $F$, and moreover $p\in H$. On one hand, every Steiner point can
be contained in at most $4$ halos, since that is the maximum overlap achieved
by the halos. This is a consequence of the fact that halos corresponding to
non-incident edge components are disjoint so any set of intersecting halos
must correspond to the neighborhood of a single vertex, and the maximum degree
of a vertex in $G_d$ is $4$. Therefore, we have that $\mu \leq 4 \beta$.

On the other hand, since the maximum degree of $G_d$ is $4$, there is a set
$E$ of at least $\gamma/4$ pairwise non-incident edge components; in
particular, there is a point $r$ in $F$ that is outside all the closed halos
corresponding to edge components in $E$. Let $r$ be the root of $F$, and for
an edge component $P_{uv}\in E$, consider the unique path in $F$ from the leaf
in $P_{u,v}$ to $r$. This path must intersect the halo of $P_{uv}$ at some
point, and the portion of this path within the halo of $P_{uv}$ has length at
least $68$. Notice that these paths are disjoint for each edge component of
$F$. Since edges of $F$ have length at most $4$, there must be at least
$68/4-1=  16$ Steiner points on each such path, so altogether we have $\mu
\geq (\gamma/4) \cdot 16 =4\gamma$. Putting the upper and lower bound on $\mu$
together, we have that $4\gamma \leq 4 \beta$. But this contradicts the fact
that $\beta < \gamma$.

Consider now a full Steiner subtree $F$ that is connecting $\gamma \in \{2,3,4\}$
adjacent edge components. Then $F$ could have length $2\gamma$, as we can connect
the point associated with the common vertex $v$ from the point set $P_1$ with
length two segments to the nearest vertex of all the edge components connected
by $F$. We show that this is the shortest possible tree for $\gamma$ edge
components.  Notice that any pair of edge components have $\ell_1$-distance
exactly four, so the shortest path in the rectilinear Steiner tree between any
pair of components is at least four. Now consider the geometric graph that we
get by doubling every segment in the tree, so that we get parallel edges
everywhere. This graph has an Euler tour (since every degree is even); on such
an Euler tour, the length required between any pair of tree leaf vertices is
at least the length of a shortest path between them, which is at least $4$.
Since the length of the tour is exactly twice the length of the tree, we get
that the tree has total length at least $4\gamma/2 = 2\gamma$, as claimed.
Therefore, we can exchange each full Steiner subtree with a canonical connection.
The resulting tree $T''$ is canonical and no longer than $T'$.
\end{proof}

We can now prove the correspondence between an approximate rectilinear Steiner
tree and a connected vertex cover.

\begin{proof}[Proof of Lemma~\ref{lem:Steiner_cvc_hard}]
Let $T$ be a Steiner tree of $P_d$ of length $\ell_d+\ell'$. Using
Lemma~\ref{lem:canonize1} and Lemma~\ref{lem:canonize2}, we can create a
canonical tree $T''$ in $n^{\Oh(1)}$ time of length at most $\ell_d+\ell'$. Let
$k$ be the number of non-empty cutouts. Observe that the vertices of $G_d$
corresponding to the non-empty cutouts form a connected vertex cover.

Consider the subtree $U$ of $T''$ that is spanned by the centers of the
non-empty cutouts. Every edge component in $U$ must be connected to the
centers of both neighboring cutouts, and $U$ contains $k-1$ edge components.
Furthermore, every edge component outside $U$ must be connected to the center
of at least one of the neighboring cutouts. Consequently, the length of $T''$
is at least $L+ 4(k-1) + 2(|E(G_d)|-k+1) = L + 2|E(G_d)| + 2(k-1)$. Therefore,
we have \[L + 2|E(G_d)| + 2(k-1)\leq \ell_d+\ell' = L + 2|E(G_d)| + 2(k_d-1) +
\ell',\] and thus $k\leq k_d + \ell'/2$, as required.
\end{proof}

\subsubsection{Concluding the proof of Theorem~\ref{thm:rectsteinerlower}}

\begin{proof}[Proof of Theorem~\ref{thm:rectsteinerlower}]
Putting Corollary~\ref{cor:construction} and Lemmas~\ref{lem:Steiner_cvc_easy}
and~\ref{lem:Steiner_cvc_hard} together, we have that if a $(3,3)$-CNF
formula $\phi$ on $n$ variables has a satisfying assignment then $P_d$ has a
rectilinear Steiner tree of length $\ell_d=\Oh(n^{d/(d-1)})$. Let $c_1$ be such
that $\ell_d=c_1n^{d/(d-1)}$. Additionally, if $P_d$ has a rectilinear Steiner
tree of length $\ell_d+\ell'$, then $\phi$ has an assignment that satisfies
all but $c_2\ell'$ clauses, where $c_2$ is a constant.

Suppose that there is a $2^{\gamma/\eps^{d-1}}\poly(n)$ algorithm for
\textsc{Rectilinear Steiner Tree} for all $\gamma>0$. Given a formula $\phi$,
we create the set $P_d$ in polynomial time, and we run the above algorithm with
$\eps=\frac{\delta m}{c_1 c_2 n^{d/(d-1)}}$, where $m$ is the number of
clauses in $\phi$. Since $m=\Theta(n)$, we have that
$\eps=\Theta(1/n^{1/(d-1)})$. We can now distinguish between a satisfiable
formula (when a rectilinear Steiner tree on $P_d$ is of length $\ell_d$) and a formula in which all
assignments violate at least $\delta m$ clauses (when any rectilinear
Steiner tree on $P_d$ has length greater than $(1+\eps)\ell_d$). Since the
construction time of $P_d$ is polynomial in $n$, the total running time of this algorithm is
$2^{\gamma/\eps^{d-1}}\poly(n)= 2^{\gamma c n}$ for some constant $c$. The
existence of such algorithms for all $\gamma>0$ would therefore violate
Gap-ETH by Corollary~\ref{cor:max33sat}.
\end{proof} 

\section{Conclusion and Open Problems}
\label{conclusion}

In this article we gave randomized $(1+\eps)$-approximation algorithms for \textsc{Euclidean TSP},
\textsc{Euclidean Steiner Tree} and \textsc{Rectilinear Steiner Tree} that run in
$2^{\Oh(1/\eps)^{d-1}}n + \poly(1/\eps) n\log n$ time. In case of \textsc{Euclidean TSP} and
\textsc{Rectilinear Steiner Tree}, we have shown that there are no
$2^{o(1/\eps)^{d-1}}\poly(n)$ algorithms under Gap-ETH. 
We achieved the improved algorithms by extending Arora's method~\cite{Arora98} with a new technique: \emph{Sparsity-Sensitive Patching}. 

As mentioned in the beginning of this paper, the methods from~\cite{Arora98,Mitchell99} have been greatly generalized and extended to various other problems by several authors. A natural direction for further research would be to see whether Sparsity-Sensitive Patching can also be employed to obtain improved (and possibly, Gap-ETH-tight) approximation schemes for these problems. Examples of problems where such a question can be studied include 
\begin{itemize}
	\item Euclidean versions of \textsc{Matching}, \textsc{$k$-TSP} and \textsc{$k$-Steiner Tree}~\cite{Arora98}, \textsc{Steiner Forest}~\cite{euclidean-steiner-forest}, $k$-\textsc{Connectivity}~\cite{euclidean-mincostconnectivity}, $k$-\textsc{Median}~\cite{ekmedian,kmedian} and \textsc{Survivable Network Design}~\cite{survivable},
	\item versions of some of the above problems in other metric spaces (e.g., doubling, hyperbolic), and in planar, surface-embedded and minor free graphs (see Section~\ref{subsec:relatedwork} for such studies for TSP).
\end{itemize} 

Since the publication of the conference version of this paper, our sparsity-sensitive patching was already used in the follow up papers~\cite{DBLP:conf/soda/Dross0WZ23, vanwijland} on a certain separation problem and \textsc{$k$-TSP}. For both these problems, spanners do not seem to be useable and therefore the sparsity-sensitive technique was especially instrumental to obtain the results.

There are several open questions worth exploring further. The ideal algorithm
for \textsc{Euclidean TSP} would have a running time of
$2^{\Oh(1/\eps)^{d-1}}n$, and it would be \emph{deterministic}. However, achieving this
running time with a randomized algorithm is already a challenging question. The
most natural way to pursue this would be to try and unify Bartal and Gottlieb's
techniques~\cite{BartalG13} with ours. Is it possible to get this running time
without spanners by using some new ideas to handle singletons (e.g., crossings that appear on their own on a facet of a quadtree cell)?

One could also pursue a $(1+\eps)$-approximation algorithm that uses $f(1/\eps)n^{\Oh(1)}$ time and only $\poly(1/\eps,n)$ space, but
this would likely require an algorithm that is not based on dynamic programming.
Is such an algorithm possible (say, for $d=2$)?

\section*{Acknowledgement}

We thank Hans L. Bodlaender for his double counting argument in
Lemma~\ref{lem:canonize2}, Artur Czumaj for sharing the full version
of~\cite{survivable} with us, and Hung Le for answering multiple questions regarding
spanners. 

\bibliographystyle{abbrv}
\bibliography{bib}

\pagebreak

\begin{appendix}
\section{Filtering Algorithm}
\label{sec:filtering}

The goal of this section is to prove Lemma~\ref{lem:banyan}. \textbf{Most of this section is taken almost verbatim from the unpublished full version of~\cite{survivable}.} We include it for four reasons:
\begin{itemize}
    \itemsep0em 
\item to make this paper self-contained,
\item we need to analyze the running time dependence in a slightly different setting,
\item we need to extend the result to the $\ell_1$ metric in addition to the $\ell_2$ metric,
\item we are able to simplify some of the arguments due to our special setting.
\end{itemize}

For a given set of points $P \subseteq \mathbb{R}^d$ let $\smt(P)$ be the
minimum Steiner tree\footnote{\label{stunique}For the simplicity of notation we
act as if the minimum Steiner trees and minimum spanning trees are unique; one
can check that all our arguments hold if there are multiple minima.} with
terminals $P$ and $\mst(P)$ be the minimum spanning tree of $P$. For any $X
\subseteq \mathbb{R}^d$ let $\smt(P;X)$ be the\textsuperscript{\ref{stunique}}
minimum length Steiner tree that connects the terminal set $P$ and is only allowed to
use a subset of $X$ as Steiner vertices (therefore $\smt(P;\mathbb{R}^d) =
\smt(P)$). We use $\cB(x,r)$ to denote the ball centered at $x$ of
radius~$r$.

We will use some named constants in the following definitions, which all depend only on the dimension $d$. The constant $k$ will be defined later in Claim~\ref{claim:ball}; it will be set so that $k = \Theta(d^2)$.
    Let $\Delta$ be the maximum degree of any minimum spanning tree of points in
    $\mathbb{R}^d$. It is well-known that $\Delta \le 3^d$~\cite{RobinsS94}. We furthermore define: 
\begin{alignat*}{3}  
    \gamma &:= 4k\Delta^k/\eps &&= 2^{\Oh(d^3)}/\eps,\\
    \phi &:= \frac{\eps}{80d\Delta\gamma} &&= \eps^2/2^{\Oh(d^3)}.
\end{alignat*}
Finally, let $c^\ast$ be a universal constant to be defined later (completely independent of $d$ and $\eps$).

\begin{definition}[Steiner filter]  
    \label{smt-definition}
	For point sets $P_0,P_1 \subseteq \mathbb{R}^d$ and $\eps > 0$ we say that $X \subseteq
    P_1$ is a \emph{Steiner filter} of $P_1$ with respect to $P_0$ if:
    \begin{enumerate}[font=\normalfont, label=(\roman*)]
        \item $\displaystyle{|X| \le \frac{2^{c^*d^4}}{\eps^{c^*d}} |P_0|}$, and
        \item $\displaystyle{\wt(\smt(P_0;X)) \le (1+c^*\eps) \cdot \wt(\smt(P_0; P_1))}$, and
        \item $\displaystyle{\wt(\mst(P_0 \cup X)) \le \frac{2^{c^*d^4}}{\eps^{c^*d}} \cdot \wt(\mst(P_0))}$.
    \end{enumerate}
\end{definition}

In this section we prove the following filtering Lemma.

\begin{lemma}
    \label{ref:filtering-lemma}
    For any point set $P\subset \mathbb{R}^d$ and any $\eps > 0$, there is an
    algorithm that:
    
    \begin{enumerate}[label=(\alph*)]
        \item finds a $X \subseteq \mathbb{R}^d$ that is a Steiner filter of $P$ with respect to $\Reals^d$, and 
        \item runs in $\frac{2^{\Oh(d^4)}}{\eps^{\Oh(d)}}\cdot |P|\log(|P|)$ time.
    \end{enumerate}
\end{lemma}

The proof of our Lemma is based on the following theorem proved by Czumaj et
al.~\cite{survivable}.

\begin{theorem}[Lemma 4.1 from full version of \cite{survivable}]
    \label{czumaj-filtering}
    For any point sets $P_0$ and $P_1$ in $\mathbb{R}^d$ and any $\eps>0$, there is an algorithm that finds a subset $X$ of $P_0$ that
    is a Steiner filter of $P_1$ with respect to $P_0$ and runs in
    $(d/\eps)^{\Oh(d)} n \log n$ time, where $n =
    |P_0 \cup P_1|$.
\end{theorem}

\begin{algorithm}
	\SetAlgoLined
	\DontPrintSemicolon
	\SetKwInOut{Input}{Algorithm}
	\SetKwInOut{Output}{Output}
    \Input{$\mathtt{Steiner filtering}(P)$, points $P \subseteq \mathbb{R}^d$ snapped to $L^d$ grid, where $L = \Oh(n/\eps)$}
    Build a light $(1+\eps)$-spanner $G$ on $P$\tcp*{Theorem 15.3.20 in~\cite{geomspannet}}
    \ForEach{edge $e$ of $G$}{
        $r_e \coloneqq 20\gamma |e|$\\
        Set $\mathrm{grid}(e):=$ $d$-dimensional grid of side length $\phi \cdot  |e|$ and $z_e:=$ midpoint of $e$\\
        $X_e \coloneqq \mathrm{grid}(e) \cap \cB(z_e,r_e)$
    }
    \Return $X \coloneqq \bigcup_{e \in E[G]} X_e$
    \caption{Pseudocode for the Steiner filtering algorithm of Czumaj~et
    al.~\cite{survivable} when $P_0 = \mathbb{R}^d$.}
	\label{alg:filtering}
\end{algorithm}

Observe that if we were to plug naively the set $P_1 = \mathbb{R}^d$ into
Theorem~\ref{czumaj-filtering} we would already prove
Lemma~\ref{ref:filtering-lemma}. Unfortunately, the running time of
Theorem~\ref{czumaj-filtering} depends on $|P_0 \cup P_1|$. For our purposes we
only need to analyze their algorithm in the special case of $P_1 = \mathbb{R}^d$. In
Algorithm~\ref{alg:filtering} we present (a simplified version of) the Steiner filtering procedure
of~\cite{survivable} in the special case of $P_1 = \mathbb{R}^d$. First a
light $(1+\eps)$-spanner $G$ of $P$ is computed (this step already takes
$\eps^{-\Oh(d)} |P| \log(|P|)$ time). Next, for every edge~$e$
of~$G$ we consider the $d$-dimensional axis-parallel grid 
of cell side length $\phi|e|=\eps^2 |e|/2^{\Oh(d^3)}$. We add to~$X$ all the grid points within distance $20\gamma|e|=2^{\Oh(d^3)} |e|/\eps$ from the midpoint of~$e$. After processing all edges of~$G$ in this manner we return the resulting point set~$X$.

Now, we show that Algorithm~\ref{alg:filtering} runs in $\frac{2^{\Oh(d^4)}}{\eps^{\Oh(d)}} |P|$ time.

\begin{proof}[Proof of Lemma~\ref{ref:filtering-lemma} (b)]
    To construct a light spanner $G$ we need $\eps^{-\Oh(d)} |P| \log{|P|}$ time (note that
    Algorithm~\ref{alg:filtering} computes the spanner on $P$). Known
    constructions guarantee that the number of edges of such a spanner is bounded
    by $|P|/\eps^{\Oh(d)}$~\cite{survivable,gln02,geomspannet}. For any edge $e
    \in E[G]$, the number of grid points that we add to $X$ is bounded by
    $2^{\Oh(d^4)}/\eps^{\Oh(d)}$. The time needed to construct the
    grid around edge $e$ is bounded by $2^{\Oh(d^4)}/\eps^{\Oh(d)}$. Therefore the
    total runtime of Algorithm~\ref{alg:filtering} is bounded by
    $\Tspan(|P|,\eps) + \frac{2^{\Oh(d^4)}}{\eps^{\Oh(d)}} |P|$.
\end{proof}

Lemma~\ref{ref:filtering-lemma} (a) follows from Theorem~\ref{czumaj-filtering}.  For completeness we provide the detailed arguments in Section~\ref{smt-filter-correctness}, making some simplifications that come from our specific setting.

\subsection{Properties of Spanners and Steiner Trees}

\begin{claim}[Lemma 4.6 in~\cite{survivable}]\label{cl:shorteredge}
    Let $P \subseteq \mathbb{R}^d$ and let $x,y$ be any pair of distinct points in $P$.
    Let $uv$ be any edge in $\smt(P)$ that separates $x$ and $y$ in $T$.

    Suppose that $G$ is a connected graph on $P$ and $p_{x \leadsto y}$ is the shortest path in $G$ from $x$ to $y$.
    Then at least one edge of $p_{x \leadsto y}$ is not shorter than $\dist(u,v)$.
\end{claim}

\begin{proof}
    Suppose for the sake of contradiction that all edges on $p_{x \leadsto y}$ are shorter
    than $\dist(u,v)$, and let $T_1,T_2$ be the two connected components of $\smt(P)-uv$. Since $x$ and $y$ are in two different components there must
    exist an edge $e \in p_{x \leadsto y}$ such that its endpoints are in different
    components $T_i$ and $|e| < \dist(u,v)$. This means that $T_1 \cup T_2 \cup
    \{e\}$ is a Steiner tree of $P$ and has smaller length than $\smt
     (P)$, which contradicts the minimality of $\smt(P)$.
\end{proof}

\begin{claim}[Lemma 4.7 in~\cite{survivable}]
    \label{claim:steiner}
    For any $P,X \subseteq \mathbb{R}^d$ let $T$ be any subtree of $\smt(P;X)$ and let $L_T$ be the set of leaves of
    $T$. Then $T=\smt((P\cap T) \cup L_T;X)$.\footnote{More precisely, $T$ is a minimum Steiner tree of $(P\cap T) \cup L_T$ that uses Steiner points only from $X$.}
\end{claim}
\begin{proof}
    For the sake of contradiction assume that $T^\ast = \smt((P\cap T) \cup L_T,X)$ and
    $\wt(T^\ast)$ is smaller than $\wt(T)$. We replace the tree $T$ by the
    tree $T^\ast$ in $\smt(P,X)$. We get a connected graph of weight smaller than
    $\smt(P;X)$ that contains a Steiner Tree of $P$ and uses only $X$ as
    Steiner points. However, this graph has smaller weight than $\smt(P;X)$
    which implies a contradiction.
\end{proof}

\begin{claim}[Lemma 4.8 in~\cite{survivable}]
    \label{claim:ball}
    For any $P,X \subseteq \mathbb{R}^d$ and $\rho_0 \in \mathbb{R}^d$ if $\mathcal{B}(\rho_0,1)$ has at most $1$
    terminal from $P$, then for any $\alpha \in (0,1/2]$ the tree $\smt(P;X)$ contains less than $2^k$ Steiner points
    from $X\cap \mathcal{B}(\rho_0,\alpha)$ where $k=\Oh(d^2)$ is an integer multiple of $d^2$. The statement is true for both $\ell_1$ and $\ell_2$ norms.
\end{claim}

The proof of this claim follows~\cite{survivable}. An analogous statement to~Claim~\ref{claim:ball} also appears in~\cite[Lemma
36]{RaoS98}.

\begin{proof}
    For brevity of notation let $\mathcal{B}_r$ be $\mathcal{B}(\rho_0,r)$. Let $T_r$ be the subforest of $\smt(P;X)$ consisting of edges with 1 or 2 endpoints in $\mathcal{B}_r$. 
    Let $s_r$ be the number of internal vertices in $T_r$. Note that when $r\leq 1$, then all but one internal vertex of $T_r$ has to be a Steiner vertex. Finally, let $n_r$ be the number of
    edges of $\smt(P;X)$ with exactly one endpoint inside $B_r$.
    Observe that for any $r \in (0,1]$, the value $s_r$ counts the internal
    points of $T_r$ plus perhaps the single terminal from $P$, while $n_r$ counts the number of leaf edges of $T_r$.
    Our goal is therefore to show $s_r \le 2^{\Oh(d^2)}$. 

    Observe that we can assume that the degree of Steiner vertices is at least
    $3$ and at most $\Delta$. Therefore $s_r+1 \le n_r \le \Delta\cdot (s_r+1)
    \le 2\Delta s_r$ (we use $s_r \ge 1$, as otherwise the claim is obviously
    true). Therefore, $T_r$ has at most $(s_r+1)+n_r \le 2n_r$ vertices.

    Consider arbitrary $r$ and $r^\ast$, such that $\alpha \le r < r^\ast \le
    1$. If $n_\alpha < 2^d$, then $s_\alpha < n_\alpha < 2^d$ and the claim is
    correct. Hence, we assume from now that $n_{r^\ast} \ge n_r \ge n_\alpha \ge
    2^d$.

    Next, we show that the length of $T_{r^\ast}$ is at least $(r^\ast - r)
    \cdot s_r$. Observe that each edge counted by $n_r$ is on at least one
    path to a leaf of the tree $T_r$. Moreover, there is at most one non-Steiner point
    inside $\mathcal{B}_1$. Hence every edge (except perhaps one) counted by $n_r$ is on at least one
    path from $\mathcal{B}_r$ to the outside of $\mathcal{B}_1$. Hence the
    length of $T_{r^\ast}$ is at least $(r^\ast - r) (n_r-1) \ge (r^\ast - r)
    s_r$.

    Next, we upper bound the length of $T_{r^\ast}$.  Observe that $T_{r^\ast}$
    contains at most $2n_{r^\ast}$ points.  It is well known that for any $P \subseteq \mathcal{B}_1$ of
    at least $2^d$ points we have $\mst(P) \le 8 r |P|^{1-1/d}$~\cite[Exercise
    6.3]{geomspannet}, \cite[Lemma 4.4]{survivable}.
    Therefore,
    \[
        (r^\ast - r) \cdot s_r \le \wt(T_{r^\ast}) \le 16 \cdot (n_{r^\ast})^{1-1/d}.
    \]
    We combine this with $n_{r^\ast} \le 2\Delta s_{r^\ast}$ and set $r^\ast=r+\eps$. We get that for every $\eps \in (0,1)$ and $r \in (0,1 - \eps)$ it
    holds that:
    \begin{equation}\label{apx:recursion}
        s_{r} \leq \frac{16}{\eps}(n_{r+\eps})^{1-1/d} < \frac{32 \Delta}{\eps} \cdot (s_{r+\eps})^{1-1/d}.
    \end{equation}
    It remains to show that $s_\alpha$ must be bounded by $2^{\Oh(d^2)}$ for $\alpha \le 1/2$.
    Let $\tau \coloneqq \frac{1}{\ln\ln(s_1)}$ (we can assume that $\tau < 1/2$,
    as otherwise $s_1$ is constant-bounded). Fix some $r\in (0,1-\tau]$ and set $\eps = \frac{1}{d (\ln\ln s_{r+\tau})^2}$. Observe that $s_{r+\tau}< s_1$ and thus $\tau/\eps<d\ln\ln s_{r+\tau}$. We iterate Inequality~\eqref{apx:recursion} $\floor{\tau/\eps}$ times:
    \begin{equation}\label{eq:iter1}
\begin{split}
        s_r &< 32\Delta d (\ln\ln s_{r+\tau})^2 (s_{r+\eps}^{1-1/d})< \dots \\
        &< (32\Delta d (\ln \ln s_{r+\tau})^2)^{\sum_{j>0} (1-1/d)^j} \cdot s_{r+\tau}^{(1-1/d)^{\tau/\eps}}\\
        &< (100 d \cdot \Delta \cdot \ln \ln s_{r+\tau})^{2d},
\end{split}
\end{equation}
where we used that $\sum_{j>0} (1-1/d)^j = d$ and the following bound:
\[s_{r+\tau}^{(1-1/d)^{\tau/\eps}}<s_{r+\tau}^{(1-1/d)^{d\ln\ln s_{r+\tau}}}<s_{r+\tau}^{(1/e)^{\ln\ln s_{r+\tau}}}=s_{r+\tau}^{1/\ln s_{r+\tau}}=e.\]
    Next, we iterate Inequality~\eqref{eq:iter1} $\ell = \ln^\ast(s_1)$  times for $r \in \{\alpha,\alpha+\tau,\dots\}$. Assuming $\alpha \in [0,1-\ell\tau)$, we have that $\ln^{(\ell)} s_{\alpha+\ell\tau}<\ln^{(\ell)} s_1\leq 1$, where $\ln^{(j)}(.)$ denotes the $j$-times iterated natural logarithm. Therefore we can bound $s_\alpha$ as follows.
\begin{align*}
        s_\alpha &< (100 d \Delta \cdot \ln \ln s_{\alpha+\tau})^{2d} \\
        &<\Big(100 d \Delta \cdot \ln \ln \big(100 d \cdot \Delta \cdot \ln \ln s_{\alpha+2\tau})^{2d}\big)\Big)^{2d}\\
        &= \Big(100 d \Delta \cdot \big( \ln(2d) + \ln(\ln(100 d)+\ln \Delta + \ln^{(3)} s_{\alpha+2\tau})\big)\Big)^{2d}\\
&< \Big(100d \Delta \cdot \big(4\ln d + \max\{\ln\ln(100d), \ln\ln\Delta, \ln^{(4)} s_{\alpha+2\tau}\})\big)\Big)^{2d} \\
		&< \Big(100d \Delta \cdot \big(4\ln d+ \max_{j\geq 1}\{\ln^{(2j+1)} (2d), \ln^{(2j)}(100d), \ln^{(2j)}\Delta, \ln^{(2\ell)} s_{\alpha+\ell\tau}\}\big)\Big)^{2d}\\
        &< (100d \Delta \cdot 20\ln d)^{2d}\\
        &< (2000 \Delta d\ln d)^{2d}\\
        &=2^{\Oh(d^2)}
\end{align*}
    Note that when $\ell\tau \ge 1/2$, then $s_1$
    is constant-bounded. Hence, for every $\alpha \in (0,1/2]$, it holds that $s_\alpha \le 2^{\Oh(d^2)}$.
\end{proof}

\subsection{Proving that \texorpdfstring{$X$}{\textit{X}} is a Steiner filter of \texorpdfstring{$P$}{\textit{P}}}
\label{smt-filter-correctness}

\subsubsection{Property (i) of \texorpdfstring{$X$}{\textit{X}}:} The set $X$ returned by Algorithm~\ref{alg:filtering} satisfies property (i)
of Definition~\ref{smt-definition} because there are at most
$|P|/\eps^{\Oh(d)}$ edges in $G$ and for each edge we add at most
\begin{equation}\label{eq:edggridsize}
|X_e| =\Oh\left(\frac{20\gamma}{\phi}\right)^d=2^{\Oh(d^4)}/\eps^{\Oh(d)}
\end{equation}
points to $X$.

\subsubsection{Property (ii) of \texorpdfstring{$X$}{\textit{X}}:} We prove that $\wt(\smt(P; X)) \le
(1+2\eps) \cdot \wt(\smt(P;\mathbb{R}^d))$. Let
$\mathcal{T}^\ast$ be the minimum Steiner tree of $P$ that can use any point in $\mathbb{R}^d$ as
a Steiner point. We build a graph $H$ in three steps, consisting of edge sets $H_1,H_2$ and $H_3$. The set $H_1$ consists of all
the edges $(u,v) \in \mathcal{T}^\ast$ with $u,v \in P$. The set $H_2$ will be defined later; these edges are created from edges
$(u,v) \in \mathcal{T}^\ast$ by moving the endpoints $(u,v)$ to $X$. Finally, the set $H_3$ makes $H$ connected by adding
additional edges incident to the endpoints of $H_1\cup H_2$ that have total minimum weight.

Let $\mathcal{E}_1$ denote the edges of
$\mathcal{T}^\ast$ with both endpoints in $P$. Moreover, $\mathcal{E}_2$ are the
edges of $\mathcal{T}^\ast$ that are transformed in the second phase into $H_2$ and $\mathcal{E}_3 = \mathcal{T}^\ast \setminus (\mathcal{E}_1 \cup
\mathcal{E}_2)$.

Clearly the graph $H$ is spanning $P$ and uses only points from $X\cup P$ as endpoints; consequently, it contains a Steiner tree spanning $P$ that uses only $X$ as Steiner vertices. We need to bound the total weight of $H$. Czumaj et
al.~\cite{survivable} show that $\wt(H_2) \le \wt(\mathcal{E}_2) + \eps \cdot
\wt(\mathcal{T}^\ast)$ and $\wt(H_3) \le \wt(\mathcal{E}_3) + \eps \cdot
\wt(\mathcal{T}^\ast)$. Since $\wt(H_1) = \wt(\mathcal{E}_1)$, this leads to:
\begin{align*}
    \wt(H) & \le \wt(H_1) + \wt(H_2) + \wt(H_3) \\
    & \le \wt(\mathcal{E}_1)+ (\wt(\mathcal{E}_2) + \eps \wt(\mathcal{T}^\ast)) +
    (\wt(\mathcal{E}_3) + \eps \wt(\mathcal{T}^\ast)) \\
            & \le (1+2\eps)\wt(\mathcal{T}^\ast).
\end{align*}

We now expand on the construction of $H_2$ and $H_3$.

\paragraph{Construction of \texorpdfstring{$H_2$}{\textit{H2}}:}

We begin by constructing the graph $H_2$. First, we define its set of edges
$\mathcal{E}_2$.  Recall that we defined $k$ to be the constant
from Claim~\ref{claim:ball} and $k = \Theta(d^2)$ and $\gamma = 4k \Delta^k/\eps$.

Since the edge $(u,v) \in \mathcal{T}^\ast$ is not in $\mathcal{E}_1$ either
$u$ or $v$ are not in $P$. Let $t$ be the midpoint of $(u,v)$. 
For a tree $\mathcal{T}^\ast$, let $\smt_u$ and $\smt_v$ be the two subtrees of
$\mathcal{T}^\ast$ that arise after removing an edge $(u,v)$ from
$\mathcal{T}^\ast$ and $u \in \smt_u$ and $v \in \smt_v$. Let $T_u^k$ be
the subtree of $\smt_u$ induced by the vertices within hop-distance $k$ from $u$
(analogously $T_v^k$ is a subtree of $\smt_v$ induced by the vertices within
hop-distance $k$ from $v$). Let $\ell = \dist(u,v)$. Then, Czumaj et
al~\cite{survivable} add an edge $(u,v)$ to $\mathcal{E}_2$ if all the following
conditions hold:
\begin{enumerate}[label=(C\arabic*)]
    \item\label{enum1} every edge in $T_u^k$ and $T_v^k$ is shorter than $2\gamma\ell/k$.
    \item\label{enum2} $\smt_u$ has at least one point (call it $x$) from $P$ that is contained in the ball $\mathcal{B}(t,4\gamma\ell)$.
    \item\label{enum3} $\smt_v$ has at least one point (call it $y$) from $P$ that is contained in the ball $\mathcal{B}(t,4\gamma\ell)$.
\end{enumerate}

This concludes the construction of $\mathcal{E}_2$. Let $\mathcal{E}_3$ be
$\mathcal{T}^\ast \setminus (\mathcal{E}_1 \cup \mathcal{E}_2)$.
It remains to analyse the total length of these edges.

\paragraph{Bound on \texorpdfstring{$\wt(H_2)$}{wt(\textit{H2})}:} 

Since $T_u^k$ contains vertices within hop-distance $k$ from $u$, Condition \ref{enum1} implies that $T_u^k\subset \cB(t,4\gamma\ell)$, and the same holds for $T_v^k$. Conditions \ref{enum2} and \ref{enum3} ensure that $x$ and $y$ are disconnected in $T^*$ after the removal of the edge $uv$. Claim~\ref{cl:shorteredge} shows that the spanner $G$ has at least one edge $e$ on 
$p_{x\leadsto y}$ whose length is at least $|e|\geq \ell$.

Next, we prove that $p_{x\leadsto y}$ is contained in $\cB(t,(20\gamma\ell)$. First note that $x,y\in \cB(t,4\gamma\ell)$, and thus $\dist(x,y)\leq 8\gamma\ell$. On the other hand, we have $|p_{x\leadsto y}|\leq (1+\eps)\dist(x,y)$ by the spanner property of $G$. Thus for any point $z$ of $p_{x\leadsto y}$ we have $\dist(x,z)\leq (1+\eps)\dist(x,y)<2\dist(x,y)\leq 16\gamma\ell$, and by the triangle inequality $\dist(t,z)\leq \dist(t,x)+\dist(x,z)<20\gamma\ell$.

Now since $p_{x\leadsto y}$ is contained in $B(t,20\gamma\ell)$ and $e$ is in $p_{x\leadsto y}$, we have that both endpoints of $e$ are contained in $B(t,20\gamma\ell)$.
By Claim~\ref{cl:shorteredge} and the construction of $X$, there exist two points $u^\ast$ and $v^\ast$ in $P
\cup X$, such that $\dist(u,u^\ast)$ and $\dist(v,v^\ast)$ are at most
$d\phi |e|$; this is because the axis-parallel square grid of side-length $\phi|e|$ has cells of $\ell_2$-diameter $\sqrt{d}\phi |e|$ and $\ell_1$-diameter $d\phi |e|$. We modify $\mathcal{T}^\ast$ by moving
all edges incident to $u$ and $v$ to have their endpoints at $u^\ast$ and
$v^\ast$. Since the degree of both $u$ and $v$ is bounded by $\Delta$, the operations at $u$ and $v$ will increase the cost of $T^*$ by at most
an additive term $2\Delta \cdot d\phi|e|$. Because $e$ is completely contained in $\cB(t,20\gamma\ell)$,
we have $|e| \leq 2 \cdot 20\gamma \ell \leq 40 \gamma\ell$.
Thus, the cost of $T^*$ will increase by at most an additive term

\[2\Delta d\phi |e| \leq 2\Delta d\phi \cdot 40 \gamma\ell= \eps\ell = \eps \dist(u,v).\]

To summarize the total cost of the edges in $H_2$ is bounded by
$\wt(\mathcal{E}_2) +
\sum_{(u,v)\in \mathcal{T}^\ast} \eps \cdot \dist(u,v) \le \wt(\mathcal{E}_2) + \eps
\wt(\mathcal{T}^\ast)$.

\paragraph{Construction of \texorpdfstring{$H_3$}{\textit{H3}}:}

Observe that $\mathcal{E}_3$ induces a forest. For each tree $T'$ in this forest let $V'$ be the set of vertices in $T'$ belonging to $P$.
Take any minimum spanning tree on $V'$ and add it to $H_3$. This concludes the construction of $H_3$.

\paragraph{Bound on \texorpdfstring{$\wt(H_3)$}{wt(\textit{H3})}:} 

Consider removing edge $(u,v) \in \smt(P)$ from $\smt(P)$. Let $\smt_u(P)$ and
$\smt_v(P)$ denote the resulting trees containing $u$ and $v$ respectively.
Recall that $T_u^k$ is the subtree of $\smt_u(P)$ induced by the vertices that
is at a hop distance at most $k$ from $u$ for some parameter $k = \Theta(d^2)$.

Consider an edge $(u,v) \in \mathcal{E}_3$.
Because $(u,v) \notin \mathcal{E}_1 \cup \mathcal{E}_2$ at least one of the following
conditions hold: (i) either $T_u^k$ or $T_v^k$ contains an edge of length greater
than $2\gamma\ell/k$, or (ii) $\smt_u(P)$ or $\smt_v(P)$ has no
endpoint in the ball $\mathcal{B}(t,4\gamma\ell)$ (where $t$ is
the midpoint of edge $(u,v)$) (see conditions~\ref{enum1},~\ref{enum2} and \ref{enum3}).

Next, we will show that when $(u,v) \in \mathcal{E}_3$ then (ii) cannot hold.

\begin{claim}
    \label{cl:long-edge}
    If $(u,v) \in \mathcal{E}_3$ then there is an edge $(u^\ast,v^\ast) \in T_u^k \cup T_v^k$ of length at least
    $2\gamma\ell/k$. 
\end{claim}
\begin{proof}
    Fix $R \coloneqq 4\gamma\ell$. Suppose for the sake of contradiction that all edges in $T_u^k$ are smaller than
    $R/2k$.
    Since $(u,v)\not\in \mathcal{E}_1 \cup \mathcal{E}_2$, we have that $\smt_u(P) \cap \mathcal{B}(t,R)$ or $\smt_v(P) \cap \mathcal{B}(t,R)$ is empty of terminals (other than $u,v$). Without loss of generality, assume that $P \cap \smt_u(P) \cap \mathcal{B}(t,R) =
    \{u\}$.  Then, $T_u^k$ must by fully contained in
    $\mathcal{B}(t,R/2)$. Since there are no
    terminals in $\mathcal{B}(t,R)$ other than $u$, the tree $T_u^k$ must
    contain at least $2^k$ Steiner points. Note that $\smt_u(P)$ is a Steiner tree of its leaves by Claim~\ref{claim:steiner}. Now applying
    Claim~\ref{claim:ball} to $\smt_u(P)$ and the balls $\mathcal{B}(t,R)$ and $\mathcal{B}(t,R/2)$ leads to a contradiction, as $\mathcal{B}(t,R/2)$ has at least $2^k$ Steiner points.
\end{proof}

Czumaj et al.~\cite{survivable} charge the cost of the edge $(u,v) \in \mathcal{E}_3$  to the edge
$(u^\ast,v^\ast) \in T_u \cup T_v$ guaranteed by Claim~\ref{cl:long-edge} in order to bound the cost of edges in $H_3$. We will show that the total cost
of all edges charged to $e$ is upper bounded by $\eps |e|$.

To analyze the charging scheme observe that an edge can be charged to $e \in
\mathcal{T}^\ast$ only if it is at most $(k-1)$ hops from one of the
endpoints of $e$. The number of such edges is at most $4 \Delta^k$ (where
$\Delta \le 3^d$ is the maximum degree of any minimum spanning tree
in $\mathbb{R}^d$). Moreover, the lengths of such edges are upper bounded
by $\eps|e|/(4\Delta^k)$: by Claim~\ref{cl:long-edge} an edge $(u,v)$
is charged to $(u^\ast,v^\ast)$ only if $\dist(u,v) \ge
2\gamma\ell/k$.  This shows that $\wt(\mathcal{E}_3) \le \eps \wt(T^\ast)$.

Recall that for each tree $T'$ of the forest in $\cE_3$ with vertex set $V'=V(T')$ there is a spanning tree of $V'$ in $H_3$. Because the cost of the minimum spanning tree on $V'$ is at most twice
the cost of $\wt(\smt(V')) \le \wt(T')$ it shows that $\wt(H_3) \le
2\wt(\mathcal{E}_3)$ and therefore $\wt(H_3) \le \wt(\mathcal{E}_3) + \eps
\wt(T^\ast)$.
This concludes the proof of Property (ii).

\subsubsection{Property (iii) of \texorpdfstring{$X$}{\textit{X}}:} Now, we prove property (iii) of $X$,
namely that $\wt(\mst(P \cup X)) \le
2^{\Oh(d^5)}/\eps^{\Oh(d)} \cdot \wt(\mst(P))$ (see Lemma~4.10 in the full version
of~\cite{survivable}). We construct a spanning graph $\mathcal{T}$ of $P \cup
X$. First we take $\mathcal{T}$ to be the minimum spanning tree of $P$. Next for
every edge $e \in G$ we find a minimum spanning tree $\mathcal{T}_e$ of $X_e$
and connect it to any of the endpoints of $e$. Such a graph is a spanning
graph of $P \cup X$. Now we focus on estimating the cost of $\mathcal{T}$.

Fix an arbitrarily edge $e$ in $G$. By \eqref{eq:edggridsize} we have $|X_e| = \frac{2^{\Oh(d^4)}}{\eps^{\Oh(d)}}$, and each edge has length at most $d\phi |e|=\eps^2/2^{\Oh(d^3)}$. Therefore the total length of the minimum spanning tree $\mathcal{T}_e$ is:

\[
    \Oh(|X_e|\cdot \mu)=\frac{2^{\Oh(d^4)}}{\eps^{\Oh(d)}} |e|.
\]

Hence, the total cost of $\mathcal{T}$ is bounded by

\[
    \wt(\mst(P)) + \sum_{e \in E[G]}  \frac{2^{\Oh(d^4)}}{\eps^{\Oh(d)}} \cdot |e| \le
    \frac{2^{\Oh(d^4)}}{\eps^{\Oh(d)}} \cdot  \wt(G) \le
    \frac{2^{\Oh(d^4)}}{\eps^{\Oh(d)}} \cdot \wt(\mst(P))
\]

This concludes the proof of Property (iii) of the set $X$ and the proof of
Lemma~\ref{ref:filtering-lemma}. Note that the constant $c^\ast$ used
in~\cref{smt-definition} can be properly defined to be global constant greater
than any constant hidden behind the $\Oh(\cdot)$ notation in this section.

\subsection{Proof of Lemma~\ref{lem:banyan}}

Now we proceed with the proof of Lemma~\ref{lem:banyan}. Recall that we are
given a point set $P$ and a random offset $\mathbf{a}$. The task is to find a set
$\tilde{S}$ of segments with the property that (i) there is a Steiner tree that uses
$\tilde{S}$ which is a $(1+\eps)$-approximation of the optimum
solution and (ii) it has at most $(1/\eps)^{\Oh(d)}$ crossings with each facet of
the quadtree.

We use the Lemma~\ref{ref:filtering-lemma} and get a set $X \subseteq
\mathbb{R}^d$, with the property that $\smt(P;X)$ is a $(1+\eps)$-approximation
of $\smt(P)$ and $\mst(P \cup X) \le \frac{2^{\Oh(d^4)}}{\eps^{\Oh(d)}} \cdot
\mst(P)$. Next, we compute a graph $G$ that is a light $(1+\eps)$-spanner of $P
\cup X$. Here, we use a spanner construction due
to~\cite{spanner-doubling-metrics} that guarantees a light spanner in doubling
metrics (to make it work both in Euclidean and Rectilinear Space) and works in
$\Oh(n \log n)$ time.

The graph $G$ has weight $\poly(1/\eps) \mst(P)$ and there is a
$(1+\Oh(\eps))$-approximate Steiner tree on $P$
that uses only the edges of $G$. Now, we need to guarantee that edges of $G$
are crossing each facet of a quadtree at most $1/\eps^{\Oh(d)}$ times. To
achieve that we use the lightening procedure of Rao and Smith~\cite{RaoS98} (see
Lemma~\ref{lem:spann}). Their lightening procedure works for any connected
graph (see also the lightening procedures
in~\cite{survivable,euclidean-mincostconnectivity} and \cite[Lemma 19.3.2]{geomspannet}).

This gives us a graph $\tilde{G}$ with $\wt(\tilde{G}) - \wt(G) \le \eps
\wt(\smt(P))$, it contains a Steiner tree on $P$ that is only
$(1+\Oh(\eps))$ times heavier than $\smt(P)$, and each shared facet of sibling cells in the quadtree is
crossed by $\tilde{G}$ at most $1/\eps^{\Oh(d)}$ times. The set $\tilde{S}$
consists of the edges of $\tilde{G}$. This concludes the proof of Lemma~\ref{lem:banyan}.

\end{appendix}

\end{document}